\theoremstyle{definition}
\newtheorem{definition}{Definition}
\theoremstyle{plain}
\newtheorem{theorem}{Theorem}
\newtheorem{proposition}{Proposition}
\newtheorem{lemma}{Lemma}
\newtheorem{remark}{Remark}
\newtheorem{corollary}{Corollary}
\title{Theory of supports for linear codes endowed with the sum-rank metric}
\author[1]{Umberto Mart{\'i}nez-Pe\~{n}as \thanks{umberto@ece.utoronto.ca}}
\affil[1]{Dept.\ of Electrical \& Computer Engineering,
University of Toronto, Canada}
\date{}
\begin{document}

\maketitle

\begin{abstract}
The sum-rank metric naturally extends both the Hamming and rank metrics in coding theory over fields. It measures the error-correcting capability of codes in multishot matrix-multiplicative channels (e.g. linear network coding or the discrete memoryless channel on fields). Although this metric has already shown to be of interest in several applications, not much is known about it. In this work, sum-rank supports for codewords and linear codes are introduced and studied, with emphasis on duality. The lattice structure of sum-rank supports is given; characterizations of the ambient spaces (\textit{support spaces}) they define are obtained; the classical operations of restriction and shortening are extended to the sum-rank metric; and estimates (bounds and equalities) on the parameters of such restricted and shortened codes are found. Three main applications are given: 1) Generalized sum-rank weights are introduced, together with their basic properties and bounds; 2) It is shown that duals, shortened and restricted codes of maximum sum-rank distance (MSRD) codes are in turn MSRD; 3) Degenerateness and effective lengths of sum-rank codes are introduced and characterized. In an appendix, skew supports are introduced, defined by skew polynomials, and their connection to sum-rank supports is given.

\textbf{Keywords:} Generalized sum-rank weights, Hamming metric, MSRD codes, multishot matrix-multiplicative channel, rank metric, sum-rank metric, sum-rank support, wire-tap channel.

\textbf{MSC:} 94A60; 94B05; 94C99.
\end{abstract}

\section{Introduction} \label{sec intro}

The Hamming metric has played a major role in the theory of error-correcting codes since its beginnings \cite{hamming}. Its $ q $-analog, the rank metric \cite{delsartebilinear, gabidulin, roth}, has brought considerable attention due to its natural connection with universal error correction in linear network coding \cite{errors-network}, since a linearly coded network is a matrix-multiplicative channel.

A common extension (over fields) of both the Hamming and rank metrics, called \textit{sum-rank metric}, was implicitly considered in the space-time coding literature \cite{algebraic-space-time, space-time-kumar}, and formally defined for multishot network coding in \cite{multishot}. Applications of codes endowed with the sum-rank metric (\textit{sum-rank codes}) include linear network coding \cite{secure-multishot, multishot}, space-time coding \cite{algebraic-space-time, space-time-kumar} and distributed storage \cite{universal-lrc}. 

Several constructions of sum-rank codes appeared in \cite{napp-sidorenko, multishot2009, multishot, wachter, wachter-convolutional}. \textit{Convolutional} codes with maximum sum-rank distance appeared in \cite{mahmood-convolutional, mrd-convolutional}. \textit{Linearized Reed-Solomon codes} \cite{linearizedRS} constitute the first and only known family of maximum sum-rank distance (MSRD) \textit{block} codes whose field sizes are not exponential in the code length. Their application in reliable and secure multishot network coding, plus a sum-rank decoding algorithm with quadratic complexity, were given in \cite{secure-multishot}. Maximum rank distance (MRD) block codes, such as Gabidulin codes \cite{gabidulin, roth}, are also MSRD codes but always require field sizes that are exponential in the code length, hence being disadvantageous (see also \cite[Subsec. 4.2]{linearizedRS} or \cite[Table I]{secure-multishot}). Finally, \textit{sum-rank alternant codes} were introduced in \cite{universal-lrc} to obtain sum-rank codes over smaller fields at the expense of not being MSRD.

As is well-known, Gabidulin codes \cite{gabidulin, roth} cover all possible parameters for which MRD codes exist. However, the situation is far from clear for MSRD codes. Indeed, knowing all parameters for which MSRD codes exist solves as a particular case the MDS conjecture, see \cite[Sec. VI]{secure-multishot}. By the connection made in \cite{universal-lrc}, it would also further the knowledge on the existence of \textit{Partial-MDS} codes \cite{blaum-RAID, gopalan-MR}. 

Therefore, sum-rank codes are of both practical and theoretical interest. However, not many constructions exist and very little is known about their structure. In this work, we introduce and study \textit{sum-rank supports} for codewords and linear codes, with an emphasis on duality. We introduce and characterize \textit{sum-rank support spaces}, which give the ambient spaces for linear sum-rank codes. We then extend the classical operations of \textit{restriction} and \textit{shortening} to linear sum-rank codes, establish their duality and give estimates on the sum-rank parameters of the codes that they provide. We give three applications: 1) We introduce generalized sum-rank weights, give their basic properties and applications, and establish a hierarchy of bounds on them, extending a previous connection between bounds on generalized Hamming and rank weights \cite{similarities}; 2) We connect the MSRD property with sum-rank supports, and prove that duals, shortenings and restrictions of MSRD codes give again MSRD codes; 3) We introduce and characterize degenerateness and effective length of linear sum-rank codes, showing in particular that MSRD codes are never sum-rank degenerate.

All these sum-rank coding-theoretic results particularize to well-known results for Hamming-metric and rank-metric codes. We will show how throughout the paper. As a final remark, it is worth noting that sum-rank codes fall outside the framework of $ q $-analogues in coding theory. Indeed, the lattice of sum-rank supports extends simultaneously the lattice of subsets of a finite set, and that of subspaces of a finite-dimensional vector space (see Subsection \ref{subsec lattice unions spaces}). Interestingly, the concept of sets of (arithmetic) roots of skew polynomials, as introduced in \cite{lam, lam-leroy}, gives a uniformizing framework for all these lattices. This was used in \cite{linearizedRS} to prove that linearized Reed-Solomon codes are MSRD by relating them with skew Reed-Solomon codes \cite{skew-evaluation1}. It was also used in \cite{secure-multishot} to give a uniform arithmetic description of the sum-rank version of the Welch-Berlekamp decoding algorithm. All the concepts and results in this paper could be stated in terms of roots of skew polynomials (e.g. \textit{skew supports} and \textit{generalized skew weights}) in the particular case where the base subfields are centralizers of an endomorphism and derivation of the extension field (more general cases would require the use of multivariate skew polynomials \cite{lin-multivariateskew, multivariateskew}). See Appendix \ref{app} for more details.

The organization of the remainder of the paper is as follows. In Section \ref{sec the sum-rank metric and support}, we recall the definition of sum-rank metric from \cite{multishot} (Subsection \ref{subsec sum-rank supports and weights}), and we introduce its support lattice (Subsection \ref{subsec lattice unions spaces}). In Section \ref{sec sum-rank support spaces}, we introduce and characterize sum-rank support spaces (Subsection \ref{subsec equivalent defs sum-rank support spaces}), we study their lattice structure (Subsection \ref{subsec properties sum-rank support spaces}), and we define and study restricted and shortened sum-rank codes (Subsection \ref{subsec restricted and shortened}). In Section \ref{sec some applications}, we introduce and study generalized sum-rank weights (Subsection \ref{subsec gen sum-rank weights}), MSRD codes (Subsection \ref{subsec MSRD codes}), and sum-rank degenerateness and effective length (Subsection \ref{subsec sum-rank degenerate}). In Appendix \ref{app}, we introduce skew supports and support spaces, and provide their connection to sum-rank supports and support spaces.

\section{The sum-rank metric and its support lattice} \label{sec the sum-rank metric and support}

In this section, we recall the concept of sum-rank metric from \cite{multishot} (Subsection \ref{subsec sum-rank supports and weights}) and introduce its support lattice (Subsection \ref{subsec lattice unions spaces}).

Fix positive integers $ \ell $, $ m_1 $, $ m_2 $, $ \ldots $, $ m_\ell $ and $ n_1 $, $ n_2 $, $ \ldots $, $ n_\ell $, and fix fields $ K_1 $, $ K_2 $, $ \ldots $, $ K_\ell $. In principle, codes endowed with the sum-rank metric are simply subsets of 
$$ K_1^{m_1 \times n_1} \times K_2^{m_2 \times n_2} \times \cdots \times K_\ell^{m_\ell \times n_\ell}. $$
In this way, a word is just a list of matrices. The \textit{sum-rank distance} between two such lists, $ (C_1, C_2, \ldots, C_\ell) $ and $ (D_1, D_2, \ldots, D_\ell) $, may be defined as $ \sum_{i=1}^\ell {\rm Rk}_{K_i}(C_i - D_i) $. 

In this work, we will assume that all fields $ K_i $ are subfields of a larger field $ \mathbb{F} $ with $ m_i = \dim_{K_i}(\mathbb{F}) $, for $ i = 1,2, \ldots, \ell $, and we will represent each word as a vector in $ \mathbb{F}^n $, where $ n = n_1 + n_2 + \cdots + n_\ell $. However, the definitions of supports and support lattice in this section remain valid in the general case, even if the fields have different characteristic.

For a field $ K $, we will denote by $ K^{m \times n} $ the set of $ m \times n $ matrices with coefficients in $ K $, and we denote $ K^n = K^{1 \times n} $. The field over which we consider linearity, ranks and dimensions will be assumed from the context. We will also denote by $ {\rm Row}(M) \subseteq K^n $ and $ {\rm Col}(M) \subseteq K^m $ the row and column spaces of a matrix $ M \in K^{m \times n} $.

We will implicitly consider ``erased'' matrices, which may be denoted by $ * $. We will define $ K^{m \times 0} = \{ * \} $ and $ K^{0 \times n} = \{ * \} $. Operations with matrices are assumed to be trivially extended to $ * $. For instance, $ {\rm Rk}(*) = 0 $ or $ A* = * $ for $ A \in K^{n \times m} $. However, if $ A \in K^{m_1 \times n_1} $, $ * \in K^{0 \times n_2} $, $ B \in K^{m_3 \times n_3} $, we define $ {\rm diag}(A,*,B) \in K^{(m_1 + m_3) \times (n_1 + n_2 + n_3)} $ as putting $ n_2 $ zero columns between the first $ n_1 $ and the last $ n_3 $ columns in $ {\rm diag}(A,B) $.

\subsection{The sum-rank metric} \label{subsec sum-rank supports and weights}

Fix an ordered basis $ \mathcal{A}_i = \{ \alpha_1^{(i)}, \alpha_2^{(i)}, \ldots, \alpha_{m_i}^{(i)} \} $ of $ \mathbb{F} $ over $ K_i $, where $ m_i = \dim_{K_i}(\mathbb{F}) $, for $ i = 1,2, \ldots, \ell $. For any non-negative integer $ r $, we denote by $ M_{\mathcal{A}_i} : \mathbb{F}^r \longrightarrow K_i^{m_i \times r} $ the corresponding matrix representation map, given by 
\begin{equation}
M_{\mathcal{A}_i} \left( \sum_{j=1}^{m_i} \alpha_j^{(i)} \mathbf{c}_j \right) = \left( \begin{array}{c}
\mathbf{c}_1 \\
\mathbf{c}_2 \\
\vdots \\
\mathbf{c}_{m_i}
\end{array} \right) = \left( \begin{array}{cccc}
c_{11} & c_{12} & \ldots & c_{1 r} \\
c_{21} & c_{22} & \ldots & c_{2 r} \\
\vdots & \vdots & \ddots & \vdots \\
c_{m_i 1} & c_{m_i 2} & \ldots & c_{m_i r} \\
\end{array} \right) \in K_i^{m_i \times r},
\label{eq def matrix representation map}
\end{equation}
where $ \mathbf{c}_j = (c_{j1}, c_{j2}, \ldots, c_{jr}) \in K_i^r $, for $ j = 1,2, \ldots, m_i $, and for $ i = 1,2, \ldots, \ell $. We may now define the sum-rank metric in $ \mathbb{F}^n $, which was introduced in \cite[Subsec. III-D]{multishot} under the name \textit{extended distance} or \textit{multishot rank metric}.

\begin{definition} [\textbf{Sum-rank metric \cite{multishot}}] \label{def sum-rank metric}
Let $ \mathbf{c} = (\mathbf{c}^{(1)}, \mathbf{c}^{(2)}, \ldots, $ $ \mathbf{c}^{(\ell)}) \in \mathbb{F}^n $, where $ \mathbf{c}^{(i)} \in \mathbb{F}^{n_i} $, for $ i = 1,2, \ldots, \ell $. We define the sum-rank weight of $ \mathbf{c} \in \mathbb{F}^n $ as
$$ {\rm wt}_{SR}(\mathbf{c}) = \sum_{i=1}^\ell {\rm Rk}_{K_i}(M_{\mathcal{A}_i}(\mathbf{c}^{(i)})). $$ 
We then define the sum-rank metric $ {\rm d}_{SR} : (\mathbb{F}^n)^2 \longrightarrow \mathbb{N} $ as $ {\rm d}_{SR}(\mathbf{c}, \mathbf{d}) = {\rm wt}_{SR}(\mathbf{c} - \mathbf{d}) $, for all $ \mathbf{c}, \mathbf{d} \in \mathbb{F}^n $. For a linear code $ \mathcal{C} \subseteq \mathbb{F}^n $, we define its minimum sum-rank distance as $ {\rm d}_{SR} (\mathcal{C}) = \min \{ {\rm wt}_{SR}(\mathbf{c}) \mid \mathbf{c} \in \mathcal{C} \setminus \{ \mathbf{0} \} \} $.
\end{definition}

Observe that $ {\rm Rk}_{K_i}(M_{\mathcal{A}_i}(\mathbf{c}^{(i)})) $ does not depend on $ \mathcal{A}_i $, for $ i = 1,2, \ldots, \ell $. The parameter $ \ell $ represents the number of shots in a matrix-multiplicative channel \cite{multishot}. Observe that the rank metric \cite{delsartebilinear, gabidulin, roth} is recovered when $ \ell = 1 $, that is, the rank metric is a singleshot sum-rank metric. The Hamming metric \cite{hamming} is recovered when $ \mathbf{n} = (n_1, n_2, \ldots, n_\ell) = (1,1, \ldots, 1) = \mathbf{1} $, the vector whose entries are all $ 1 $, since the alphabet is a field. That is, the Hamming metric over fields is a multishot ``$ (1 \times 1) $-rank'' metric. Therefore, the sum-rank metric is a natural extension of both the rank and Hamming metrics. Throughout the paper, we will show how our results extend known results for both metrics at the same time. We will also use the notation $ {\rm wt}_R = {\rm wt}_{SR} $ and $ {\rm d}_R = {\rm d}_{SR} $ if $ \ell = 1 $, and $ {\rm wt}_H = {\rm wt}_{SR} $ and $ {\rm d}_H = {\rm d}_{SR} $ if $ \mathbf{n} = \mathbf{1} $.

\subsection{The lattice of lists of vector spaces} \label{subsec lattice unions spaces}

In this subsection, we introduce a lattice that is a hybrid between the lattice of subsets of a finite set, and that of subspaces of a finite-dimensional vector space. We will conclude by showing that this lattice gives the natural supports for defining sum-rank weights of vectors. From now on, we will denote $ \mathbf{K} = (K_1, K_2, \ldots, K_\ell) $ and $ \mathbf{n} = (n_1, n_2, \ldots, n_\ell ) $.

\begin{definition} \label{def cartesian product lattice}
We define the cartesian product lattice
$$ \mathcal{P}(\mathbf{K}^\mathbf{n}) = \mathcal{P}(K_1^{n_1}) \times \mathcal{P}(K_2^{n_2}) \times \cdots \times \mathcal{P}(K_\ell^{n_\ell}), $$
where $ \mathcal{P}(K_i^{n_i}) $ is the lattice of $ K_i $-linear vector subspaces of $ K_i^{n_i} $. 
\end{definition}

This set forms a lattice as partially ordered set with inclusions as in the following definition. We also include other basic operations that will be useful later.

\begin{definition} \label{def basic operations lattice}
Fix $ \boldsymbol{\mathcal{L}} = (\mathcal{L}_1, \mathcal{L}_2, \ldots, \mathcal{L}_\ell) $ and $ \boldsymbol{\mathcal{L}}^\prime = (\mathcal{L}_1^\prime, \mathcal{L}_2^\prime, \ldots, \mathcal{L}_\ell^\prime) $ in $ \mathcal{P}(\mathbf{K}^\mathbf{n}) $. We define:
\begin{enumerate}
\item
Ranks: $ {\rm Rk}(\boldsymbol{\mathcal{L}}) = \sum_{i=1}^\ell \dim_{K_i}(\mathcal{L}_i) $.
\item
Inclusions: $ \boldsymbol{\mathcal{L}} \subseteq \boldsymbol{\mathcal{L}}^\prime $ if, and only if, $ \mathcal{L}_i \subseteq \mathcal{L}_i^\prime $, for all $ i = 1,2, \ldots, \ell $.
\item
Sums: $ \boldsymbol{\mathcal{L}} + \boldsymbol{\mathcal{L}}^\prime = (\mathcal{L}_1 + \mathcal{L}_1^\prime, \mathcal{L}_2 + \mathcal{L}_2^\prime, \ldots, \mathcal{L}_\ell + \mathcal{L}_\ell^\prime) $.
\item
Intersections: $ \boldsymbol{\mathcal{L}} \cap \boldsymbol{\mathcal{L}}^\prime = (\mathcal{L}_1 \cap \mathcal{L}_1^\prime, \mathcal{L}_2 \cap \mathcal{L}_2^\prime, \ldots, \mathcal{L}_\ell \cap \mathcal{L}_\ell^\prime) $.
\item
Duals: $ \boldsymbol{\mathcal{L}}^\perp = (\mathcal{L}_1^\perp, \mathcal{L}_2^\perp, \ldots, \mathcal{L}_\ell^\perp) $.
\item
Total and zero spaces: $ \boldsymbol{\mathcal{T}} = (K_1^{n_1}, K_2^{n_2}, \ldots, K_\ell^{n_\ell}) $ and $ \mathbf{0} = (\{ \mathbf{0} \}, \{ \mathbf{0} \}, \ldots, \{ \mathbf{0} \}) $, respectively.
\item
Direct sums: $ \boldsymbol{\mathcal{L}}^{\prime \prime} = \boldsymbol{\mathcal{L}} \oplus \boldsymbol{\mathcal{L}}^\prime $ if $ \boldsymbol{\mathcal{L}}^{\prime \prime} = \boldsymbol{\mathcal{L}} + \boldsymbol{\mathcal{L}}^\prime $ and $ \mathbf{0} = \boldsymbol{\mathcal{L}} \cap \boldsymbol{\mathcal{L}}^\prime $.
\item
Complementaries: $ \boldsymbol{\mathcal{L}}^\prime $ is a complementary of $ \boldsymbol{\mathcal{L}}^{\prime \prime} \in \mathcal{P}(\mathbf{K}^\mathbf{n}) $ in $ \boldsymbol{\mathcal{L}} $ if $ \boldsymbol{\mathcal{L}} = \boldsymbol{\mathcal{L}}^\prime \oplus \boldsymbol{\mathcal{L}}^{\prime \prime} $.
\end{enumerate}
\end{definition}

Observe that the sums and intersections defined above are indeed the smallest and largest elements of the lattice containing or contained in $ \boldsymbol{\mathcal{L}} $ and $ \boldsymbol{\mathcal{L}}^\prime $, respectively. Thus such sums and intersections are the natural lattice operations corresponding to the given partial order.

The reader will notice that, if $ K = K_1 = K_2 = \ldots = K_\ell $, then the previous lattice is essentially that of subspaces $ \mathcal{V} $ of $ K^n $ that can be decomposed as $ \mathcal{V} = \mathcal{V}_1 \times \mathcal{V}_2 \times \cdots \times \mathcal{V}_\ell $, where $ \mathcal{V}_i \subseteq K^{n_i} $, for $ i = 1,2, \ldots, \ell $. For different fields (even with different characteristic), we could represent such lists as cartesian products of certain $ \mathbb{Z} $-linear modules, but we would still need to keep track of dimensions, linearity and duality over the field $ K_i $ in the $ i $th position. Thus we prefer to keep the list-type representation of this lattice. 

The lattice $ \mathcal{P}(\mathbf{K}^\mathbf{n}) $ naturally recovers both the lattice $ \mathcal{P}([\ell]) $ of subsets of $ [\ell] = \{ 1,2, \ldots, \ell \} $ (setting $ \mathbf{n} = \mathbf{1} $) and the lattice $ \mathcal{P}(\mathcal{V}) $ of $ K $-linear subspaces of a vector space $ \mathcal{V} $ over a field $ K $ (setting $ \ell = 1 $), both lattices considered with conventional inclusions. The previous list of definitions then specializes to the standard definitions in these lattices, being ranks called \textit{sizes} when $ \mathbf{n} = \mathbf{1} $ and \textit{dimensions} when $ \ell = 1 $. To see this in the case $ \mathbf{n} = \mathbf{1} $, note that elements in $ \mathcal{P}(\mathbf{K}^\mathbf{1}) $ are lists in $ \prod_{i=1}^\ell \{ K_i, \{ 0 \} \} $. It is straightforward to translate such lists to binary strings, with elements in $ \{ 1,0 \} $, by taking dimensions over each of the fields $ K_i $. We may then translate these to subsets of $ I \subset [\ell] $ by means of the indicator function. Observe that duals and complementaries coincide and are uniquely defined in $ \mathcal{P}([\ell]) \cong \mathcal{P}(\mathbf{K}^\mathbf{1}) $ due to the fact that $ K^\perp = \{ 0 \} $ and $ \{ 0 \}^\perp = K $, and $ K $ and $ \{ 0 \} $ are the only two $ K $-linear subspaces of $ K $, when $ K $ is a field.

We now introduce the concept of sum-rank supports of vectors in $ \mathbb{F}^n $.

\begin{definition} [\textbf{Sum-rank supports}] \label{def sum-rank supports}
Let $ \mathbf{c} = (\mathbf{c}^{(1)}, \mathbf{c}^{(2)}, \ldots, $ $ \mathbf{c}^{(\ell)}) \in \mathbb{F}^n $, where $ \mathbf{c}^{(i)} \in \mathbb{F}^{n_i} $, for $ i = 1,2, \ldots, \ell $. We define its sum-rank support as the list 
$$ {\rm Supp}(\mathbf{c}) = (\mathcal{L}_1, \mathcal{L}_2, \ldots, \mathcal{L}_\ell) \in \mathcal{P}(\mathbf{K}^\mathbf{n}) , $$
where $ \mathcal{L}_i \subseteq K_i^{n_i} $ is the ($ K_i $-linear) row space of $ M_{\mathcal{A}_i}(\mathbf{c}^{(i)}) \in K_i^{m_i \times n_i} $, that is $ \mathcal{L}_i = {\rm Row}(M_{\mathcal{A}_i}(\mathbf{c}^{(i)})) $, for $ i = 1,2, \ldots, \ell $. 
\end{definition}

Observe that $ {\rm Row}(M_{\mathcal{A}_i}(\mathbf{c}^{(i)})) $ does not depend on $ \mathcal{A}_i $, for $ i = 1,2, \ldots, \ell $. Moreover, it follows from the definitions that
\begin{equation}
{\rm wt}_{SR}(\mathbf{c}) = \sum_{i=1}^\ell {\rm Rk}_{K_i}(M_{\mathcal{A}_i}(\mathbf{c}^{(i)})) = \sum_{i=1}^\ell \dim_{K_i}(\mathcal{L}_i) = {\rm Rk}({\rm Supp}(\mathbf{c})).
\end{equation}

By the discussion prior to Definition \ref{def sum-rank supports}, our definition of support extends that of Hamming supports as considered usually in the literature, by setting $ \mathbf{n} = \mathbf{1} $. Furthermore, describing supports as lists emphasizes the fact that different positions represent different slots in a time axis, as explained in \cite[Sec. IV]{forney}. 

In the case $ \ell = 1 $, our definition of support recovers that of rank supports as defined in \cite[Def. 2.1]{slides}, which has proven to be the right notion of rank support.

\section{Sum-rank support spaces} \label{sec sum-rank support spaces}

In this section, we introduce the concept of sum-rank support spaces (Subsection \ref{subsec equivalent defs sum-rank support spaces}) and give several characterizations of them. We will then show that the lattice of sum-rank support spaces is isomorphic to that of lists of vector spaces (Subsection \ref{subsec properties sum-rank support spaces}). We conclude by defining and studying restriction and shortening of linear codes (Subsection \ref{subsec restricted and shortened}), which naturally extend the corresponding operations of restriction and shortening for the Hamming and rank metrics.

\subsection{Definition and characterizations} \label{subsec equivalent defs sum-rank support spaces}

In this subsection, we give several equivalent definitions of sum-rank support spaces. As we will see (for instance, in Item 5 in Theorem \ref{th charact sum-rank support spaces}), these vector spaces behave as ambient spaces for the sum-rank metric. 

\begin{definition} [\textbf{Sum-rank support spaces}] \label{def sum-rank supp spaces}
Let $ \boldsymbol{\mathcal{L}} \in \mathcal{P}(\mathbf{K}^\mathbf{n}) $. We define the sum-rank support space in $ \mathbb{F}^n $ associated to $ \boldsymbol{\mathcal{L}} $ as the vector space
$$ \mathcal{V}_{\boldsymbol{\mathcal{L}}} = \{ \mathbf{c} \in \mathbb{F}^n \mid {\rm Supp}(\mathbf{c}) \subseteq \boldsymbol{\mathcal{L}} \}. $$
We denote by $ \mathcal{P}_{SR}( \mathbb{F}^n ) $ the set of sum-rank support spaces in $ \mathbb{F}^n $. 
\end{definition}

Observe that, in the case $ \ell = 1 $ and denoting $ K = K_1 $ and $ \mathcal{A} = \mathcal{A}_1 $, we obtain
\begin{equation}
\mathcal{V}_\mathcal{L} = \{ \mathbf{c} \in \mathbb{F}^n \mid {\rm Row}(M_{\mathcal{A}}(\mathbf{c})) \subseteq \mathcal{L} \},
\label{eq def rank support space}
\end{equation}
for a vector space $ \mathcal{L} \subseteq K^n $. Thus we recover rank support spaces as considered in \cite[Def. 3.2]{slides}, \cite[Def. 7]{rgmw} and \cite[Not. 25]{ravagnaniweights}. We recover Hamming support spaces (see \cite[Sec. II]{forney} for instance) when $ \mathbf{n} = \mathbf{1} $, by the discussions in Subsection \ref{subsec lattice unions spaces}.

We now give the following characterizations of sum-rank support spaces. In particular, by Item 3, sum-rank support spaces are indeed vector spaces (over $ \mathbb{F} $). 

\begin{theorem}\label{th charact sum-rank support spaces}
Let $ \mathcal{V} \subseteq \mathbb{F}^n $ be an arbitrary set. The following are equivalent:
\begin{enumerate}
\item
$ \mathcal{V} \in \mathcal{P}_{SR}(\mathbb{F}^n) $, that is, $ \mathcal{V} $ is a sum-rank support space.
\item
There exists $ \boldsymbol{\mathcal{L}} = (\mathcal{L}_1, \mathcal{L}_2, \ldots, \mathcal{L}_\ell) \in \mathcal{P}(\mathbf{K}^\mathbf{n}) $ such that 
$$ \mathcal{V} = \mathcal{V}_{\mathcal{L}_1} \times \mathcal{V}_{\mathcal{L}_2} \times \cdots \times \mathcal{V}_{\mathcal{L}_\ell}. $$
\item
There exist matrices $ A_i \in K_i^{N_i \times n_i} $ (possibly $ A_i = * \in K_i^{0 \times n_i} $), for $ i = 1,2, \ldots, \ell $, such that $ \mathcal{V} $ is the row vector space in $ \mathbb{F}^n $ of the block-diagonal matrix
$$ A = {\rm diag}(A_1, A_2, \ldots, A_\ell) \in \mathbb{F}^{N \times n}, $$
where $ N = N_1 + N_2 + \cdots + N_\ell $ and $ 0 \leq N_i \leq n_i $, for $ i = 1,2, \ldots, \ell $.
\item
$ \mathcal{V} $ is a vector space with a basis of vectors of sum-rank weight $ 1 $.
\item
$ \mathcal{V} $ is a vector space and there exists $ N = N_1 + N_2 + \cdots + N_\ell $, with $ 0 \leq N_i \leq n_i $, for $ i = 1,2,\ldots, \ell $, and a bijective linear sum-rank isometry $ \phi : \mathbb{F}^N \longrightarrow \mathcal{V} $, where the sum-rank metric in $ \mathbb{F}^N $ corresponds to the previous partition of $ N $.
\item
Define $ M(\mathcal{V}) = \{ (M_{\mathcal{A}_1} (\mathbf{c}^{(1)}), M_{\mathcal{A}_2} (\mathbf{c}^{(2)}), \ldots, M_{\mathcal{A}_\ell} (\mathbf{c}^{(\ell)}) ) \mid (\mathbf{c}^{(1)}, \mathbf{c}^{(2)}, \ldots, \mathbf{c}^{(\ell)}) \in \mathcal{V} \} $ and define the cartesian-product ring $ \mathcal{R} = K_1^{m_1 \times m_1} \times K_2^{m_2 \times m_2} \times \cdots \times K_\ell^{m_\ell \times m_\ell} $. Then $ M(\mathcal{V}) $ is a left $ \mathcal{R} $-module with component-wise matrix multiplication.
\end{enumerate}
\end{theorem}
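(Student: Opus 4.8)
The plan is to prove the equivalences by establishing a cycle of implications, with Item 2 serving as the hub since it most directly unpacks Definition \ref{def sum-rank supp spaces}. First I would show $1 \Leftrightarrow 2$: starting from $\mathcal{V} = \mathcal{V}_{\boldsymbol{\mathcal{L}}}$, unwind the condition $ {\rm Supp}(\mathbf{c}) \subseteq \boldsymbol{\mathcal{L}} $ coordinate-block by coordinate-block, observing that $ {\rm Supp}(\mathbf{c})_i = {\rm Row}(M_{\mathcal{A}_i}(\mathbf{c}^{(i)})) $ depends only on $ \mathbf{c}^{(i)} $, so membership in $ \mathcal{V}_{\boldsymbol{\mathcal{L}}} $ decouples across blocks and yields the product $ \mathcal{V}_{\mathcal{L}_1} \times \cdots \times \mathcal{V}_{\mathcal{L}_\ell} $ exactly as in \eqref{eq def rank support space}; the converse is immediate. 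Next $ 2 \Rightarrow 3 $: it suffices to treat each factor $ \mathcal{V}_{\mathcal{L}_i} $ as a rank support space in $ \mathbb{F}^{n_i} $ and find a matrix $ A_i \in K_i^{N_i \times n_i} $ with $ {\rm Row}(A_i) = \mathcal{L}_i $ over $ K_i $ (take $ N_i = \dim_{K_i}(\mathcal{L}_i) $, a basis of $ \mathcal{L}_i $ as rows); one then checks that $ {\rm Row}_{\mathbb{F}}(A_i) = \mathcal{V}_{\mathcal{L}_i} $, i.e. a vector $ \mathbf{c}^{(i)} \in \mathbb{F}^{n_i} $ has $ {\rm Row}(M_{\mathcal{A}_i}(\mathbf{c}^{(i)})) \subseteq \mathcal{L}_i $ iff $ \mathbf{c}^{(i)} $ is an $ \mathbb{F} $-combination of the rows of $ A_i $ — this is the standard fact that the $ \mathbb{F} $-row space of a $ K_i $-matrix consists precisely of the vectors whose matrix representation has rows in the $ K_i $-row space of that matrix. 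Block-diagonally assembling the $ A_i $ (with the erasure/zero-column conventions from Section \ref{sec the sum-rank metric and support} handling the $ N_i = 0 $ cases) gives $ A = {\rm diag}(A_1, \ldots, A_\ell) $ with $ {\rm Row}_{\mathbb{F}}(A) = \mathcal{V} $.

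Then I would close the loop through Items 4 and 5 and back to 1. For $ 3 \Rightarrow 4 $: the rows of $ A = {\rm diag}(A_1,\ldots,A_\ell) $ each lie in a single block $ i $ and have entries in $ K_i \subseteq \mathbb{F} $, hence matrix representation $ M_{\mathcal{A}_j} $ equal to a single nonzero $ K_i $-row (in block $ i $) or zero (elsewhere), so each row has sum-rank weight $ 1 $; after discarding zero rows and passing to a maximal $ \mathbb{F} $-linearly independent subset one obtains a basis of $ \mathcal{V} $ consisting of weight-$ 1 $ vectors (one should check that row-reduction within a block preserves the weight-$ 1 $ property — it does, since $ K_i $-row operations keep rows inside a single block with $ K_i $-entries). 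For $ 4 \Rightarrow 5 $: given a basis $ \mathbf{b}_1, \ldots, \mathbf{b}_N $ of $ \mathcal{V} $ with each $ {\rm wt}_{SR}(\mathbf{b}_k) = 1 $, record which block $ i(k) $ each $ \mathbf{b}_k $ lives in, set $ N_i = \#\{k : i(k) = i\} $, note $ N_i \le n_i $ since the weight-$1$ vectors in block $i$ are $K_i$-linearly independent inside $K_i^{m_i\times n_i}$ (their rows span a subspace of $K_i^{n_i}$, forcing $N_i\le\dim_{K_i}\mathcal{L}_i\le n_i$), and define $ \phi: \mathbb{F}^N \to \mathcal{V} $ by $ \phi(\mathbf{x}) = \sum_k x_k \mathbf{b}_k $; the content is that $ \phi $ is a sum-rank isometry, which reduces to the block-wise claim that for $ \mathbf{x}^{(i)} \in \mathbb{F}^{N_i} $ the rank of $ M_{\mathcal{A}_i}\big(\sum_{i(k)=i} x_k \mathbf{b}_k^{(i)}\big) $ equals the rank of $ M_{\mathcal{A}_i}(\mathbf{x}^{(i)}) $ in the standard-basis representation — this holds because the $ \mathbf{b}_k^{(i)} $ form, blockwise, a ``generator-matrix-like'' configuration whose representation is obtained by a fixed $ K_i $-change of basis on the column side, which does not alter rank. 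For $ 5 \Rightarrow 1 $: a bijective sum-rank isometry sends weight-$ 1 $ vectors to weight-$ 1 $ vectors; the image of the standard basis of $ \mathbb{F}^N $ (weight $ 1 $ by the prescribed partition) is then a weight-$ 1 $ basis of $ \mathcal{V} $, and taking $ \mathcal{L}_i = \sum {\rm Row}(M_{\mathcal{A}_i}(\mathbf{b}_k^{(i)})) $ over basis vectors $ \mathbf{b}_k $ supported in block $ i $ produces $ \boldsymbol{\mathcal{L}} $ with $ \mathcal{V} = \mathcal{V}_{\boldsymbol{\mathcal{L}}} $ (inclusion ``$\subseteq$'' because every element is an $\mathbb{F}$-span of the $\mathbf{b}_k$ and supports behave subadditively; ``$\supseteq$'' because each generator of $\mathcal{L}_i$ is already realized).

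Finally, Item 6 I would attach to the cycle via $ 2 \Rightarrow 6 \Rightarrow 2 $ (or $ 3 \Rightarrow 6 $). Using Item 2, $ M(\mathcal{V}) = M(\mathcal{V}_{\mathcal{L}_1}) \times \cdots \times M(\mathcal{V}_{\mathcal{L}_\ell}) $, and $ M(\mathcal{V}_{\mathcal{L}_i}) = \{ X \in K_i^{m_i \times n_i} : {\rm Row}(X) \subseteq \mathcal{L}_i \} $; this set is visibly closed under left multiplication by $ K_i^{m_i \times m_i} $ (row operations cannot enlarge the row space) and is a $ K_i $-subspace, hence a left $ K_i^{m_i\times m_i} $-module, and the product is a left $ \mathcal{R} $-module under component-wise multiplication. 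Conversely, if $ M(\mathcal{V}) $ is a left $ \mathcal{R} $-module I would argue that $ \mathcal{V} $ is an $ \mathbb{F} $-subspace closed under the block-wise ``erase all rows but one'' projections — multiplying in block $ i $ by the idempotent $ E_{jj} \in K_i^{m_i\times m_i} $ extracts the $ j $th row — so $ \mathcal{V} $ contains, for each $ \mathbf{c}\in\mathcal{V} $ and each block, the vectors realizing each individual row of $ M_{\mathcal{A}_i}(\mathbf{c}^{(i)}) $; these have sum-rank weight $ \le 1 $, and a spanning set of such vectors gives a weight-$ 1 $ basis after trimming, landing us in Item 4. I expect the main obstacle to be the isometry verification in $ 4 \Rightarrow 5 $: one must be careful that a weight-$1$ basis, while forcing each $\mathbf{b}_k$ into a single block, does \emph{not} automatically make the block-wise families ``standard'' — the rank-preservation must be extracted from the fact that $M_{\mathcal{A}_i}$ is $K_i$-linear and injective and that the relevant configuration of rows transforms under a single invertible $K_i$-matrix acting on the representation, and handling the erased/empty blocks ($N_i = 0$ or $n_i$ boundary cases) with the conventions fixed in Section \ref{sec the sum-rank metric and support} requires attention.
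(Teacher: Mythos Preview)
Your overall architecture matches the paper's: $1\Leftrightarrow 2$ by unwinding the definition block-wise, $2\Leftrightarrow 3$ via the rank-support-space description, $3\Leftrightarrow 4$ via the characterization of weight-$1$ vectors, and Item~6 attached through the left-module description of $\{X:\operatorname{Row}(X)\subseteq\mathcal L_i\}$. Your treatment of $6\Rightarrow 4$ using the idempotents $E_{jj}$ is more explicit than the paper's, which simply cites \cite[Appendix~D]{rgmw}; similarly your $2\Rightarrow 3$ is self-contained where the paper invokes \cite[Lemma~52]{rgmw}.

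There is, however, a genuine gap in your $4\Rightarrow 5$. A weight-$1$ vector supported in block $i$ has the form $\mathbf b_k^{(i)}=\beta_k\mathbf d_k$ with $\beta_k\in\mathbb F^*$ and $\mathbf d_k\in K_i^{n_i}$, and you do \emph{not} get to assume $\beta_k\in K_i$. Your claim that ``the relevant configuration of rows transforms under a single invertible $K_i$-matrix acting on the representation'' fails precisely because different $\beta_k$ induce different $K_i$-linear maps on the coordinate representation. Concretely: take $\ell=1$, $K=\mathbb F_2$, $\mathbb F=\mathbb F_4$ with basis $\{1,\omega\}$, and the weight-$1$ basis $\mathbf b_1=(1,0)$, $\mathbf b_2=(0,\omega)$ of $\mathcal V=\mathbb F^2$. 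Then $\phi(1,1)=(1,\omega)$ has rank~$2$, while $(1,1)$ has rank~$1$, so $\phi$ is not a sum-rank isometry.

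The fix is immediate: rescale each basis vector to $\mathbf b_k/\beta_k$, which is still a weight-$1$ basis but now lies in $K_{i(k)}^{n_{i(k)}}$ in its block. With $\beta_k=1$ throughout, $\sum_k x_k\mathbf b_k^{(i)}=\mathbf x^{(i)}D$ with $D\in K_i^{N_i\times n_i}$ of full row rank, whence $M_{\mathcal A_i}(\mathbf x^{(i)}D)=M_{\mathcal A_i}(\mathbf x^{(i)})\,D$ and the rank is preserved --- exactly your intended ``column-side $K_i$-change of basis''. The paper sidesteps this entirely by proving $2\Rightarrow 5$ rather than $4\Rightarrow 5$: starting from a $K_i$-basis of each $\mathcal L_i$ (rather than an arbitrary weight-$1$ basis of $\mathcal V$) builds in $\beta_k=1$ from the outset, and then $5\Rightarrow 4$ is the trivial direction. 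Either route works once the rescaling is made explicit.
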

\begin{proof}
The equivalence between Item 1 and Item 2 follows directly from the component-wise definition of inclusions in $ \mathcal{P}(\mathbf{K}^\mathbf{n}) $ (see Definition \ref{def basic operations lattice}).

Fix now $ i = 1,2, \ldots, \ell $. By \cite[Lemma 52]{rgmw}, a set $ \mathcal{W} \subseteq \mathbb{F}^{n_i} $ is a rank support space $ \mathcal{W} = \mathcal{V}_{\mathcal{L}_i} $ as in (\ref{eq def rank support space}), for some vector space $ \mathcal{L}_i \subseteq K_i^{n_i} $, if and only if, $ \mathcal{W} $ is Galois closed over $ K_i $. In other words, if it admits a basis of vectors in $ \mathbb{F}^{n_i} $ with components in $ K_i $. With this in mind, it is trivial to see that Item 2 and Item 3 are equivalent. 

Now, a vector $ \mathbf{c} = (\mathbf{c}^{(1)}, \mathbf{c}^{(2)}, \ldots, \mathbf{c}^{(\ell)}) \in \mathbb{F}^n $, with $ \mathbf{c}^{(i)} \in \mathbb{F}^{n_i} $, for $ i = 1,2, \ldots, \ell $, has sum-rank weight $ 1 $ if, and only if, there exists an index $ i $, $ \mathbf{d} \in K_i^{n_i} $ and $ \beta \in \mathbb{F}^* $ such that $ \mathbf{c}^{(i)} = \beta \mathbf{d} $, and $ \mathbf{c}^{(j)} = \mathbf{0} $ if $ j \neq i $. With this characterization of vectors of sum-rank weight $ 1 $, the equivalence between Item 3 and Item 4 is trivial.

Now we show that Item 2 implies Item 5. Define $ N_i = \dim_{K_i}(\mathcal{L}_i) $ and fix a basis $ \mathcal{B}_i $ of $ \mathcal{L}_i $, for $ i = 1,2, \ldots, \ell $. Define also $ N = N_1 + N_2 + \cdots + N_\ell $. We may now construct a linear sum-rank isometry $ \phi : \mathbb{F}^N \longrightarrow \mathcal{V} $ by sending the vectors of the canonical basis of $ \mathbb{F}^N $ corresponding to the $ i $th block of $ N_i $ coordinates to the vectors in $ \mathcal{B}_i $ positioned in the $ i $th block of $ n_i $ coordinates in $ \mathbb{F}^n $ (and zero elsewhere). The map $ \phi $ is a sum-rank isometry since it consists on multiplying on the right by the matrix
$$ A = {\rm diag}(A_1, A_2, \ldots, A_\ell) \in \mathbb{F}^{N \times n}, $$
where $ A_i \in K_i^{N_i \times n_i} $ is the full-rank matrix whose $ j $th row is the $ j $th vector in $ \mathcal{B}_i $, for $ j = 1,2, \ldots, N_i $ and for $ i = 1,2, \ldots, \ell $.

Assume now Item 5. We may construct a basis of $ \mathcal{V} $ of vectors of sum-rank weight $ 1 $ simply by taking the images by $ \phi $ of the canonical basis of $ \mathbb{F}^N $. Hence Item 4 follows.

We conclude by showing the equivalence between Item 2 and Item 6. Fix $ i = 1,2, \ldots, \ell $. In \cite[Appendix D]{rgmw}, it is shown that a set $ \mathcal{V}_i \subseteq \mathbb{F}^{n_i} $ is of the form $ \mathcal{V}_i = \mathcal{V}_{\mathcal{L}_i} $, for some vector space $ \mathcal{L}_i \subseteq K_i^{n_i} $, if and only if, $ M_{\mathcal{A}_i}(\mathcal{V}_i) \subseteq K_i^{m_i \times n_i} $ is a left module over the ring $ \mathcal{R}_i = K_i^{m_i \times m_i} $. The equivalence between Item 2 and Item 6 follows directly from this fact and the component-wise definition of multiplication in the cartesian-product ring $ \mathcal{R} = \mathcal{R}_1 \times \mathcal{R}_2 \times \cdots \times \mathcal{R}_\ell $.
\end{proof}

These characterizations naturally recover well-known characterizations of Ham\-ming-metric and rank-metric support spaces. Item 2 in the case $ \mathbf{n} = \mathbf{1} $ is the well-known fact that Hamming-support spaces are cartesian products of factors $ \{ 0 \} $ or $ \mathbb{F} $. Item 3 in the case $ \ell = 1 $ corresponds to the characterization in \cite[Th. 4.3 \& Prop. 5.5]{slides} or \cite[Lemma 52]{rgmw} of rank-metric support spaces as Galois closed spaces. In the case $ \mathbf{n} = \mathbf{1} $, it corresponds to the fact that Hamming-metric support spaces are generated by a subset of vectors in the canonical basis of $ \mathbb{F}^\ell $. 

Item 5 is typically used to motivate the effective length and degenerateness of linear codes, and justifies seeing support spaces as ambient spaces (see \cite[Sec. II]{forney} when $ \mathbf{n} = \mathbf{1} $). This was studied in the case $ \ell=1 $ in \cite[Sec. 6]{slides} and \cite[Subsec. IV-B]{similarities}. Item 5 can also be seen as MacWilliams' extension theorem \cite{macwilliams-thesis} for sum-rank support spaces (see also \cite[Subsec. IV-A]{similarities} for the case $ \ell = 1 $).

Finally, Item 6 is an arithmetic definition of these metrics, in terms of coordinate-wise matrix multiplication, that has been extensively used when $ \mathbf{n} = \mathbf{1} $, mostly in connection with evaluation codes, decoding and computation. Some instances are error-correcting pairs \cite{unified, on-the-existence}, BCH and Hartmann-Tzeng bounds and decoding of cyclic codes \cite[Sec. 3]{on-the-existence}, Feng-Rao or order bounds \cite{feng-rao}, attacks on McEliece-type cryptosystems \cite{attack-AG}, secure multiparty computation \cite{secure-computation} and private information retrieval \cite{private-computation}, among many others (see also the references inside the previous ones). Related works in the case $ \ell = 1 $ include \cite{greferath, RECP}. The example in \cite[Ex. 4.7]{greferath} is essentially the equivalence between Items 5 and 6 when $ \ell = 1 $, and states that MacWilliams' extension theorem \cite{macwilliams-thesis} holds for ($ N $-dimensional) rank support spaces, seen as left modules over the ring $ K_1^{N \times N} $. The work \cite{RECP} introduces rank error-correcting pairs. An interpretation of the attacks of the McEliece-type cryptosystems based on Gabidulin codes \cite[Sec. 5]{overbeck} can be made as in \cite{attack-AG}, since Gabidulin codes are of evaluation type and matrix multiplication translates into the Fr{\"o}benius morphism \cite[Prop. 1]{RECP}. It is worth recalling that linearized Reed-Solomon codes \cite{linearizedRS} are also obtained by a certain evaluation of skew polynomials and, generally, coordinate-wise matrix multiplications correspond to products of skew polynomials after such evaluations, as recently shown in \cite[Th. 2]{pirlrc}. For all these reasons, we have included Item 6 in the theorem, although we will not use it in the rest of the paper.

\subsection{Lattice properties of sum-rank support spaces} \label{subsec properties sum-rank support spaces}

As is the case for the Hamming and rank metrics, sum-rank support spaces behave as ambient spaces for sum-rank codes that can be attached bijectively to the objects in the support lattice (Subsection \ref{subsec lattice unions spaces}). The preservation of the lattice structure and duality by this bijection is an important tool for many coding-theoretic results when $ \ell = 1 $ and $ \mathbf{n} = \mathbf{1} $. As we will see, it is equally important in general for the sum-rank metric. We will implicitly use this isomorphism throughout the rest of the paper.

We start by defining sum-rank supports of subspaces, which constitutes an extension of Definition \ref{def sum-rank supports}.

\begin{definition} [\textbf{Sum-rank supports and weights of subspaces}] \label{def sum-rank supp of spaces}
Given a vector space $ \mathcal{D} \subseteq \mathbb{F}^n $, we define its sum-rank support as
$$ {\rm Supp}(\mathcal{D}) = \sum_{\mathbf{d} \in \mathcal{D}} {\rm Supp}(\mathbf{d}) \in \mathcal{P}(\mathbf{K}^\mathbf{n}). $$
We then define the sum-rank weight of $ \mathcal{D} $ as $ {\rm wt}_{SR}(\mathcal{D}) = {\rm Rk}({\rm Supp}(\mathcal{D})) $.
\end{definition}

Observe that, for all $ \mathbf{c} \in \mathbb{F}^n $, we have that $ {\rm Supp}(\mathbf{c}) = {\rm Supp}(\langle \mathbf{c} \rangle) $ and $ {\rm wt}_{SR}(\mathbf{c}) = {\rm wt}_{SR}(\langle \mathbf{c} \rangle) $, where $ \mathcal{D} = \langle \mathbf{c} \rangle \subseteq \mathbb{F}^n $ denotes the vector space generated by $ \mathbf{c} $.

In the case $ \mathbf{n} = \mathbf{1} $, sums of lists of subspaces coincide with unions of sets. Hence we recover in that case the definition of Hamming support and Hamming weight of a vector space \cite[Sec. II]{wei}. We recall that the original definition of support of a vector space goes back to \cite{helleseth79}. In the case $ \ell = 1 $, we recover the definition of rank support and rank weight of a vector space from \cite[Def. 2.1]{slides}.

With this definition, we may now show that the lattices of sum-rank supports and sum-rank support spaces are isomorphic. 

\begin{proposition} \label{prop lattice isom sum-rank support spaces}
The set $ \mathcal{P}_{SR}( \mathbb{F}^n ) $ is a sublattice of the lattice of vector subspaces of $ \mathbb{F}^n $. Moreover, the map
\begin{equation*}
\begin{array}{ccc}
 \mathcal{P}(\mathbf{K}^\mathbf{n}) & \longrightarrow & \mathcal{P}_{SR}( \mathbb{F}^n ) \\
 \boldsymbol{\mathcal{L}} & \mapsto & \mathcal{V}_{\boldsymbol{\mathcal{L}}} \\
\end{array}
\end{equation*}
is a lattice isomorphism, whose inverse is given by the support map
\begin{equation*}
\begin{array}{ccc}
 \mathcal{P}_{SR}( \mathbb{F}^n ) & \longrightarrow & \mathcal{P}(\mathbf{K}^\mathbf{n}) \\
 \mathcal{V} & \mapsto & {\rm Supp}(\mathcal{V}). \\
\end{array}
\end{equation*}
\end{proposition}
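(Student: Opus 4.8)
The plan is to establish the two claims separately: first that $ \mathcal{P}_{SR}(\mathbb{F}^n) $ is closed under the lattice operations (sums and intersections) of the ambient lattice of $ \mathbb{F} $-subspaces of $ \mathbb{F}^n $, and second that $ \boldsymbol{\mathcal{L}} \mapsto \mathcal{V}_{\boldsymbol{\mathcal{L}}} $ and $ \mathcal{V} \mapsto {\rm Supp}(\mathcal{V}) $ are mutually inverse order-preserving bijections (which, once we know the domains and codomains are lattices, automatically makes them lattice isomorphisms). First I would reduce everything to the single-block case $ \ell = 1 $, since $ \mathcal{V}_{\boldsymbol{\mathcal{L}}} = \mathcal{V}_{\mathcal{L}_1} \times \cdots \times \mathcal{V}_{\mathcal{L}_\ell} $ by Item~2 of Theorem~\ref{th charact sum-rank support spaces}, the partial order on $ \mathcal{P}(\mathbf{K}^\mathbf{n}) $ and on $ \mathcal{P}_{SR}(\mathbb{F}^n) $ is component-wise, and the support of a product is the product of supports; so each statement follows from its rank-metric analogue applied in each block and then reassembled. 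The rank-metric facts I would invoke are exactly those already cited in the excerpt: \cite[Def.\ 3.2, Th.\ 4.3, Prop.\ 5.5]{slides}, \cite[Def.\ 7, Lemma 52]{rgmw}, which identify rank support spaces with Galois-closed spaces and give that $ \mathcal{L} \mapsto \mathcal{V}_\mathcal{L} $ is a lattice isomorphism onto them with inverse the support map.

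For the self-contained argument I would proceed as follows. To see $ {\rm Supp}(\mathcal{V}_{\boldsymbol{\mathcal{L}}}) = \boldsymbol{\mathcal{L}} $: the inclusion $ \subseteq $ is immediate from the definition of $ \mathcal{V}_{\boldsymbol{\mathcal{L}}} $ (every $ \mathbf{c} $ in it has $ {\rm Supp}(\mathbf{c}) \subseteq \boldsymbol{\mathcal{L}} $, and summing preserves this), and for $ \supseteq $ I would use Item~3 (or the basis of weight-one vectors from Item~4): choosing for each block a basis $ \mathcal{B}_i $ of $ \mathcal{L}_i $ with entries in $ K_i $, the corresponding weight-one vectors lie in $ \mathcal{V}_{\boldsymbol{\mathcal{L}}} $ and their supports sum back up to $ \boldsymbol{\mathcal{L}} $. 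Conversely, for $ \mathcal{V} \in \mathcal{P}_{SR}(\mathbb{F}^n) $ I would show $ \mathcal{V}_{{\rm Supp}(\mathcal{V})} = \mathcal{V} $: the inclusion $ \supseteq $ holds because each $ \mathbf{c} \in \mathcal{V} $ satisfies $ {\rm Supp}(\mathbf{c}) \subseteq {\rm Supp}(\mathcal{V}) $, and $ \subseteq $ follows by writing $ \mathcal{V} = \mathcal{V}_{\boldsymbol{\mathcal{L}}} $ for some $ \boldsymbol{\mathcal{L}} $ (Item~1$\Leftrightarrow$Item~2) and applying the first identity to get $ {\rm Supp}(\mathcal{V}) = \boldsymbol{\mathcal{L}} $, whence $ \mathcal{V}_{{\rm Supp}(\mathcal{V})} = \mathcal{V}_{\boldsymbol{\mathcal{L}}} = \mathcal{V} $. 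Monotonicity of both maps is clear: $ \boldsymbol{\mathcal{L}} \subseteq \boldsymbol{\mathcal{L}}^\prime $ gives $ \mathcal{V}_{\boldsymbol{\mathcal{L}}} \subseteq \mathcal{V}_{\boldsymbol{\mathcal{L}}^\prime} $ directly from the defining condition, and $ \mathcal{V} \subseteq \mathcal{V}^\prime $ gives $ {\rm Supp}(\mathcal{V}) \subseteq {\rm Supp}(\mathcal{V}^\prime) $ since the sum defining the support ranges over a larger set.

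For the sublattice claim, the intersection $ \mathcal{V}_{\boldsymbol{\mathcal{L}}} \cap \mathcal{V}_{\boldsymbol{\mathcal{L}}^\prime} $ equals $ \mathcal{V}_{\boldsymbol{\mathcal{L}} \cap \boldsymbol{\mathcal{L}}^\prime} $: a vector $ \mathbf{c} $ lies in both iff $ {\rm Supp}(\mathbf{c}) \subseteq \boldsymbol{\mathcal{L}} $ and $ {\rm Supp}(\mathbf{c}) \subseteq \boldsymbol{\mathcal{L}}^\prime $, i.e.\ iff $ {\rm Supp}(\mathbf{c}) \subseteq \boldsymbol{\mathcal{L}} \cap \boldsymbol{\mathcal{L}}^\prime $, using the component-wise definition of intersection in $ \mathcal{P}(\mathbf{K}^\mathbf{n}) $. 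For the sum, $ \mathcal{V}_{\boldsymbol{\mathcal{L}}} + \mathcal{V}_{\boldsymbol{\mathcal{L}}^\prime} = \mathcal{V}_{\boldsymbol{\mathcal{L}} + \boldsymbol{\mathcal{L}}^\prime} $: the inclusion $ \subseteq $ follows since any sum of two vectors with supports in $ \boldsymbol{\mathcal{L}} $ and $ \boldsymbol{\mathcal{L}}^\prime $ respectively has support in $ \boldsymbol{\mathcal{L}} + \boldsymbol{\mathcal{L}}^\prime $ (subadditivity of supports, which follows block-wise from ${\rm Row}(M+M^\prime) \subseteq {\rm Row}(M) + {\rm Row}(M^\prime)$), and the reverse inclusion follows by exhibiting a spanning set of weight-one vectors for $ \mathcal{V}_{\boldsymbol{\mathcal{L}}+\boldsymbol{\mathcal{L}}^\prime} $ coming from $ K_i $-bases of $ \mathcal{L}_i + \mathcal{L}_i^\prime $, splitting each basis into a part inside $ \mathcal{L}_i $ and a part inside $ \mathcal{L}_i^\prime $, so the corresponding weight-one vectors lie in $ \mathcal{V}_{\boldsymbol{\mathcal{L}}} $ or $ \mathcal{V}_{\boldsymbol{\mathcal{L}}^\prime} $. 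The main obstacle, and the only place where more than formal manipulation is needed, is this last inclusion for sums — namely verifying that $ \mathcal{V}_{\boldsymbol{\mathcal{L}} + \boldsymbol{\mathcal{L}}^\prime} $ is \emph{no larger} than $ \mathcal{V}_{\boldsymbol{\mathcal{L}}} + \mathcal{V}_{\boldsymbol{\mathcal{L}}^\prime} $; this is really the statement that the map $ \boldsymbol{\mathcal{L}} \mapsto \mathcal{V}_{\boldsymbol{\mathcal{L}}} $ commutes with joins, and it is cleanest to cite \cite[Lemma 52]{rgmw} block-wise (Galois closure is preserved under sums) rather than to re-prove it, after which the general case is assembled component-wise. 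Once closure under sums and intersections is in hand, $ \mathcal{P}_{SR}(\mathbb{F}^n) $ is a sublattice of the subspace lattice of $ \mathbb{F}^n $, and the order-preserving mutually inverse bijection between it and $ \mathcal{P}(\mathbf{K}^\mathbf{n}) $ is automatically a lattice isomorphism.
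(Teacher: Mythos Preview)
Your proposal is correct and follows essentially the same architecture as the paper's proof: first establish that the two maps are mutual inverses (via $ {\rm Supp}(\mathcal{V}_{\boldsymbol{\mathcal{L}}}) = \boldsymbol{\mathcal{L}} $ and then $ \mathcal{V}_{{\rm Supp}(\mathcal{V})} = \mathcal{V} $), then verify the identities $ \mathcal{V}_{\boldsymbol{\mathcal{L}}} \cap \mathcal{V}_{\boldsymbol{\mathcal{L}}^\prime} = \mathcal{V}_{\boldsymbol{\mathcal{L}} \cap \boldsymbol{\mathcal{L}}^\prime} $ and $ \mathcal{V}_{\boldsymbol{\mathcal{L}}} + \mathcal{V}_{\boldsymbol{\mathcal{L}}^\prime} = \mathcal{V}_{\boldsymbol{\mathcal{L}} + \boldsymbol{\mathcal{L}}^\prime} $ to conclude the sublattice and lattice-isomorphism claims.

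The one place worth comparing is the inclusion $ \mathcal{V}_{\boldsymbol{\mathcal{L}} + \boldsymbol{\mathcal{L}}^\prime} \subseteq \mathcal{V}_{\boldsymbol{\mathcal{L}}} + \mathcal{V}_{\boldsymbol{\mathcal{L}}^\prime} $, which you flag as the ``main obstacle'' and propose to handle either by splitting a spanning set of weight-one vectors or by invoking \cite[Lemma~52]{rgmw} on Galois closure block-wise. The paper's argument here is more direct and avoids any external citation: given $ \mathbf{c} = (\mathbf{c}^{(1)}, \ldots, \mathbf{c}^{(\ell)}) \in \mathcal{V}_{\boldsymbol{\mathcal{L}} + \boldsymbol{\mathcal{L}}^\prime} $, each row of $ M_{\mathcal{A}_i}(\mathbf{c}^{(i)}) $ lies in $ \mathcal{L}_i + \mathcal{L}_i^\prime $, so one decomposes row by row and reassembles to get $ \mathbf{c}^{(i)} = \mathbf{d}^{(i)} + \mathbf{e}^{(i)} $ with $ \mathbf{d}^{(i)} \in \mathcal{V}_{\mathcal{L}_i} $ and $ \mathbf{e}^{(i)} \in \mathcal{V}_{\mathcal{L}_i^\prime} $. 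This is a one-line computation rather than an obstacle, so your instinct to reach for the Galois-closure machinery here is unnecessary, though not wrong.
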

\begin{proof}
We first show that the given maps are inverse of each other. 

Let $ \boldsymbol{\mathcal{L}} \in \mathcal{P}(\mathbf{K}^\mathbf{n}) $, and let $ \boldsymbol{\mathcal{L}}^\prime = {\rm Supp}(\mathcal{V}_{\boldsymbol{\mathcal{L}}}) $. If $ \mathbf{c} \in \mathcal{V}_{\boldsymbol{\mathcal{L}}} $, then by definition, we have that $ {\rm Supp}(\mathbf{c}) \subseteq \boldsymbol{\mathcal{L}} $. Again by definition, we have that
$$ \boldsymbol{\mathcal{L}}^\prime = \sum_{\mathbf{c} \in \mathcal{V}_{\boldsymbol{\mathcal{L}}}} {\rm Supp}(\mathbf{c}) \subseteq \boldsymbol{\mathcal{L}}. $$
Now fix $ i = 1,2, \ldots, \ell $. There exist $ \mathbf{c}_{i,1}, \mathbf{c}_{i,2}, \ldots, \mathbf{c}_{i,N_i} \in \mathcal{V}_{\boldsymbol{\mathcal{L}}} $ such that
$$ (\{ \mathbf{0} \}, \ldots, \{ \mathbf{0} \}, \mathcal{L}_i, \{ \mathbf{0} \}, \ldots, \{ \mathbf{0} \}) \subseteq \sum_{j=1}^{N_i} {\rm Supp}(\mathbf{c}_{i,j}), $$
being $ \mathcal{L}_i $ placed in the $ i $th position. To see this, take a basis $ \mathbf{d}_{i,1}, \mathbf{d}_{i,2}, \ldots, \mathbf{d}_{i,N_i} \in K_i^{n_i} $ of $ \mathcal{L}_i $, and define $ \mathbf{c}_{i,j} \in \mathbb{F}^n $ as the vector that is identically zero except in the $ i $th block of $ n_i $ coordinates, where it is defined as $ \mathbf{d}_{i,j} $, for $ j = 1,2, \ldots, N_i $. Therefore, we conclude that
$$ \boldsymbol{\mathcal{L}} \subseteq \sum_{i=1}^\ell \sum_{j=1}^{N_i} {\rm Supp}(\mathbf{c}_{i,j}) \subseteq \boldsymbol{\mathcal{L}}^\prime, $$
and we deduce that $ \boldsymbol{\mathcal{L}} = {\rm Supp}(\mathcal{V}_{\boldsymbol{\mathcal{L}}}) $.

Let now $ \mathcal{V} \in \mathcal{P}_{SR}( \mathbb{F}^n ) $. By definition, there exists $ \boldsymbol{\mathcal{L}} \in \mathcal{P}(\mathbf{K}^\mathbf{n}) $ such that $ \mathcal{V} = \mathcal{V}_{\boldsymbol{\mathcal{L}}} $. By the previous paragraph, we know that $ {\rm Supp}(\mathcal{V}) = {\rm Supp}(\mathcal{V}_{\boldsymbol{\mathcal{L}}}) = \boldsymbol{\mathcal{L}} $. Thus $ \mathcal{V}_{{\rm Supp}(\mathcal{V})} = \mathcal{V} $. This concludes the proof that the given maps are inverse of each other.

Next, since $ \boldsymbol{\mathcal{L}}, \boldsymbol{\mathcal{L}}^\prime \subseteq \boldsymbol{\mathcal{L}} + \boldsymbol{\mathcal{L}}^\prime $, and $ \mathcal{V}_{\boldsymbol{\mathcal{L}} + \boldsymbol{\mathcal{L}}^\prime} $ is a vector space, we deduce that $ \mathcal{V}_{\boldsymbol{\mathcal{L}}} + \mathcal{V}_{\boldsymbol{\mathcal{L}}^\prime} \subseteq \mathcal{V}_{\boldsymbol{\mathcal{L}} + \boldsymbol{\mathcal{L}}^\prime} $. For the reversed inclusion, if $ \mathbf{c} = (\mathbf{c}^{(1)}, \mathbf{c}^{(2)}, \ldots, $ $ \mathbf{c}^{(\ell)}) $ $ \in \mathcal{V}_{\boldsymbol{\mathcal{L}} + \boldsymbol{\mathcal{L}}^\prime} $, we see that $ \mathbf{c}^{(i)} = \mathbf{d}^{(i)} + \mathbf{e}^{(i)} $, for some $ \mathbf{d}^{(i)} \in \mathcal{V}_{\mathcal{L}_i} $ and $ \mathbf{e}^{(i)} \in \mathcal{V}_{\mathcal{L}_i^\prime} $, for $ i = 1,2, \ldots, \ell $. Therefore $ \mathcal{V}_{\boldsymbol{\mathcal{L}} + \boldsymbol{\mathcal{L}}^\prime} \subseteq \mathcal{V}_{\boldsymbol{\mathcal{L}}} + \mathcal{V}_{\boldsymbol{\mathcal{L}}^\prime} $ and equality holds. Since $ \mathcal{V}_{\boldsymbol{\mathcal{L}}} \cap \mathcal{V}_{\boldsymbol{\mathcal{L}}^\prime} = \mathcal{V}_{\boldsymbol{\mathcal{L}} \cap \boldsymbol{\mathcal{L}}^\prime} $ is trivial, we conclude that the previous maps are lattice isomorphisms and $ \mathcal{P}_{SR}( \mathbb{F}^n ) $ is a sublattice of the lattice of vector subspaces of $ \mathbb{F}^n $.
\end{proof} 

In particular, we may give the following list of properties of sum-rank supports and sum-rank support spaces.

\begin{corollary} \label{cor lattice properties of support spaces}
Given $ \boldsymbol{\mathcal{L}}, \boldsymbol{\mathcal{L}}^\prime \in \mathcal{P}(\mathbf{K}^\mathbf{n}) $, the following properties hold:
\begin{enumerate}
\item
$ \dim(\mathcal{V}_{\boldsymbol{\mathcal{L}}}) = {\rm Rk}(\boldsymbol{\mathcal{L}}) $.
\item
$ \mathcal{V}_{\boldsymbol{\mathcal{L}}} \subseteq \mathcal{V}_{\boldsymbol{\mathcal{L}}^\prime} $ if, and only if, $ \boldsymbol{\mathcal{L}} \subseteq \boldsymbol{\mathcal{L}}^\prime $. 
\item
$ \mathcal{V}_{\boldsymbol{\mathcal{L}}} + \mathcal{V}_{\boldsymbol{\mathcal{L}}^\prime} = \mathcal{V}_{\boldsymbol{\mathcal{L}} + \boldsymbol{\mathcal{L}}^\prime} $.
\item
$ \mathcal{V}_{\boldsymbol{\mathcal{L}}} \cap \mathcal{V}_{\boldsymbol{\mathcal{L}}^\prime} = \mathcal{V}_{\boldsymbol{\mathcal{L}} \cap \boldsymbol{\mathcal{L}}^\prime} $.
\item
$ \mathcal{V}_{\boldsymbol{\mathcal{L}}}^\perp = \mathcal{V}_{\boldsymbol{\mathcal{L}}^\perp} $.
\item
$ \mathcal{V}_{\boldsymbol{\mathcal{T}}} = \mathcal{V}_{(K_1^{n_1}, K_2^{n_2}, \ldots, K_\ell^{n_\ell}) } = \mathbb{F}^n $ and $ \mathcal{V}_\mathbf{0} = \mathcal{V}_{(\{ 0 \}, \{ 0 \}, \ldots, \{ 0 \})} = \{ \mathbf{0} \} $.
\item
$ \mathcal{V}_{\boldsymbol{\mathcal{L}}^{\prime \prime}} = \mathcal{V}_{\boldsymbol{\mathcal{L}}} \oplus \mathcal{V}_{\boldsymbol{\mathcal{L}}^\prime} $ if, and only if, $ \boldsymbol{\mathcal{L}}^{\prime \prime} = \boldsymbol{\mathcal{L}} \oplus \boldsymbol{\mathcal{L}}^\prime $.
\end{enumerate}
\end{corollary}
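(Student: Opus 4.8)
The plan is to derive every item of the corollary as a direct consequence of Proposition~\ref{prop lattice isom sum-rank support spaces}, which already establishes that $\boldsymbol{\mathcal{L}} \mapsto \mathcal{V}_{\boldsymbol{\mathcal{L}}}$ is a lattice isomorphism with inverse $\mathcal{V} \mapsto {\rm Supp}(\mathcal{V})$. Items~2, 3, 4 and 6 are immediate: Item~2 is the statement that the isomorphism is order-preserving in both directions (monotonicity of $\boldsymbol{\mathcal{L}} \mapsto \mathcal{V}_{\boldsymbol{\mathcal{L}}}$ is clear from the definition, and the reverse implication uses that ${\rm Supp}$ is its inverse and also monotone, since ${\rm Supp}$ of a larger space is a larger sum of supports); Items~3 and 4 are exactly the two equalities proved inside the proof of the proposition; Item~6 follows by applying the isomorphism to the identities $\boldsymbol{\mathcal{T}} \mapsto \mathbb{F}^n$ and $\mathbf{0}\mapsto\{\mathbf 0\}$, which can be checked directly from Definition~\ref{def sum-rank supp spaces} (a vector has support contained in $\boldsymbol{\mathcal{T}}$ automatically, and support contained in $\mathbf{0}$ forces the vector to be zero).

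For Item~1, I would combine the displayed identity ${\rm wt}_{SR}(\mathbf{c}) = {\rm Rk}({\rm Supp}(\mathbf{c}))$ with the construction already used in the proof of the proposition: choosing bases of each $\mathcal{L}_i$ and lifting them to $\mathbb{F}^n$ produces $\sum_i \dim_{K_i}(\mathcal{L}_i) = {\rm Rk}(\boldsymbol{\mathcal{L}})$ linearly independent vectors of sum-rank weight $1$ inside $\mathcal{V}_{\boldsymbol{\mathcal{L}}}$, so $\dim(\mathcal{V}_{\boldsymbol{\mathcal{L}}}) \geq {\rm Rk}(\boldsymbol{\mathcal{L}})$; conversely, by Item~2 of Theorem~\ref{th charact sum-rank support spaces} one has $\mathcal{V}_{\boldsymbol{\mathcal{L}}} = \mathcal{V}_{\mathcal{L}_1}\times\cdots\times\mathcal{V}_{\mathcal{L}_\ell}$ and $\dim_{\mathbb{F}}(\mathcal{V}_{\mathcal{L}_i}) = \dim_{K_i}(\mathcal{L}_i)$ by the $\ell=1$ case (a Galois-closed space has an $\mathbb{F}$-basis with entries in $K_i$, of size equal to its $K_i$-dimension), giving equality. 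Alternatively one can simply cite Item~5 of Theorem~\ref{th charact sum-rank support spaces}: the isometry $\phi:\mathbb{F}^N\to\mathcal{V}_{\boldsymbol{\mathcal{L}}}$ forces $\dim(\mathcal{V}_{\boldsymbol{\mathcal{L}}}) = N = \sum_i N_i = {\rm Rk}(\boldsymbol{\mathcal{L}})$.

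For Item~5, the duality statement, I expect this to be the only nontrivial step. Using Item~2 of Theorem~\ref{th charact sum-rank support spaces}, write $\mathcal{V}_{\boldsymbol{\mathcal{L}}} = \mathcal{V}_{\mathcal{L}_1}\times\cdots\times\mathcal{V}_{\mathcal{L}_\ell}$; since the standard inner product on $\mathbb{F}^n$ is the orthogonal sum of the inner products on the blocks $\mathbb{F}^{n_i}$, the dual decomposes as $\mathcal{V}_{\boldsymbol{\mathcal{L}}}^\perp = \mathcal{V}_{\mathcal{L}_1}^\perp \times\cdots\times \mathcal{V}_{\mathcal{L}_\ell}^\perp$, so it suffices to prove $\mathcal{V}_{\mathcal{L}_i}^\perp = \mathcal{V}_{\mathcal{L}_i^\perp}$ for each $i$, i.e.\ the case $\ell=1$. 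That case can be handled by a dimension count together with one inclusion: if $\mathbf{c}$ has $M_{\mathcal{A}}(\mathbf{c})$ with rows in $\mathcal{L}$ and $\mathbf{d}$ has rows in $\mathcal{L}^\perp$, then expanding the $\mathbb{F}$-bilinear form $\sum_k c_k d_k$ in terms of the $K$-entries and using $K$-bilinearity of the trace pairing shows $\langle\mathbf{c},\mathbf{d}\rangle$ lies in a span of products of paired/unpaired basis elements; the cross terms coming from $\langle \text{row of } M_{\mathcal{A}}(\mathbf{c}),\text{row of }M_{\mathcal{A}}(\mathbf{d})\rangle$ vanish because those rows are orthogonal over $K$, giving $\mathcal{V}_{\mathcal{L}^\perp}\subseteq\mathcal{V}_{\mathcal{L}}^\perp$. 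Equality then follows from Item~1 since $\dim\mathcal{V}_{\mathcal{L}^\perp} = \dim_K(\mathcal{L}^\perp) = n_i - \dim_K(\mathcal{L}) = n_i - \dim\mathcal{V}_{\mathcal{L}} = \dim\mathcal{V}_{\mathcal{L}}^\perp$. (This is exactly the rank-metric fact from the cited references, so it may simply be quoted; the genuinely new content is the reduction of the sum-rank case to it via the product decomposition.) Finally, Item~7 follows by combining Item~3, Item~4 and Item~6: $\mathcal{V}_{\boldsymbol{\mathcal{L}}''} = \mathcal{V}_{\boldsymbol{\mathcal{L}}}\oplus\mathcal{V}_{\boldsymbol{\mathcal{L}}'}$ means $\mathcal{V}_{\boldsymbol{\mathcal{L}}''} = \mathcal{V}_{\boldsymbol{\mathcal{L}}} + \mathcal{V}_{\boldsymbol{\mathcal{L}}'} = \mathcal{V}_{\boldsymbol{\mathcal{L}}+\boldsymbol{\mathcal{L}}'}$ and $\mathcal{V}_{\boldsymbol{\mathcal{L}}}\cap\mathcal{V}_{\boldsymbol{\mathcal{L}}'} = \mathcal{V}_{\boldsymbol{\mathcal{L}}\cap\boldsymbol{\mathcal{L}}'} = \{\mathbf 0\} = \mathcal{V}_{\mathbf 0}$, and applying the injectivity of the isomorphism (equivalently taking ${\rm Supp}$) translates these into $\boldsymbol{\mathcal{L}}'' = \boldsymbol{\mathcal{L}}+\boldsymbol{\mathcal{L}}'$ and $\boldsymbol{\mathcal{L}}\cap\boldsymbol{\mathcal{L}}' = \mathbf 0$, which is the definition of $\boldsymbol{\mathcal{L}}'' = \boldsymbol{\mathcal{L}}\oplus\boldsymbol{\mathcal{L}}'$; the converse direction is the same chain of equalities read backwards.
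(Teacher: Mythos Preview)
Your proposal is correct and follows essentially the same route as the paper: Items~2, 3, 4, 6, 7 are drawn from Proposition~\ref{prop lattice isom sum-rank support spaces} and its proof, while Items~1 and 5 are reduced via the product decomposition $\mathcal{V}_{\boldsymbol{\mathcal{L}}} = \mathcal{V}_{\mathcal{L}_1}\times\cdots\times\mathcal{V}_{\mathcal{L}_\ell}$ from Theorem~\ref{th charact sum-rank support spaces} to the single-block facts $\dim(\mathcal{V}_{\mathcal{L}_i}) = \dim_{K_i}(\mathcal{L}_i)$ and $\mathcal{V}_{\mathcal{L}_i}^\perp = \mathcal{V}_{\mathcal{L}_i^\perp}$, exactly as the paper does. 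Your sketch of the $\ell=1$ duality inclusion is slightly over-elaborate (no trace pairing is needed: writing $\mathbf{c}=\sum_j\alpha_j\mathbf{c}_j$, $\mathbf{d}=\sum_h\alpha_h\mathbf{d}_h$ with $\mathbf{c}_j\in\mathcal{L}$, $\mathbf{d}_h\in\mathcal{L}^\perp$ gives $\langle\mathbf{c},\mathbf{d}\rangle=\sum_{j,h}\alpha_j\alpha_h\langle\mathbf{c}_j,\mathbf{d}_h\rangle_K=0$ directly), but the paper simply quotes this rank-metric fact rather than reproving it, as you also suggest doing.
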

\begin{proof}
All items follow directly from Proposition \ref{prop lattice isom sum-rank support spaces} and its proof, except Item 1 and Item 5. These items follow directly from Item 2 in Theorem \ref{th charact sum-rank support spaces} and the facts that
$$ \dim(\mathcal{V}_{\mathcal{L}_i}) = \dim_{K_i}(\mathcal{L}_i) \quad \textrm{and} \quad \mathcal{V}_{\mathcal{L}_i}^\perp = \mathcal{V}_{\mathcal{L}_i^\perp}, $$
respectively, for a vector space $ \mathcal{L}_i \subseteq K_i^{n_i} $, for $ i = 1,2, \ldots, \ell $. 
\end{proof}

We may also conclude the following, which intuitively says that sum-rank support spaces are the smallest ambient spaces of sum-rank linear codes.

\begin{corollary} \label{cor min support space containing}
If $ \mathcal{D} \subseteq \mathbb{F}^n $ is a vector space and $ \boldsymbol{\mathcal{L}} = {\rm Supp}(\mathcal{D}) $, then $ \mathcal{V}_{\boldsymbol{\mathcal{L}}} $ is the smallest sum-rank support space containing $ \mathcal{D} $.
\end{corollary}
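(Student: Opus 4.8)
The plan is to verify two things: that $\mathcal{V}_{\boldsymbol{\mathcal{L}}}$ contains $\mathcal{D}$, and that it is contained in every sum-rank support space containing $\mathcal{D}$. For the first, I would argue directly from the definitions. Every $\mathbf{d} \in \mathcal{D}$ satisfies $\mathrm{Supp}(\mathbf{d}) \subseteq \sum_{\mathbf{e} \in \mathcal{D}} \mathrm{Supp}(\mathbf{e}) = \boldsymbol{\mathcal{L}}$, so by Definition \ref{def sum-rank supp spaces} we get $\mathbf{d} \in \mathcal{V}_{\boldsymbol{\mathcal{L}}}$; hence $\mathcal{D} \subseteq \mathcal{V}_{\boldsymbol{\mathcal{L}}}$, and $\mathcal{V}_{\boldsymbol{\mathcal{L}}} \in \mathcal{P}_{SR}(\mathbb{F}^n)$ by Definition \ref{def sum-rank supp spaces} itself.

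For minimality, let $\mathcal{V} \in \mathcal{P}_{SR}(\mathbb{F}^n)$ be any sum-rank support space with $\mathcal{D} \subseteq \mathcal{V}$. By Proposition \ref{prop lattice isom sum-rank support spaces}, $\mathcal{V} = \mathcal{V}_{\boldsymbol{\mathcal{M}}}$ where $\boldsymbol{\mathcal{M}} = \mathrm{Supp}(\mathcal{V}) \in \mathcal{P}(\mathbf{K}^\mathbf{n})$. Since $\mathcal{D} \subseteq \mathcal{V}$, monotonicity of the support of a subspace (immediate from Definition \ref{def sum-rank supp of spaces}, as the sum over a larger index set is larger in the lattice) gives $\boldsymbol{\mathcal{L}} = \mathrm{Supp}(\mathcal{D}) \subseteq \mathrm{Supp}(\mathcal{V}) = \boldsymbol{\mathcal{M}}$. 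Applying Item 2 of Corollary \ref{cor lattice properties of support spaces} (or Item 2 of Proposition \ref{prop lattice isom sum-rank support spaces}'s isomorphism being order-preserving), this yields $\mathcal{V}_{\boldsymbol{\mathcal{L}}} \subseteq \mathcal{V}_{\boldsymbol{\mathcal{M}}} = \mathcal{V}$, which is exactly the minimality claim.

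I do not anticipate a genuine obstacle here: the statement is essentially a formal consequence of the lattice isomorphism in Proposition \ref{prop lattice isom sum-rank support spaces} together with the monotonicity of $\mathrm{Supp}(\cdot)$. The only point requiring a line of care is the monotonicity of the subspace-support map $\mathcal{D} \mapsto \mathrm{Supp}(\mathcal{D})$, which is not stated as a separate lemma in the excerpt but follows at once: if $\mathcal{D} \subseteq \mathcal{V}$ then $\sum_{\mathbf{d} \in \mathcal{D}} \mathrm{Supp}(\mathbf{d})$ is a sum over a subfamily of the generators of $\sum_{\mathbf{e} \in \mathcal{V}} \mathrm{Supp}(\mathbf{e})$, hence componentwise a subspace of it. Everything else is a direct invocation of results already proved.
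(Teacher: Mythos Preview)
Your proposal is correct and matches the paper's approach: the corollary is stated without proof, as an immediate consequence of Proposition~\ref{prop lattice isom sum-rank support spaces} and the definitions, exactly along the lines you spell out. The monotonicity of $\mathrm{Supp}(\cdot)$ that you single out is indeed the only point not explicitly recorded, and your one-line justification from Definition~\ref{def sum-rank supp of spaces} is all that is needed.
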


\subsection{Restriction, shortening, change of bases and pre-shortening} \label{subsec restricted and shortened}

In this subsection, we introduce four basic operations on linear codes that are intimately related: restriction, shortening, change of bases and pre-shortening. We estimate the parameters of the linear codes obtained by such operations and establish the duality of restriction and shortening.

We start with changes of bases. They form a basic family of linear sum-rank isometries that correspond bijectively to the change-of-P-basis isometries \cite[Def. 12]{linearizedRS} for the skew metric given in \cite[Def. 9]{linearizedRS} (see \cite[Th. 2]{linearizedRS} for the exact connection). Actually, the terminology is inspired by the fact that what these isometries do is changing the bases, of each conjugacy class, on which we evaluate the linearized polynomials from \cite[Def. 20]{linearizedRS}. See also Appendix \ref{app} for some of the concepts regarding the skew metric.

\begin{definition}[\textbf{Change of bases}] \label{def change of bases}
Given invertible matrices $ B_i \in K_i^{n_i \times n_i} $, for $ i = 1,2, \ldots, \ell $, we define the corresponding change of bases as the map $ \pi_B : \mathbb{F}^n \longrightarrow \mathbb{F}^n $, where $ \pi_B(\mathbf{c}) = \mathbf{c}B^T $, and $ B = {\rm diag}(B_1, B_2, \ldots, B_\ell) \in \mathbb{F}^{n \times n} $. For a linear code $ \mathcal{C} \subseteq \mathbb{F}^n $, we define the corresponding change-of-bases code as $ \mathcal{C} B^T = \pi_B(\mathcal{C}) \subseteq \mathbb{F}^n $.
\end{definition}

Except for multiplication with an element in $ \mathbb{F}^* $, changes of bases are exactly all linear rank isometries when $ \ell = 1 $ by \cite[Th. 1]{berger}. In the case $ \mathbf{n} = \mathbf{1} $, changes of bases are just monomial maps (multiplication on the right by an invertible diagonal matrix), which except for permutation of coordinates, constitute also all linear Hamming isometries \cite{macwilliams-thesis}. It is not difficult to show that all linear sum-rank isometries are also changes of bases, after certain multiplications of elements in $ \mathbb{F}^* $ and permutations of blocks of coordinates (corresponding to time slots).

We now define pre-shortening, which as the name suggests, will be used to define shortening.

\begin{definition}[\textbf{Pre-shortening}]
Given a linear code $ \mathcal{C} \subseteq \mathbb{F}^n $ and $ \boldsymbol{\mathcal{L}} \in \mathcal{P}(\mathbf{K}^\mathbf{n}) $, we define its pre-shortening over $ \boldsymbol{\mathcal{L}} $ as the linear code $ \mathcal{C} \cap \mathcal{V}_{\boldsymbol{\mathcal{L}}} \subseteq \mathbb{F}^n $.
\end{definition}

We may now define restriction and shortening, which recover the classical notions in the cases $ \mathbf{n} = \mathbf{1} $ and $ \ell = 1 $. In the case $ \ell = 1 $, the definitions in \cite[Def. 3.2]{ravagnani-siam} and \cite[Def. 11]{similarities} are essentially equivalent. However, we adopt the approach in \cite{similarities}, which gives one restriction and one shortening for each support rather than each matrix, since different generator matrices of the same support give equivalent codes. In this way, the properties of the restricted and shortened codes are related to properties of the corresponding supports.

\begin{definition}[\textbf{Restriction and shortening}] \label{def restriction and shortening}
Let $ \mathcal{C} \subseteq \mathbb{F}^n $ be a linear code and let $ \boldsymbol{\mathcal{L}} \in \mathcal{P}(\mathbf{K}^\mathbf{n}) $. Choose $ \boldsymbol{\mathcal{L}}^\prime \in \mathcal{P}(\mathbf{K}^\mathbf{n}) $ such that $ \boldsymbol{\mathcal{T}} = \boldsymbol{\mathcal{L}}^\prime \oplus \boldsymbol{\mathcal{L}}^\perp $ (Definition \ref{def basic operations lattice}). Fix full-rank generator matrices $ A_i, A_i^\prime \in K_i^{N_i \times n_i} $ of $ \mathcal{L}_i, \mathcal{L}_i^\prime \subseteq K_i^{n_i} $, respectively, for $ i = 1,2, \ldots, \ell $. We define the restricted and shortened linear codes of $ \mathcal{C} $ over $ \boldsymbol{\mathcal{L}} $ as
$$ \mathcal{C}_{\boldsymbol{\mathcal{L}}} = \mathcal{C} A^T \subseteq \mathbb{F}^N \quad \textrm{and} \quad \mathcal{C}^{\boldsymbol{\mathcal{L}}} = (\mathcal{C} \cap \mathcal{V}_{\boldsymbol{\mathcal{L}}}) A^{\prime T} \subseteq \mathbb{F}^N, $$
respectively, where $ A = {\rm diag}(A_1, A_2, \ldots, A_\ell) \in \mathbb{F}^{N \times n} $, $ A^\prime = {\rm diag}(A_1^\prime, A_2^\prime, \ldots, A_\ell^\prime) \in \mathbb{F}^{N \times n} $ and $ N = N_1 + N_2 + \cdots + N_\ell = {\rm Rk}(\boldsymbol{\mathcal{L}}) = {\rm Rk}(\boldsymbol{\mathcal{L}}^\prime) $.
\end{definition}

The intuitive idea behind restriction and shortening is to obtain new linear codes of shorter length, obtained by a given family of linear projections, where information on the sum-rank properties of the new codes can be derived from the supports associated to such linear projections. In the case $ \mathbf{n} = \mathbf{1} $, the allowed linear projections are projections over a subset of coordinates (after possibly applying a monomial map), whereas in the case $ \ell = 1 $, linear projections are arbitrary but with coefficients over the subfield $ K = K_1 $. 

In the case of shortening, applying a pre-shortening guarantees that the minimum sum-rank distance of the shortened code is not smaller than that of the original code (Corollary \ref{cor parameters restricted and shortened}). In this regard, the definition using $ \boldsymbol{\mathcal{L}}^\prime $ might seem unnatural, but it will allow us to show that pre-shortened and shortened codes are canonically sum-rank isometric, and that shortening and restriction are dual operations. 

Note that the definitions of restriction and shortening depend on the support $ \boldsymbol{\mathcal{L}}^\prime $ and the generator matrices of the subspaces $ \mathcal{L}_i $ and $ \mathcal{L}_i^\prime $. However, there is a canonical linear sum-rank isometry between any two restricted codes or any two shortened codes over the same support $ \boldsymbol{\mathcal{L}} \in \mathcal{P}(\mathbf{K}^\mathbf{n}) $, which justifies disregarding the dependency on generator matrices and $ \boldsymbol{\mathcal{L}}^\prime $. Before proving this, we need the following tool from \cite[Prop. 12]{similarities}.

\begin{lemma}[\textbf{\cite{similarities}}] \label{lemma basis of L prime}
Fix a field $ K $ and $ N $-dimensional subspaces $ \mathcal{L}, \mathcal{L}^\prime \in \mathcal{P}(K^n) $ such that $ \mathcal{L} $ is generated by $ A \in K^{N \times n} $. It holds that $ \mathcal{L}^\prime \oplus \mathcal{L}^\perp = K^{n} $ if, and only if, $ \mathcal{L}^\prime $ has a generator matrix $ A^\prime \in K^{N \times n} $ such that $ A^\prime A^T = I_N $.
\end{lemma}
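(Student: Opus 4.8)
The plan is to reformulate both sides of the equivalence in terms of the single linear map $ \varphi : K^n \longrightarrow K^N $ defined by $ \varphi(\mathbf{x}) = \mathbf{x} A^T $, and then conclude by a dimension count. Observe first that, since $ \mathcal{L} = {\rm Row}(A) $, the kernel of $ \varphi $ is exactly $ \mathcal{L}^\perp $: a vector $ \mathbf{x} \in K^n $ satisfies $ \mathbf{x} A^T = \mathbf{0} $ if and only if $ \mathbf{x} $ is orthogonal to every row of $ A $, i.e. to a generating set of $ \mathcal{L} $. Moreover $ \varphi $ is surjective, because $ A $ has rank $ N $; hence $ \dim \mathcal{L}^\perp = n - N $, consistent with $ \dim \mathcal{L} = N $.

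For the forward implication, assume $ \mathcal{L}^\prime \oplus \mathcal{L}^\perp = K^n $. Then $ \mathcal{L}^\prime \cap \ker \varphi = \{ \mathbf{0} \} $, so the restriction $ \varphi|_{\mathcal{L}^\prime} : \mathcal{L}^\prime \longrightarrow K^N $ is injective; since $ \dim \mathcal{L}^\prime = N = \dim K^N $, it is an isomorphism. Let $ \mathbf{v}_1, \mathbf{v}_2, \ldots, \mathbf{v}_N \in \mathcal{L}^\prime $ be the preimages under $ \varphi|_{\mathcal{L}^\prime} $ of the canonical basis of $ K^N $, and let $ A^\prime \in K^{N \times n} $ be the matrix whose $ i $th row is $ \mathbf{v}_i $. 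The $ \mathbf{v}_i $ form a basis of $ \mathcal{L}^\prime $, so $ A^\prime $ is a full-rank generator matrix of $ \mathcal{L}^\prime $, and the $ i $th row of $ A^\prime A^T $ is $ \mathbf{v}_i A^T = \varphi(\mathbf{v}_i) $, the $ i $th canonical vector; hence $ A^\prime A^T = I_N $.

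For the converse, assume $ \mathcal{L}^\prime $ has a generator matrix $ A^\prime \in K^{N \times n} $ with $ A^\prime A^T = I_N $. Writing $ \mathbf{v}_1, \mathbf{v}_2, \ldots, \mathbf{v}_N $ for the rows of $ A^\prime $, the identity $ A^\prime A^T = I_N $ says that $ \varphi(\mathbf{v}_i) $ is the $ i $th canonical vector of $ K^N $, so $ \varphi|_{\mathcal{L}^\prime} $ is surjective; as $ \dim \mathcal{L}^\prime = N = \dim K^N $, it is an isomorphism, and in particular injective, giving $ \mathcal{L}^\prime \cap \mathcal{L}^\perp = \mathcal{L}^\prime \cap \ker \varphi = \{ \mathbf{0} \} $. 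Combined with $ \dim \mathcal{L}^\prime + \dim \mathcal{L}^\perp = N + (n - N) = n $, this yields $ \mathcal{L}^\prime \oplus \mathcal{L}^\perp = K^n $, as claimed.

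I expect no serious obstacle: the statement is elementary linear algebra, and the only point requiring care is the bookkeeping of the bilinear pairing, namely that $ \ker \varphi = \mathcal{L}^\perp $ and that $ \varphi $ is onto because $ A $ is full rank. After that, both directions reduce to the fact that an injective (respectively surjective) linear map between equidimensional finite-dimensional spaces is bijective. One could alternatively argue directly by extending a basis of $ \mathcal{L}^\perp $, but the map-theoretic formulation keeps the two implications parallel and makes the dimension counts transparent.
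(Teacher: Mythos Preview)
Your proof is correct. The paper does not actually prove this lemma; it merely cites it as \cite[Prop.~12]{similarities} and uses it as a tool. Your argument via the linear map $\varphi(\mathbf{x}) = \mathbf{x}A^T$ with $\ker\varphi = \mathcal{L}^\perp$ is clean, self-contained, and handles both directions transparently through the same dimension count, so there is nothing to compare against in the present paper.
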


The next two lemmas, being the first one trivial, recover \cite[Lemma 7]{similarities} when $ \ell = 1 $. 

\begin{lemma}
Let the notation be as in Definition \ref{def restriction and shortening}. If $ \widetilde{A}_i \in K_i^{N_i \times n_i} $ is another generator matrix of $ \mathcal{L}_i \subseteq K_i^{n_i} $, then there exists a unique invertible matrix $ B_i \in K_i^{N_i \times N_i} $ such that $ \widetilde{A}_i = B_i A_i $, for $ i = 1,2, \ldots, \ell $. Therefore, it holds that
$$ \mathcal{C} A^T B^T = \mathcal{C} \widetilde{A}^T. $$
In other words, $ \mathcal{C} \widetilde{A}^T $ is a change of bases of the code $ \mathcal{C} A^T $. 

In particular, there is a canonical linear sum-rank isometry between any two restricted codes over the same sum-rank support.
\end{lemma}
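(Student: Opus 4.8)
The statement is essentially a book-keeping fact about generator matrices, so the plan is to verify the two claims in sequence and then deduce the ``canonical isometry'' conclusion. First I would establish the existence and uniqueness of $B_i$: since $A_i$ and $\widetilde{A}_i$ are both full-rank $N_i \times n_i$ matrices generating the same row space $\mathcal{L}_i$ of dimension $N_i$, each row of $\widetilde{A}_i$ lies in $\mathcal{L}_i = {\rm Row}(A_i)$, so there is a matrix $B_i \in K_i^{N_i \times N_i}$ with $\widetilde{A}_i = B_i A_i$. Uniqueness of $B_i$ follows because $A_i$ has a right inverse over $K_i$ (being full row rank), so $B_i = \widetilde{A}_i A_i^\dagger$ for any such right inverse $A_i^\dagger$; invertibility of $B_i$ follows since $\widetilde{A}_i$ also has rank $N_i$, forcing $B_i$ to have rank $N_i$ and hence be invertible (alternatively, run the argument symmetrically to get $A_i = B_i' \widetilde{A}_i$ and conclude $B_i B_i' = I_{N_i}$ by uniqueness).

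Next I would assemble these into the block matrix $B = {\rm diag}(B_1, B_2, \ldots, B_\ell) \in \mathbb{F}^{N \times N}$. Since $A = {\rm diag}(A_1, \ldots, A_\ell)$ and $\widetilde{A} = {\rm diag}(\widetilde{A}_1, \ldots, \widetilde{A}_\ell)$, block-diagonal multiplication gives $\widetilde{A} = BA$, hence $\widetilde{A}^T = A^T B^T$, and therefore $\mathcal{C}\widetilde{A}^T = \mathcal{C} A^T B^T = (\mathcal{C} A^T) B^T$. By Definition~\ref{def change of bases}, the map $\mathbf{x} \mapsto \mathbf{x} B^T$ on $\mathbb{F}^N$ (with the sum-rank metric corresponding to the partition $N = N_1 + \cdots + N_\ell$) is precisely the change of bases $\pi_B$ associated to the invertible matrices $B_1, \ldots, B_\ell$, so $\mathcal{C}\widetilde{A}^T = \pi_B(\mathcal{C} A^T)$ is a change-of-bases code of $\mathcal{C} A^T = \mathcal{C}_{\boldsymbol{\mathcal{L}}}$, which establishes the displayed identity.

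For the final ``in particular'': any two restricted codes of $\mathcal{C}$ over the same support $\boldsymbol{\mathcal{L}}$ arise from two choices of full-rank generator matrices of the spaces $\mathcal{L}_i$, so by the above they differ by the change of bases $\pi_B$, which is a bijective linear sum-rank isometry (changes of bases are linear sum-rank isometries because each $B_i$ is invertible, so right-multiplication by $B_i^T$ preserves $K_i$-rank in the $i$th block, hence preserves the sum-rank weight); this $\pi_B$ is canonical once the two generator matrices are fixed, and one checks that composing the isometries between three such codes is consistent. There is no real obstacle here; the only point requiring a word of care is confirming that $B$ lands in the group of matrices defining a change of bases in the sense of Definition~\ref{def change of bases}, i.e.\ that it is block-diagonal with invertible $K_i$-blocks matching the partition of $N$ — which is immediate from the construction.
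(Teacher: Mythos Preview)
Your proposal is correct and matches the paper's approach: the paper declares this lemma ``trivial'' and gives no proof, so your write-up simply fills in the standard linear-algebra verification (change of generator matrix for a subspace, block-diagonal assembly, and Definition~\ref{def change of bases}) that the author left implicit.
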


\begin{lemma} \label{lemma projecting from pre-shortened}
Let the notation be as in Definition \ref{def restriction and shortening}. The map $ \pi_{A^\prime} : \mathcal{C} \cap \mathcal{V}_{\boldsymbol{\mathcal{L}}} \longrightarrow (\mathcal{C} \cap \mathcal{V}_{\boldsymbol{\mathcal{L}}}) A^{\prime T} $, given by $ \pi_{A^\prime}(\mathbf{c}) = \mathbf{c} A^{\prime T} $ is a linear sum-rank isometry. In particular, it holds that
\begin{equation}
 \dim(\mathcal{C}^{\boldsymbol{\mathcal{L}}}) = \dim(\mathcal{C} \cap \mathcal{V}_{\boldsymbol{\mathcal{L}}}),
\label{eq dimension of pre-shortened and shortened}
\end{equation}
and there is a canonical linear sum-rank isometry between any two shortened codes over the same sum-rank support.
\end{lemma}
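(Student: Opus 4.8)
My plan is to reduce the whole statement to a single claim: for every $ \mathbf{c} = (\mathbf{c}^{(1)}, \ldots, \mathbf{c}^{(\ell)}) \in \mathcal{V}_{\boldsymbol{\mathcal{L}}} $ one has $ {\rm wt}_{SR}(\mathbf{c}A^{\prime T}) = {\rm wt}_{SR}(\mathbf{c}) $. Once this is available, the rest is formal: $ \pi_{A^\prime} $ is $ \mathbb{F} $-linear (it is right multiplication by a matrix) and, by the claim, weight preserving on $ \mathcal{V}_{\boldsymbol{\mathcal{L}}} $, hence on $ \mathcal{C} \cap \mathcal{V}_{\boldsymbol{\mathcal{L}}} $; since $ {\rm wt}_{SR} $ is positive on nonzero vectors, weight preservation forces $ \pi_{A^\prime} $ to be injective on $ \mathcal{C} \cap \mathcal{V}_{\boldsymbol{\mathcal{L}}} $. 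Thus $ \pi_{A^\prime} $ is a bijective linear sum-rank isometry from $ \mathcal{C} \cap \mathcal{V}_{\boldsymbol{\mathcal{L}}} $ onto its image, which is exactly $ (\mathcal{C} \cap \mathcal{V}_{\boldsymbol{\mathcal{L}}}) A^{\prime T} = \mathcal{C}^{\boldsymbol{\mathcal{L}}} $, and injectivity of a linear map immediately gives $ \dim(\mathcal{C}^{\boldsymbol{\mathcal{L}}}) = \dim(\mathcal{C} \cap \mathcal{V}_{\boldsymbol{\mathcal{L}}}) $, which is (\ref{eq dimension of pre-shortened and shortened}).

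To prove the weight identity I would work block by block. Because $ A^\prime $ is block diagonal, $ (\mathbf{c}A^{\prime T})^{(i)} = \mathbf{c}^{(i)} A_i^{\prime T} $, and since $ A_i^{\prime T} $ has entries in $ K_i $ the matrix representation commutes with right multiplication by it, so $ M_{\mathcal{A}_i}(\mathbf{c}^{(i)}A_i^{\prime T}) = M_{\mathcal{A}_i}(\mathbf{c}^{(i)}) A_i^{\prime T} $. Hence it suffices to show $ {\rm Rk}_{K_i}(C_i A_i^{\prime T}) = {\rm Rk}_{K_i}(C_i) $ for each $ i $, where $ C_i := M_{\mathcal{A}_i}(\mathbf{c}^{(i)}) $. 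The hypothesis $ \mathbf{c} \in \mathcal{V}_{\boldsymbol{\mathcal{L}}} $ says precisely that $ {\rm Row}(C_i) \subseteq \mathcal{L}_i $ for all $ i $, so the essential point is the linear-algebra fact that the $ K_i $-linear map $ \mathbf{v} \mapsto \mathbf{v}A_i^{\prime T} $ is injective on $ \mathcal{L}_i $; granting this, $ {\rm Row}(C_i A_i^{\prime T}) $ is the image of $ {\rm Row}(C_i) \subseteq \mathcal{L}_i $ under an injective map, hence of the same $ K_i $-dimension, which is the required rank equality, and summing over $ i $ finishes. This injectivity is where the specific choice $ \boldsymbol{\mathcal{T}} = \boldsymbol{\mathcal{L}}^\prime \oplus \boldsymbol{\mathcal{L}}^\perp $ (rather than an arbitrary complementary of $ \boldsymbol{\mathcal{L}} $) enters, and it is the one genuinely nontrivial step: if $ \mathbf{v} A_i^{\prime T} = \mathbf{0} $ then $ \mathbf{v} $ is orthogonal to every row of $ A_i^\prime $, i.e. $ \mathbf{v} \in \mathcal{L}_i^{\prime \perp} $, so $ \mathbf{v} \in \mathcal{L}_i \cap \mathcal{L}_i^{\prime\perp} $, and this intersection is $ \{ \mathbf{0} \} $ because dualizing $ K_i^{n_i} = \mathcal{L}_i^\prime \oplus \mathcal{L}_i^\perp $ gives $ K_i^{n_i} = \mathcal{L}_i^{\prime\perp} \oplus \mathcal{L}_i $. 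Equivalently, I could invoke Lemma \ref{lemma basis of L prime} together with the preceding lemma on changing generator matrices to conclude that $ A_i^\prime A_i^T $ is invertible, which encodes the same fact. Degenerate blocks with $ N_i = 0 $ (where $ A_i^\prime = * $) are handled directly by the conventions fixed in Section \ref{sec the sum-rank metric and support}.

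For the last assertion I would let $ \boldsymbol{\mathcal{L}}^\prime $ and $ \boldsymbol{\mathcal{L}}^{\prime\prime} $ be two admissible choices as in Definition \ref{def restriction and shortening} (each a complementary of $ \boldsymbol{\mathcal{L}}^\perp $ in $ \boldsymbol{\mathcal{T}} $, with fixed generator matrices $ A^\prime, A^{\prime\prime} $), producing shortened codes $ \mathcal{D}^\prime $ and $ \mathcal{D}^{\prime\prime} $ in $ \mathbb{F}^N $. By the part already proved, $ \pi_{A^\prime} $ and $ \pi_{A^{\prime\prime}} $ are bijective linear sum-rank isometries from the \emph{same} code $ \mathcal{C} \cap \mathcal{V}_{\boldsymbol{\mathcal{L}}} $ onto $ \mathcal{D}^\prime $ and $ \mathcal{D}^{\prime\prime} $, so $ \pi_{A^{\prime\prime}} \circ \pi_{A^\prime}^{-1} : \mathcal{D}^\prime \to \mathcal{D}^{\prime\prime} $ is a bijective linear sum-rank isometry, canonical in that it is the unique map making the triangle with $ \pi_{A^\prime} $ and $ \pi_{A^{\prime\prime}} $ commute. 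To match the sharper phrasing used for restriction, tracking matrices through the isometry $ \phi : \mathbb{F}^N \to \mathcal{V}_{\boldsymbol{\mathcal{L}}} $, $ \mathbf{x}\mapsto \mathbf{x}A $, built in the proof of Item 5 of Theorem \ref{th charact sum-rank support spaces} shows that this map is in fact a change of bases on $ \mathbb{F}^N $, since each $ A_i A_i^{\prime T} $ and $ A_i A_i^{\prime\prime T} $ is invertible by the argument above. Throughout, the only delicate point is that single injectivity claim; everything else is routine bookkeeping with block-diagonal matrices and the erased-block conventions.
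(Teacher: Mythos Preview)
Your argument is correct. The only difference from the paper's proof is one of execution, not of substance. The paper first invokes Lemma~\ref{lemma basis of L prime} to choose generator matrices $A_i$ of $\mathcal{L}_i$ with $A_i' A_i^T = I_{N_i}$, writes any $\mathbf{c}\in\mathcal{V}_{\boldsymbol{\mathcal{L}}}$ as $\mathbf{c}=\mathbf{x}A$ via Item~3 of Theorem~\ref{th charact sum-rank support spaces}, and then observes that $\mathbf{c}A'^T=\mathbf{x}(AA'^T)=\mathbf{x}$; both injectivity and weight preservation then fall out in one line from the fact that $\mathbf{x}\mapsto\mathbf{x}A$ is a sum-rank isometry $\mathbb{F}^N\to\mathcal{V}_{\boldsymbol{\mathcal{L}}}$. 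You instead work directly with the given $A_i'$, showing blockwise that $\mathbf{v}\mapsto\mathbf{v}A_i'^T$ is injective on $\mathcal{L}_i$ from the dual of $K_i^{n_i}=\mathcal{L}_i'\oplus\mathcal{L}_i^\perp$, and deduce rank preservation of $M_{\mathcal{A}_i}(\mathbf{c}^{(i)})$. Your route avoids introducing the auxiliary $A_i$ and makes the role of the complementarity hypothesis $\boldsymbol{\mathcal{T}}=\boldsymbol{\mathcal{L}}'\oplus\boldsymbol{\mathcal{L}}^\perp$ more transparent; the paper's route is shorter once Lemma~\ref{lemma basis of L prime} is in hand, and yields the concrete identity $\pi_{A'}=\phi^{-1}$ (for the isometry $\phi$ of Item~5), which also makes the canonicity of the comparison between two shortenings immediate. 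You mention this alternative explicitly, so the two proofs are really the same idea presented from opposite ends.
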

\begin{proof}
By Lemma \ref{lemma basis of L prime}, we may take the matrices $ A_i \in K_i^{N_i \times n_i} $, for $ i = 1,2, \ldots, \ell $, satisfying $ A^\prime A^T = I $, where $ A = {\rm diag}(A_1, A_2, \ldots, A_\ell) \in \mathbb{F}^{N \times n} $ and $ A^\prime = {\rm diag}(A_1^\prime, A_2^\prime, \ldots, A_\ell^\prime) \in \mathbb{F}^{N \times n} $. If $ \mathbf{c} \in \mathcal{C} \cap \mathcal{V}_{\boldsymbol{\mathcal{L}}} $, then there exists $ \mathbf{x} \in \mathbb{F}^N $ such that $ \mathbf{c} = \mathbf{x} A $ by Item 3 in Theorem \ref{th charact sum-rank support spaces}. If $ \pi_{A^\prime}(\mathbf{c}) = \mathbf{0} $, then
$$ \mathbf{x} = \mathbf{x} (A A^{\prime T}) = \mathbf{c} A^{\prime T} = \mathbf{0}, $$
and then $ \mathbf{c} = \mathbf{x} A = \mathbf{0} $. This means that the map $ \pi_{A^\prime} $ is a vector space isomorphism.

Furthermore, since $ \pi_{A^\prime} (\mathbf{c}) = \mathbf{x} (A A^{\prime T}) = \mathbf{x} $, we have that 
$$ {\rm wt}_{SR}(\mathbf{c}) = {\rm wt}_{SR}(\mathbf{x}) = {\rm wt}_{SR}(\mathbf{x} (A A^{\prime T})) = {\rm wt}_{SR}(\pi_{A^\prime} (\mathbf{c})). $$
Therefore, the map $ \pi_{A^\prime} $ is a sum-rank isometry, and we are done.
\end{proof}

Our definition of restricted and shortened codes is not necessary in its full generality for the Hamming metric as we may always choose the basis $ 1 $ for $ K_i $ or $ 0 $ for $ \{ 0 \} $. For other choices, we obtain isometric codes by a monomial map. Furthermore, when $ \mathbf{n} = \mathbf{1} $, it must hold that $ \boldsymbol{\mathcal{L}}^\prime = \boldsymbol{\mathcal{L}} $, since duals and complementaries coincide and are uniquely defined in that case.

We relate now dimensions of restricted, shortened and pre-shortened codes. These relations recover Forney's duality lemmas \cite[Lemmas 1 \& 2]{forney} when $ \mathbf{n} = \mathbf{1} $. 

\begin{proposition} \label{prop dimensions restricted shortened}
Given a linear code $ \mathcal{C} \subseteq \mathbb{F}^n $ and $ \boldsymbol{\mathcal{L}} \in \mathcal{P}(\mathbf{K}^\mathbf{n}) $, it holds that
\begin{equation*}
\begin{split}
\dim(\mathcal{C}_{\boldsymbol{\mathcal{L}}}) & = {\rm Rk}(\boldsymbol{\mathcal{L}}) - \dim(\mathcal{C}^\perp \cap \mathcal{V}_{\boldsymbol{\mathcal{L}}}) = {\rm Rk}(\boldsymbol{\mathcal{L}}) - \dim((\mathcal{C}^\perp )^{\boldsymbol{\mathcal{L}}}) \\
 & = \dim(\mathcal{C}) - \dim(\mathcal{C} \cap \mathcal{V}_{\boldsymbol{\mathcal{L}}^\perp}) = \dim(\mathcal{C}) - \dim(\mathcal{C} ^{(\boldsymbol{\mathcal{L}}^\perp)}). 
\end{split}
\end{equation*}
\end{proposition}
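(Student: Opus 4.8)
The plan is to deduce all four equalities from the rank--nullity theorem together with the lattice dictionary of Theorem~\ref{th charact sum-rank support spaces} and Lemma~\ref{lemma projecting from pre-shortened}. I would first split the displayed chain into two groups. The two equalities expressing the shortened codes $ (\mathcal{C}^\perp)^{\boldsymbol{\mathcal{L}}} $ and $ \mathcal{C}^{(\boldsymbol{\mathcal{L}}^\perp)} $ through pre-shortenings are immediate: applying (\ref{eq dimension of pre-shortened and shortened}) in Lemma~\ref{lemma projecting from pre-shortened} to the code $ \mathcal{C}^\perp $ with support $ \boldsymbol{\mathcal{L}} $ gives $ \dim((\mathcal{C}^\perp)^{\boldsymbol{\mathcal{L}}}) = \dim(\mathcal{C}^\perp \cap \mathcal{V}_{\boldsymbol{\mathcal{L}}}) $, and applying it to $ \mathcal{C} $ with support $ \boldsymbol{\mathcal{L}}^\perp $ (a legitimate shortening, since $ (\boldsymbol{\mathcal{L}}^\perp)^\perp = \boldsymbol{\mathcal{L}} $ and complementaries of $ \boldsymbol{\mathcal{L}} $ in $ \boldsymbol{\mathcal{T}} $ exist) gives $ \dim(\mathcal{C}^{(\boldsymbol{\mathcal{L}}^\perp)}) = \dim(\mathcal{C} \cap \mathcal{V}_{\boldsymbol{\mathcal{L}}^\perp}) $. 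Hence it remains to prove the two ``outer'' equalities $ \dim(\mathcal{C}_{\boldsymbol{\mathcal{L}}}) = {\rm Rk}(\boldsymbol{\mathcal{L}}) - \dim(\mathcal{C}^\perp \cap \mathcal{V}_{\boldsymbol{\mathcal{L}}}) $ and $ \dim(\mathcal{C}_{\boldsymbol{\mathcal{L}}}) = \dim(\mathcal{C}) - \dim(\mathcal{C} \cap \mathcal{V}_{\boldsymbol{\mathcal{L}}^\perp}) $.

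For the second, I would consider the surjective linear map $ \pi_A \colon \mathcal{C} \longrightarrow \mathcal{C}_{\boldsymbol{\mathcal{L}}} $, $ \mathbf{c} \mapsto \mathbf{c}A^T $, so that $ \dim(\mathcal{C}_{\boldsymbol{\mathcal{L}}}) = \dim(\mathcal{C}) - \dim(\ker \pi_A) $ by rank--nullity. Since $ A = {\rm diag}(A_1, \ldots, A_\ell) $ is block diagonal, $ \mathbf{c}A^T = \mathbf{0} $ if and only if $ \mathbf{c}^{(i)}A_i^T = \mathbf{0} $ for every $ i $. The crucial point --- which I expect to be the only genuine content of the argument --- is that for $ \mathbf{c}^{(i)} \in \mathbb{F}^{n_i} $ one has $ \mathbf{c}^{(i)}A_i^T = \mathbf{0} $ exactly when $ {\rm Supp}(\mathbf{c}^{(i)}) = {\rm Row}(M_{\mathcal{A}_i}(\mathbf{c}^{(i)})) \subseteq \mathcal{L}_i^\perp $, i.e.\ when $ \mathbf{c}^{(i)} \in \mathcal{V}_{\mathcal{L}_i^\perp} $. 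I would prove this by writing $ \mathbf{c}^{(i)} = \sum_j \alpha_j^{(i)} \mathbf{c}_j $, where $ \mathbf{c}_1, \ldots, \mathbf{c}_{m_i} \in K_i^{n_i} $ are the rows of $ M_{\mathcal{A}_i}(\mathbf{c}^{(i)}) $: since each $ \mathbf{c}_j A_i^T \in K_i^{N_i} $ while $ \mathcal{A}_i $ is a $ K_i $-basis of $ \mathbb{F} $, the vector $ \mathbf{c}^{(i)}A_i^T = \sum_j \alpha_j^{(i)}(\mathbf{c}_j A_i^T) $ vanishes iff every $ \mathbf{c}_j A_i^T $ vanishes iff every $ \mathbf{c}_j $ is orthogonal to $ {\rm Row}(A_i) = \mathcal{L}_i $, iff $ \langle \mathbf{c}_1, \ldots, \mathbf{c}_{m_i} \rangle_{K_i} \subseteq \mathcal{L}_i^\perp $. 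Combining this over all $ i $, using Item~2 of Theorem~\ref{th charact sum-rank support spaces} and the componentwise definition of $ \boldsymbol{\mathcal{L}}^\perp $ in Definition~\ref{def basic operations lattice}, gives $ \ker \pi_A = \mathcal{C} \cap (\mathcal{V}_{\mathcal{L}_1^\perp} \times \cdots \times \mathcal{V}_{\mathcal{L}_\ell^\perp}) = \mathcal{C} \cap \mathcal{V}_{\boldsymbol{\mathcal{L}}^\perp} $, which is the claimed identity.

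For the first equality, since $ \mathcal{C}_{\boldsymbol{\mathcal{L}}} \subseteq \mathbb{F}^N $ with $ N = {\rm Rk}(\boldsymbol{\mathcal{L}}) $, it suffices to compute $ \dim(\mathcal{C}_{\boldsymbol{\mathcal{L}}}^\perp) $. A vector $ \mathbf{y} \in \mathbb{F}^N $ lies in $ \mathcal{C}_{\boldsymbol{\mathcal{L}}}^\perp $ iff $ \mathbf{y}(\mathbf{c}A^T)^T = (\mathbf{y}A)\mathbf{c}^T = 0 $ for all $ \mathbf{c} \in \mathcal{C} $, i.e.\ iff $ \mathbf{y}A \in \mathcal{C}^\perp $. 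The linear map $ \mathbb{F}^N \to \mathbb{F}^n $, $ \mathbf{y} \mapsto \mathbf{y}A $, is injective (as $ A $ has full rank $ N $) with image $ {\rm Row}(A) = \mathcal{V}_{\boldsymbol{\mathcal{L}}} $ by Item~3 of Theorem~\ref{th charact sum-rank support spaces}; under it $ \mathcal{C}_{\boldsymbol{\mathcal{L}}}^\perp $ is carried isomorphically onto $ \{ \mathbf{y}A : \mathbf{y}A \in \mathcal{C}^\perp \} = \mathcal{V}_{\boldsymbol{\mathcal{L}}} \cap \mathcal{C}^\perp $ (this map is, up to domain, the sum-rank isometry $ \phi $ constructed in the proof of Theorem~\ref{th charact sum-rank support spaces}). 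Hence $ \dim(\mathcal{C}_{\boldsymbol{\mathcal{L}}}^\perp) = \dim(\mathcal{C}^\perp \cap \mathcal{V}_{\boldsymbol{\mathcal{L}}}) $, so $ \dim(\mathcal{C}_{\boldsymbol{\mathcal{L}}}) = {\rm Rk}(\boldsymbol{\mathcal{L}}) - \dim(\mathcal{C}^\perp \cap \mathcal{V}_{\boldsymbol{\mathcal{L}}}) $, and the proposition follows. The main obstacle is the equivalence $ \mathbf{c}^{(i)}A_i^T = \mathbf{0} \iff {\rm Supp}(\mathbf{c}^{(i)}) \subseteq \mathcal{L}_i^\perp $ of the second paragraph; the rest is rank--nullity and bookkeeping through the lattice isomorphism already established in Proposition~\ref{prop lattice isom sum-rank support spaces} and Corollary~\ref{cor lattice properties of support spaces}.
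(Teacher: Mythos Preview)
Your proposal is correct, and the heart of the argument --- identifying $\ker(\pi_A) = \mathcal{V}_{\boldsymbol{\mathcal{L}}^\perp}$ via the expansion $\mathbf{c}^{(i)} = \sum_j \alpha_j^{(i)} \mathbf{c}_j$ over the $K_i$-basis $\mathcal{A}_i$ --- is exactly what the paper does, as is the appeal to (\ref{eq dimension of pre-shortened and shortened}) for the two shortened-code equalities.

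There is one genuine but minor difference. For the equality $\dim(\mathcal{C}_{\boldsymbol{\mathcal{L}}}) = {\rm Rk}(\boldsymbol{\mathcal{L}}) - \dim(\mathcal{C}^\perp \cap \mathcal{V}_{\boldsymbol{\mathcal{L}}})$, the paper does not compute $(\mathcal{C}_{\boldsymbol{\mathcal{L}}})^\perp$ directly; instead it deduces this equality from the already-proved one $\dim(\mathcal{C}_{\boldsymbol{\mathcal{L}}}) = \dim(\mathcal{C}) - \dim(\mathcal{C} \cap \mathcal{V}_{\boldsymbol{\mathcal{L}}^\perp})$ by the standard subspace identity $\dim(U \cap W) + \dim(U^\perp \cap W^\perp) = \dim U + \dim W - n$, combined with $\mathcal{V}_{\boldsymbol{\mathcal{L}}}^\perp = \mathcal{V}_{\boldsymbol{\mathcal{L}}^\perp}$ and $\dim(\mathcal{V}_{\boldsymbol{\mathcal{L}}}) = {\rm Rk}(\boldsymbol{\mathcal{L}})$ from Corollary~\ref{cor lattice properties of support spaces}. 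Your route --- showing that $\mathbf{y} \mapsto \mathbf{y}A$ carries $(\mathcal{C}_{\boldsymbol{\mathcal{L}}})^\perp$ isomorphically onto $\mathcal{C}^\perp \cap \mathcal{V}_{\boldsymbol{\mathcal{L}}}$ --- is equally valid and has the bonus of essentially containing the proof of the subsequent duality corollary $(\mathcal{C}_{\boldsymbol{\mathcal{L}}})^\perp = (\mathcal{C}^\perp)^{\boldsymbol{\mathcal{L}}}$; the paper's route keeps this proposition purely dimension-theoretic and defers the explicit identification of the dual to that separate statement.
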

\begin{proof}
Fix $ i = 1,2,\ldots,\ell $. Given $ \mathbf{c}^{(i)} \in \mathbb{F}^{n_i} $, write it as $ \mathbf{c}^{(i)} = \sum_{j=1}^{m_i} \alpha_j^{(i)} \mathbf{c}_j^{(i)} $, where $ \mathbf{c}_j^{(i)} \in K_i^{n_i} $, for $ j = 1,2, \ldots, m_i $, as in (\ref{eq def matrix representation map}). By Equation (\ref{eq def rank support space}), it is straightforward to check that $ \mathbf{c}^{(i)} A_i^T = \mathbf{0} $ if, and only if, $ \mathbf{c}^{(i)} \in \mathcal{V}_{\mathcal{L}_i^\perp} $. Hence by Definition \ref{def sum-rank supp spaces}, $ \mathbf{c} A^T = \mathbf{0} $ if, and only if, $ {\rm Supp}(\mathbf{c}) \subseteq \boldsymbol{\mathcal{L}}^\perp $, for $ \mathbf{c} \in \mathbb{F}^n $. Thus we have that
$$ \ker (\pi_A) = \mathcal{V}_{\boldsymbol{\mathcal{L}}^\perp}. $$
Therefore we conclude that
$$ \dim(\mathcal{C}_{\boldsymbol{\mathcal{L}}}) = \dim(\mathcal{C}) - \dim(\mathcal{C} \cap \ker(\pi_A)) = \dim(\mathcal{C}) - \dim(\mathcal{C} \cap \mathcal{V}_{\boldsymbol{\mathcal{L}}^\perp}) $$
by the first isomorphism theorem. Now, using the dimension formulas and using that $ \dim(\mathcal{V}_{\boldsymbol{\mathcal{L}}}) = {\rm Rk}(\boldsymbol{\mathcal{L}}) $ and $ \mathcal{V}_{\boldsymbol{\mathcal{L}}}^\perp = \mathcal{V}_{\boldsymbol{\mathcal{L}}^\perp} $ by Corollary \ref{cor lattice properties of support spaces}, we deduce that
$$ {\rm Rk}(\boldsymbol{\mathcal{L}}) - \dim(\mathcal{C}^\perp \cap \mathcal{V}_{\boldsymbol{\mathcal{L}}}) = \dim(\mathcal{C}) - \dim(\mathcal{C} \cap \mathcal{V}_{\boldsymbol{\mathcal{L}}^\perp}). $$
\end{proof}

We now show that restriction and shortening are dual operations. Observe however that this result requires using the right bases for $ \boldsymbol{\mathcal{L}} $ and $ \boldsymbol{\mathcal{L}}^\prime $. In the Hamming-metric case, this can be seen as using inverse monomial maps when defining restriction and shortening as multiplication by diagonal matrices.

\begin{corollary}
Let the notation be as in Definition \ref{def restriction and shortening}, and take full-rank generator matrices $ A_i, A_i^\prime \in K_i^{N_i \times n_i} $ of $ \mathcal{L}_i, \mathcal{L}_i^\prime \subseteq K_i^{n_i} $, respectively, such that $ A^\prime_i A_i^T = I $, for $ i = 1,2, \ldots, \ell $, which exist by Lemma \ref{lemma basis of L prime}. Then it holds that
$$ (\mathcal{C}_{\boldsymbol{\mathcal{L}}})^\perp \equiv (\mathcal{C} A^T)^\perp = (\mathcal{C}^\perp \cap \mathcal{V}_{\boldsymbol{\mathcal{L}}}) A^{\prime T} \equiv (\mathcal{C}^\perp)^{\boldsymbol{\mathcal{L}}} . $$
By exchanging $ \mathcal{C} $ and $ \mathcal{C}^\perp $, it also holds that $ (\mathcal{C}^\perp)_{\boldsymbol{\mathcal{L}}} = (\mathcal{C}^{\boldsymbol{\mathcal{L}}})^\perp $.
\end{corollary}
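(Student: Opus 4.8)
The plan is to reduce the whole statement to the single genuine equality $(\mathcal{C} A^T)^\perp = (\mathcal{C}^\perp \cap \mathcal{V}_{\boldsymbol{\mathcal{L}}}) A^{\prime T}$ inside $\mathbb{F}^N$, where $\perp$ is taken with respect to the standard bilinear form $\langle \mathbf{x}, \mathbf{y} \rangle = \mathbf{x} \mathbf{y}^T$. The two outer identifications in the displayed formula are immediate: $\mathcal{C}_{\boldsymbol{\mathcal{L}}} = \mathcal{C} A^T$ by Definition \ref{def restriction and shortening}, and choosing the same complement $\boldsymbol{\mathcal{L}}^\prime$ of $\boldsymbol{\mathcal{L}}^\perp$ and the same generator matrix $A^\prime$ to build the shortening of $\mathcal{C}^\perp$ over $\boldsymbol{\mathcal{L}}$ gives $(\mathcal{C}^\perp)^{\boldsymbol{\mathcal{L}}} = (\mathcal{C}^\perp \cap \mathcal{V}_{\boldsymbol{\mathcal{L}}}) A^{\prime T}$ exactly (other choices yield canonically sum-rank isometric codes by the two preceding lemmas, which is what the symbol $\equiv$ records).

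The structural input is the hypothesis $A_i^\prime A_i^T = I_{N_i}$, which, since $A$ and $A^\prime$ are block-diagonal and $N = {\rm Rk}(\boldsymbol{\mathcal{L}}) = N_1 + \cdots + N_\ell$, gives $A^\prime A^T = I_N$ and hence, transposing, $A A^{\prime T} = I_N$. From this I isolate two facts: (i) for any $\mathbf{u} \in \mathbb{F}^N$ the vector $\mathbf{u} A$ lies in $\mathcal{V}_{\boldsymbol{\mathcal{L}}}$ — the rows of $A$ span $\mathcal{V}_{\boldsymbol{\mathcal{L}}}$ by Item 3 in Theorem \ref{th charact sum-rank support spaces} — and satisfies $(\mathbf{u} A) A^{\prime T} = \mathbf{u}$; (ii) if $\mathbf{z} = \mathbf{w} A \in \mathcal{V}_{\boldsymbol{\mathcal{L}}}$ then $\mathbf{z} A^{\prime T} = \mathbf{w}$, hence $\mathbf{z} A^{\prime T} A = \mathbf{z}$. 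Equivalently, $\pi_{A^\prime}$ restricts to a linear isomorphism $\mathcal{V}_{\boldsymbol{\mathcal{L}}} \to \mathbb{F}^N$ with inverse $\mathbf{u} \mapsto \mathbf{u} A$, which is the $\mathcal{C} = \mathbb{F}^n$ case of Lemma \ref{lemma projecting from pre-shortened}.

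The two inclusions are then one-line bilinear-form computations. For ``$\supseteq$'': if $\mathbf{z} \in \mathcal{C}^\perp \cap \mathcal{V}_{\boldsymbol{\mathcal{L}}}$ and $\mathbf{c} \in \mathcal{C}$, then $\langle \mathbf{z} A^{\prime T}, \mathbf{c} A^T \rangle = \mathbf{z} A^{\prime T} A \mathbf{c}^T = \mathbf{z} \mathbf{c}^T = 0$ by (ii), so $\mathbf{z} A^{\prime T} \in (\mathcal{C} A^T)^\perp$. For ``$\subseteq$'': if $\mathbf{u} \in (\mathcal{C} A^T)^\perp$, set $\mathbf{z} = \mathbf{u} A \in \mathcal{V}_{\boldsymbol{\mathcal{L}}}$; then for $\mathbf{c} \in \mathcal{C}$ one has $\langle \mathbf{z}, \mathbf{c} \rangle = \mathbf{u} A \mathbf{c}^T = \langle \mathbf{u}, \mathbf{c} A^T \rangle = 0$, hence $\mathbf{z} \in \mathcal{C}^\perp \cap \mathcal{V}_{\boldsymbol{\mathcal{L}}}$, and by (i) $\mathbf{u} = \mathbf{z} A^{\prime T} \in (\mathcal{C}^\perp \cap \mathcal{V}_{\boldsymbol{\mathcal{L}}}) A^{\prime T}$. (Alternatively, one inclusion together with the dimension formulas of Proposition \ref{prop dimensions restricted shortened} and Lemma \ref{lemma projecting from pre-shortened}, via $\dim (\mathcal{C} A^T)^\perp = N - \dim \mathcal{C}_{\boldsymbol{\mathcal{L}}} = \dim(\mathcal{C}^\perp \cap \mathcal{V}_{\boldsymbol{\mathcal{L}}})$, would close the argument.) Finally, the second assertion follows by applying the proven equality to $\mathcal{C}^\perp$ in place of $\mathcal{C}$: this reads $((\mathcal{C}^\perp)_{\boldsymbol{\mathcal{L}}})^\perp = (\mathcal{C} \cap \mathcal{V}_{\boldsymbol{\mathcal{L}}}) A^{\prime T} = \mathcal{C}^{\boldsymbol{\mathcal{L}}}$, and dualizing (the double dual of a linear code over a field is itself) yields $(\mathcal{C}^\perp)_{\boldsymbol{\mathcal{L}}} = (\mathcal{C}^{\boldsymbol{\mathcal{L}}})^\perp$.

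The only delicate point — more bookkeeping than obstacle — is tracking the transposes and the distinct roles of $A$ and $A^\prime$, and observing that the normalization $A^\prime A^T = I_N$ is precisely the adjointness that makes $\pi_A$ and $\pi_{A^\prime}$ mutually dual projections; once facts (i)--(ii) are in hand, nothing beyond Item 3 in Theorem \ref{th charact sum-rank support spaces}, Corollary \ref{cor lattice properties of support spaces}, Lemma \ref{lemma basis of L prime} and the earlier dimension counts is required.
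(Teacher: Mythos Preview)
Your proof is correct and follows essentially the same route as the paper: both establish the inclusion $(\mathcal{C}^\perp \cap \mathcal{V}_{\boldsymbol{\mathcal{L}}}) A^{\prime T} \subseteq (\mathcal{C} A^T)^\perp$ via the bilinear computation exploiting $A^\prime A^T = I_N$; the only difference is that the paper closes with the dimension count from Proposition~\ref{prop dimensions restricted shortened} (which you list as your alternative), whereas you also supply the reverse inclusion directly via $\mathbf{u} \mapsto \mathbf{u} A$.
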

\begin{proof}
Take $ \mathbf{c} \in \mathcal{C} $ and $ \mathbf{d} \in \mathcal{C}^\perp \cap \mathcal{V}_{\boldsymbol{\mathcal{L}}} $. Since $ \mathbf{d} \in \mathcal{V}_{\boldsymbol{\mathcal{L}}} $, there exists $ \mathbf{x} \in \mathbb{F}^N $ such that $ \mathbf{d} = \mathbf{x} A $, where $ N = {\rm Rk}(\boldsymbol{\mathcal{L}}) $. Then it holds that
$$ (\mathbf{c} A^T) (\mathbf{d} A^{\prime T})^T = \mathbf{c} (A^T A^\prime A^T) \mathbf{x}^T = \mathbf{c} (\mathbf{x} A)^T = \mathbf{c} \mathbf{d}^T = \mathbf{0}. $$
Therefore, we have that $ (\mathcal{C} A^T)^\perp \subseteq (\mathcal{C}^\perp \cap \mathcal{V}_{\boldsymbol{\mathcal{L}}}) A^{\prime T} $. Now by computing dimensions using Proposition \ref{prop dimensions restricted shortened}, both are equal.
\end{proof}

We conclude by estimating the parameters of restricted and shortened codes. Constructing new codes from old is of special interest for the sum-rank metric, since not many constructions are known and they seem to not be straightforward, as mentioned in Section \ref{sec intro}. Better estimates on dimensions will be given in Proposition \ref{prop MSRD rank for intermediate info leakage}.

\begin{corollary} \label{cor parameters restricted and shortened}
Given a linear code $ \mathcal{C} \subseteq \mathbb{F}^n $ and $ \boldsymbol{\mathcal{L}} \in \mathcal{P}(\mathbf{K}^\mathbf{n}) $, the following hold:
\begin{enumerate}
\item
$ \dim(\mathcal{C}_{\boldsymbol{\mathcal{L}}}) \geq \dim(\mathcal{C}) - (n - {\rm Rk}(\boldsymbol{\mathcal{L}})) $ and $ {\rm d}_{SR}(\mathcal{C}_{\boldsymbol{\mathcal{L}}}) \geq {\rm d}_{SR}(\mathcal{C}) - (n - {\rm Rk}(\boldsymbol{\mathcal{L}})) $.
\item
$ \dim(\mathcal{C}^{\boldsymbol{\mathcal{L}}}) \geq \dim(\mathcal{C}) - (n - {\rm Rk}(\boldsymbol{\mathcal{L}})) $ and $ {\rm d}_{SR}(\mathcal{C}^{\boldsymbol{\mathcal{L}}}) \geq {\rm d}_{SR}(\mathcal{C}) $.
\end{enumerate}
\end{corollary}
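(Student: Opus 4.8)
The plan is to dispatch the dimension statements directly from the identities already proved, handle the shortening distance bound via the isometry to a subcode, and reserve the real work for the restriction distance bound, which is the only nontrivial part. For the dimensions: by Proposition~\ref{prop dimensions restricted shortened} we have $\dim(\mathcal{C}_{\boldsymbol{\mathcal{L}}}) = \dim(\mathcal{C}) - \dim(\mathcal{C} \cap \mathcal{V}_{\boldsymbol{\mathcal{L}}^\perp})$, and since $\dim(\mathcal{C} \cap \mathcal{V}_{\boldsymbol{\mathcal{L}}^\perp}) \leq \dim(\mathcal{V}_{\boldsymbol{\mathcal{L}}^\perp}) = {\rm Rk}(\boldsymbol{\mathcal{L}}^\perp) = n - {\rm Rk}(\boldsymbol{\mathcal{L}})$ (using Corollary~\ref{cor lattice properties of support spaces}, Item~1, and $\dim_{K_i}(\mathcal{L}_i) + \dim_{K_i}(\mathcal{L}_i^\perp) = n_i$), the first inequality in Item~1 follows. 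Dually, by Lemma~\ref{lemma projecting from pre-shortened} we have $\dim(\mathcal{C}^{\boldsymbol{\mathcal{L}}}) = \dim(\mathcal{C} \cap \mathcal{V}_{\boldsymbol{\mathcal{L}}})$, and the standard dimension inequality $\dim(\mathcal{C} \cap \mathcal{V}_{\boldsymbol{\mathcal{L}}}) \geq \dim(\mathcal{C}) + \dim(\mathcal{V}_{\boldsymbol{\mathcal{L}}}) - n = \dim(\mathcal{C}) - (n - {\rm Rk}(\boldsymbol{\mathcal{L}}))$ gives the first inequality in Item~2.

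For the distance bound in Item~2, recall from Lemma~\ref{lemma projecting from pre-shortened} that $\pi_{A^\prime}$ is a linear sum-rank isometry from $\mathcal{C} \cap \mathcal{V}_{\boldsymbol{\mathcal{L}}}$ onto $\mathcal{C}^{\boldsymbol{\mathcal{L}}}$; hence $\mathcal{C}^{\boldsymbol{\mathcal{L}}}$ and the subcode $\mathcal{C} \cap \mathcal{V}_{\boldsymbol{\mathcal{L}}} \subseteq \mathcal{C}$ have the same minimum sum-rank distance, and a subcode never has smaller minimum distance than the ambient code, so $ {\rm d}_{SR}(\mathcal{C}^{\boldsymbol{\mathcal{L}}}) = {\rm d}_{SR}(\mathcal{C} \cap \mathcal{V}_{\boldsymbol{\mathcal{L}}}) \geq {\rm d}_{SR}(\mathcal{C}) $ (the case $\mathcal{C} \cap \mathcal{V}_{\boldsymbol{\mathcal{L}}} = \{\mathbf{0}\}$ being vacuous).

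The remaining bound, $ {\rm d}_{SR}(\mathcal{C}_{\boldsymbol{\mathcal{L}}}) \geq {\rm d}_{SR}(\mathcal{C}) - (n - {\rm Rk}(\boldsymbol{\mathcal{L}})) $, is the main point. I would take a nonzero $\mathbf{y} \in \mathcal{C}_{\boldsymbol{\mathcal{L}}} = \mathcal{C} A^T$, write $\mathbf{y} = \mathbf{c} A^T$ with $\mathbf{c} = (\mathbf{c}^{(1)}, \ldots, \mathbf{c}^{(\ell)}) \in \mathcal{C} \setminus \{\mathbf{0}\}$, and estimate ${\rm wt}_{SR}(\mathbf{y})$ blockwise. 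The key observations are that the matrix representation map commutes with right multiplication, i.e.\ $M_{\mathcal{A}_i}(\mathbf{c}^{(i)} A_i^T) = M_{\mathcal{A}_i}(\mathbf{c}^{(i)}) A_i^T$, and that the $K_i$-linear map $\mathbf{v} \mapsto \mathbf{v} A_i^T$ on $K_i^{n_i}$ has kernel exactly $\mathcal{L}_i^\perp$, which has dimension $n_i - N_i$ (since the rows of $A_i$ span $\mathcal{L}_i$). Restricting this map to the row space of $M_{\mathcal{A}_i}(\mathbf{c}^{(i)})$ decreases its dimension by at most $n_i - N_i$, so $ {\rm Rk}_{K_i}(M_{\mathcal{A}_i}(\mathbf{c}^{(i)} A_i^T)) \geq {\rm Rk}_{K_i}(M_{\mathcal{A}_i}(\mathbf{c}^{(i)})) - (n_i - N_i)$. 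Summing over $i$ and using $\sum_{i=1}^\ell (n_i - N_i) = n - {\rm Rk}(\boldsymbol{\mathcal{L}})$ yields $ {\rm wt}_{SR}(\mathbf{y}) \geq {\rm wt}_{SR}(\mathbf{c}) - (n - {\rm Rk}(\boldsymbol{\mathcal{L}})) \geq {\rm d}_{SR}(\mathcal{C}) - (n - {\rm Rk}(\boldsymbol{\mathcal{L}}))$; minimizing over $\mathbf{y}$ finishes the argument. The only delicate steps are verifying the commutation identity from the definition of $M_{\mathcal{A}_i}$ in~(\ref{eq def matrix representation map}) and identifying the kernel of $\mathbf{v} \mapsto \mathbf{v} A_i^T$ with $\mathcal{L}_i^\perp$; everything else is bookkeeping, and edge cases (empty or zero codes, non-positive right-hand sides) are trivial.
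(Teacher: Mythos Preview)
Your proof is correct and follows the same approach as the paper, which dispatches the dimension bounds via Proposition~\ref{prop dimensions restricted shortened}, the shortening distance bound via Lemma~\ref{lemma projecting from pre-shortened}, and says the restriction distance bound ``follows from the definitions.'' Your blockwise rank-nullity argument for $ {\rm d}_{SR}(\mathcal{C}_{\boldsymbol{\mathcal{L}}}) $ is exactly the natural unpacking of that phrase, and everything you wrote checks out.
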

\begin{proof}
The bounds on dimensions follow from Proposition \ref{prop dimensions restricted shortened}, the lower bound on $ {\rm d}_{SR}(\mathcal{C}_{\boldsymbol{\mathcal{L}}}) $ follows from the definitions, and that on $ {\rm d}_{SR}(\mathcal{C}^{\boldsymbol{\mathcal{L}}}) $ follows from Lemma \ref{lemma projecting from pre-shortened}.
\end{proof}

\section{Applications} \label{sec some applications}

In this section, we present several applications of sum-rank supports and support spaces. We will focus on generalized sum-rank weights (Subsection \ref{subsec gen sum-rank weights}), properties of and operations on maximum sum-rank distance (MSRD) codes (Subsection \ref{subsec MSRD codes}), and sum-rank effective length and degenerate codes (Subsection \ref{subsec sum-rank degenerate}).

\subsection{Generalized sum-rank weights} \label{subsec gen sum-rank weights}

In this subsection, we give the definition and main properties of (relative) generalized sum-rank weights. We will present equivalent definitions in terms of pre-shortened, shortened and restricted codes, and in terms of sum-rank weights of subspaces (Definition \ref{def sum-rank supp of spaces}). In the case $ \mathbf{n} = \mathbf{1} $, equivalent definitions exist seeing the underlying linear code as a projective system \cite[Sec. II]{tsfasman} or a matroid \cite{barg-matroid} (based on \cite[Th. 2]{wei}), or in terms of anticodes \cite{ravagnaniweights}, among others. The $ q $-analog of a matroid has been recently introduced in \cite{q-matroid}, where its connection with linear codes in the case $ \ell = 1 $ was given. Anticodes when $ \ell = 1 $ were used in \cite{oggier, ravagnaniweights}. Analogous reinterpretations of generalized sum-rank weights are left open. We will briefly discuss their application to measuring information leakage in multishot matrix-multiplicative wire-tap channels \cite{secure-multishot}.

We start by defining relative generalized sum-rank weights, and their dual notion.

\begin{definition} [\textbf{Relative generalized sum-rank weights}]
Given nested linear codes $ \mathcal{C}_2 \subsetneqq \mathcal{C}_1 \subseteq \mathbb{F}^n $, we define their $ r $th relative generalized sum-rank weight as
\begin{equation*}
\begin{split}
{\rm d}_{SR, r}(\mathcal{C}_1, \mathcal{C}_2) = \min \left\lbrace \right. & {\rm Rk}(\boldsymbol{\mathcal{L}}) \mid \boldsymbol{\mathcal{L}} \in \mathcal{P}(\mathbf{K}^\mathbf{n}), \textrm{ and} \\
& \left. \dim (\mathcal{C}_1 \cap \mathcal{V}_{\boldsymbol{\mathcal{L}}}) - \dim (\mathcal{C}_2 \cap \mathcal{V}_{\boldsymbol{\mathcal{L}}}) \geq r \right\rbrace,
\end{split}
\end{equation*}
for $ r = 1,2, \ldots, \dim(\mathcal{C}_1 / \mathcal{C}_2) $. We also define the parameter 
\begin{equation*}
\begin{split}
{\rm K}_{SR, \mu}(\mathcal{C}_1, \mathcal{C}_2) = \max \left\lbrace \right. & \dim (\mathcal{C}_1 \cap \mathcal{V}_{\boldsymbol{\mathcal{L}}}) - \dim (\mathcal{C}_2 \cap \mathcal{V}_{\boldsymbol{\mathcal{L}}}) \mid \\ & \boldsymbol{\mathcal{L}} \in \mathcal{P}(\mathbf{K}^\mathbf{n}), \textrm{ and } \left. {\rm Rk}(\boldsymbol{\mathcal{L}}) = \mu \right\rbrace,
\end{split}
\end{equation*}
for $ \mu = 0,1, \ldots, n $. For a single linear code $ \mathcal{C} \subseteq \mathbb{F}^n $, we define its $ r $th generalized sum-rank weight as $ {\rm d}_{SR, r}(\mathcal{C}) = {\rm d}_{SR, r}(\mathcal{C}, \{ \mathbf{0} \} ) $, and we define $ {\rm K}_{SR, \mu}(\mathcal{C}) = {\rm K}_{SR, \mu}(\mathcal{C}, \{ \mathbf{0} \} ) $, for $ r = 1,2, \ldots, \dim(\mathcal{C}) $ and $ \mu = 0,1, \ldots, n $.
\end{definition}

Since sum-rank support spaces extend both Hamming-support and rank-support spaces (Definition \ref{def sum-rank supp spaces}), we deduce automatically that generalized sum-rank weights particularize to generalized Hamming weights \cite{helleseth79, luo, wei} and generalized rank weights \cite{rgrw, oggier} when $ \mathbf{n} = \mathbf{1} $ and $ \ell = 1 $, respectively.

We have defined relative generalized sum-rank weights in terms of pre-shortened codes. This was the original approach in the case $ \ell = 1 $ \cite[Def. 2]{rgrw}. Just as in the case $ \mathbf{n} = \mathbf{1} $ (see \cite[Sec. 2]{forney}, \cite[Sec. 3]{luo} and \cite[Th. 2]{wei}), we may give equivalent definitions in terms of restricted and shortened codes. This is due to the following identities, which follow directly from Proposition \ref{prop dimensions restricted shortened}. Observe that restricted and shortened codes are the key description in the matroidal approach to generalized weights (see \cite{barg-matroid, q-matroid}).

\begin{proposition} \label{prop dimensions for equivalent gen weights}
For linear codes $ \mathcal{C}_2 \subsetneqq \mathcal{C}_1 \subseteq \mathbb{F}^n $ and for $ \boldsymbol{\mathcal{L}} \in \mathcal{P}(\mathbf{K}^\mathbf{n}) $, it holds that
\begin{equation*}
\begin{split}
\dim (\mathcal{C}_1 \cap \mathcal{V}_{\boldsymbol{\mathcal{L}}}) - \dim (\mathcal{C}_2 \cap \mathcal{V}_{\boldsymbol{\mathcal{L}}}) & = \dim (\mathcal{C}_1^{\boldsymbol{\mathcal{L}}}) - \dim (\mathcal{C}_2^{\boldsymbol{\mathcal{L}}}) \\
 & = \dim ((\mathcal{C}_2^\perp)_{\boldsymbol{\mathcal{L}}}) - \dim ((\mathcal{C}_1^\perp)_{\boldsymbol{\mathcal{L}}}). 
\end{split}
\end{equation*}
\end{proposition}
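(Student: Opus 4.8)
The plan is to reduce everything to the single dimension formula in Proposition~\ref{prop dimensions restricted shortened}, applied to suitable codes and supports. Recall that Proposition~\ref{prop dimensions restricted shortened} gives, for any linear code $ \mathcal{C} $ and any $ \boldsymbol{\mathcal{M}} \in \mathcal{P}(\mathbf{K}^\mathbf{n}) $,
$$ \dim(\mathcal{C}_{\boldsymbol{\mathcal{M}}}) = {\rm Rk}(\boldsymbol{\mathcal{M}}) - \dim(\mathcal{C}^\perp \cap \mathcal{V}_{\boldsymbol{\mathcal{M}}}) = \dim(\mathcal{C}) - \dim(\mathcal{C} \cap \mathcal{V}_{\boldsymbol{\mathcal{M}}^\perp}), $$
together with the identifications $ \dim((\mathcal{C}^\perp)^{\boldsymbol{\mathcal{M}}}) = \dim(\mathcal{C}^\perp \cap \mathcal{V}_{\boldsymbol{\mathcal{M}}}) $ and $ \dim(\mathcal{C}^{(\boldsymbol{\mathcal{M}}^\perp)}) = \dim(\mathcal{C} \cap \mathcal{V}_{\boldsymbol{\mathcal{M}}^\perp}) $ coming from Lemma~\ref{lemma projecting from pre-shortened}.

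First I would prove the first equality, $ \dim(\mathcal{C}_1 \cap \mathcal{V}_{\boldsymbol{\mathcal{L}}}) - \dim(\mathcal{C}_2 \cap \mathcal{V}_{\boldsymbol{\mathcal{L}}}) = \dim(\mathcal{C}_1^{\boldsymbol{\mathcal{L}}}) - \dim(\mathcal{C}_2^{\boldsymbol{\mathcal{L}}}) $. This is immediate: by Equation~(\ref{eq dimension of pre-shortened and shortened}) in Lemma~\ref{lemma projecting from pre-shortened}, $ \dim(\mathcal{C}_j^{\boldsymbol{\mathcal{L}}}) = \dim(\mathcal{C}_j \cap \mathcal{V}_{\boldsymbol{\mathcal{L}}}) $ for $ j = 1, 2 $, so the two sides are equal term by term. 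This does not even need nestedness of the codes.

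Next I would prove the second equality, $ \dim(\mathcal{C}_1 \cap \mathcal{V}_{\boldsymbol{\mathcal{L}}}) - \dim(\mathcal{C}_2 \cap \mathcal{V}_{\boldsymbol{\mathcal{L}}}) = \dim((\mathcal{C}_2^\perp)_{\boldsymbol{\mathcal{L}}}) - \dim((\mathcal{C}_1^\perp)_{\boldsymbol{\mathcal{L}}}) $. The idea is to apply the first displayed form of Proposition~\ref{prop dimensions restricted shortened} to the code $ \mathcal{C} = \mathcal{C}_j^\perp $ with $ \boldsymbol{\mathcal{M}} = \boldsymbol{\mathcal{L}} $, using $ (\mathcal{C}_j^\perp)^\perp = \mathcal{C}_j $. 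This yields $ \dim((\mathcal{C}_j^\perp)_{\boldsymbol{\mathcal{L}}}) = {\rm Rk}(\boldsymbol{\mathcal{L}}) - \dim(\mathcal{C}_j \cap \mathcal{V}_{\boldsymbol{\mathcal{L}}}) $ for $ j = 1, 2 $. Subtracting the $ j=1 $ identity from the $ j=2 $ identity, the common term $ {\rm Rk}(\boldsymbol{\mathcal{L}}) $ cancels and we obtain exactly $ \dim((\mathcal{C}_2^\perp)_{\boldsymbol{\mathcal{L}}}) - \dim((\mathcal{C}_1^\perp)_{\boldsymbol{\mathcal{L}}}) = \dim(\mathcal{C}_1 \cap \mathcal{V}_{\boldsymbol{\mathcal{L}}}) - \dim(\mathcal{C}_2 \cap \mathcal{V}_{\boldsymbol{\mathcal{L}}}) $, as desired.

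There is essentially no hard step here; the whole proposition is a bookkeeping consequence of Proposition~\ref{prop dimensions restricted shortened} and Lemma~\ref{lemma projecting from pre-shortened}, and the only thing to be careful about is the cancellation of the ambient-rank terms and the correct use of the dual identity $ (\mathcal{C}_j^\perp)^\perp = \mathcal{C}_j $. One could alternatively note that the whole statement is a formal diagram-chase on the four quantities $ \dim(\mathcal{C}_1 \cap \mathcal{V}_{\boldsymbol{\mathcal{L}}}) $, $ \dim(\mathcal{C}_2 \cap \mathcal{V}_{\boldsymbol{\mathcal{L}}}) $, $ \dim((\mathcal{C}_1^\perp)_{\boldsymbol{\mathcal{L}}}) $, $ \dim((\mathcal{C}_2^\perp)_{\boldsymbol{\mathcal{L}}}) $ linked pairwise by Proposition~\ref{prop dimensions restricted shortened}, but writing it out as above is cleanest. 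The nestedness hypothesis $ \mathcal{C}_2 \subsetneqq \mathcal{C}_1 $ is not actually used in the proof; it is only there so that the differences are non-negative and the statement is meaningful in context.
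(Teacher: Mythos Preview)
Your proof is correct and follows exactly the approach indicated in the paper, which simply states that the identities follow directly from Proposition~\ref{prop dimensions restricted shortened}. Your explicit unpacking via Lemma~\ref{lemma projecting from pre-shortened} for the first equality and the substitution $\mathcal{C}=\mathcal{C}_j^\perp$ in Proposition~\ref{prop dimensions restricted shortened} for the second is precisely what is intended.
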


The use of nested linear code pairs is a usual technique to protect messages from both noise and information leakage to a wire-tapper. This technique goes back to \cite{ozarow} for $ \mathbf{n} = \mathbf{1} $, used in \cite{rgrw, silva-universal} for $ \ell = 1 $, and recently used in \cite{secure-multishot} for the general case. 

Informally, using nested linear codes $ \mathcal{C}_2 \subsetneqq \mathcal{C}_1 \subseteq \mathbb{F}^n $ for encoding as in \cite[Def. 3]{secure-multishot}, and integers $ \mu = 0,1,2, \ldots, n $ and $ r = 1,2, \ldots, \dim(\mathcal{C}_1 / \mathcal{C}_2) $, it holds that
\begin{enumerate}
\item
$ \mu = {\rm d}_{SR, r}(\mathcal{C}_2^\perp, \mathcal{C}_1^\perp) $ is the minimum number of links that an adversary needs to wire-tap in order to obtain at least $ r $ units of information (number of bits multiplied by $ \log_2|\mathbb{F}| $) of the secret message,
\item
$ r = {\rm K}_{SR, \mu}(\mathcal{C}_2^\perp, \mathcal{C}_1^\perp) $ is the maximum information (number of bits multiplied by $ \log_2|\mathbb{F}| $) about the secret message that can be obtained by wire-tapping at most $ \mu $ links of the network,
\end{enumerate}
on $ \ell $ shots of a linearly coded network, with $ n_i $ outgoing links in the $ i $th shot, that realizes a matrix-multiplicative wire-tap channel. This result follows directly from \cite[Lemma 1]{secure-multishot}. Further refinements as in \cite[Subsec. VII-A]{similarities} are left to the reader.

We now give the monotonicity properties of generalized sum-rank weights. The following result recovers \cite[Th. 6.1]{helleseth79}, \cite[Th. 1]{wei} and \cite[Prop. 1 \& 2]{luo} when $ \mathbf{n} = \mathbf{1} $, and it recovers \cite[Th. 1 \& Lemma 4]{rgrw} when $ \ell = 1 $.

\begin{lemma}[\textbf{Monotonicity}] \label{lemma monotonicity}
Given nested linear codes $ \mathcal{C}_2 \subsetneqq \mathcal{C}_1 \subseteq \mathbb{F}^n $ with $ k = \dim(\mathcal{C}_1/ \mathcal{C}_2) $, it holds that $ {\rm K}_{SR, 0}(\mathcal{C}_1, \mathcal{C}_2) = 0 $, $ {\rm K}_{SR, n}(\mathcal{C}_1, \mathcal{C}_2) = k $, $ {\rm d}_{SR, 1}(\mathcal{C}_1, \mathcal{C}_2) \geq 1 $, $ {\rm d}_{SR, k}(\mathcal{C}_1, \mathcal{C}_2) \leq n - \dim(\mathcal{C}_2) $,
$$ 0 \leq {\rm K}_{SR, \mu + 1}(\mathcal{C}_1, \mathcal{C}_2) - {\rm K}_{SR, \mu}(\mathcal{C}_1, \mathcal{C}_2) \leq 1, \textrm{ and} $$
$$ {\rm d}_{SR,r}(\mathcal{C}_1, \mathcal{C}_2) < {\rm d}_{SR,r+1}(\mathcal{C}_1, \mathcal{C}_2), $$
for $ \mu = 0,1,2, \ldots, n $ and $ r = 1,2, \ldots, k-1 $.
\end{lemma}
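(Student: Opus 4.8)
The plan is to prove the statement for the relative parameters $ {\rm K}_{SR,\mu} $ and $ {\rm d}_{SR,r} $ directly, exploiting the lattice isomorphism of Proposition \ref{prop lattice isom sum-rank support spaces} and the dimension identities in Proposition \ref{prop dimensions for equivalent gen weights}. First I would establish the boundary values. For $ {\rm K}_{SR,0} $, the only $ \boldsymbol{\mathcal{L}} $ with $ {\rm Rk}(\boldsymbol{\mathcal{L}}) = 0 $ is $ \boldsymbol{\mathcal{L}} = \mathbf{0} $, and then $ \mathcal{V}_{\boldsymbol{\mathcal{L}}} = \{\mathbf{0}\} $ by Item 6 of Corollary \ref{cor lattice properties of support spaces}, so both intersections are trivial and the difference is $ 0 $. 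For $ {\rm K}_{SR,n} $, the only $ \boldsymbol{\mathcal{L}} $ with $ {\rm Rk}(\boldsymbol{\mathcal{L}}) = n $ is $ \boldsymbol{\mathcal{T}} $, whence $ \mathcal{V}_{\boldsymbol{\mathcal{T}}} = \mathbb{F}^n $ and the difference equals $ \dim(\mathcal{C}_1) - \dim(\mathcal{C}_2) = k $. The bound $ {\rm d}_{SR,1} \geq 1 $ is immediate since $ {\rm Rk}(\boldsymbol{\mathcal{L}}) = 0 $ forces the difference to be $ 0 < 1 $, so any admissible $ \boldsymbol{\mathcal{L}} $ has rank at least $ 1 $. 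For $ {\rm d}_{SR,k} \leq n - \dim(\mathcal{C}_2) $, I would take $ \boldsymbol{\mathcal{L}} = {\rm Supp}(\mathcal{C}_1) $; then $ \mathcal{C}_1 \subseteq \mathcal{V}_{\boldsymbol{\mathcal{L}}} $ by Corollary \ref{cor min support space containing}, so the difference equals $ k $, and $ {\rm Rk}(\boldsymbol{\mathcal{L}}) = \dim(\mathcal{V}_{\boldsymbol{\mathcal{L}}}) \leq n - \dim(\mathcal{C}_2) $ follows because $ \mathcal{C}_2 \cap \mathcal{V}_{\boldsymbol{\mathcal{L}}}^\perp $ must be $ \{\mathbf{0}\} $ — indeed $ \mathcal{V}_{\boldsymbol{\mathcal{L}}} \supseteq \mathcal{C}_1 \supseteq \mathcal{C}_2 $ and $ \mathcal{C}_2 \cap \mathcal{V}_{\boldsymbol{\mathcal{L}}}^\perp \subseteq \mathcal{C}_2 \cap \mathcal{C}_2^\perp $ is not quite right, so instead I would argue $ \mathcal{C}_2 \subseteq \mathcal{V}_{\boldsymbol{\mathcal{L}}} $ and use that any subspace contained in $ \mathcal{V}_{\boldsymbol{\mathcal{L}}} $ has dimension at most $ \dim(\mathcal{V}_{\boldsymbol{\mathcal{L}}}) $; combining with the fact that the difference attains $ k $ at a space of this rank gives the inequality after noting $ \dim(\mathcal{V}_{\boldsymbol{\mathcal{L}}}) \ge \dim(\mathcal{C}_2) $ is automatic and the real content is $ {\rm d}_{SR,k} \le {\rm wt}_{SR}(\mathcal{C}_1) \le n-\dim(\mathcal{C}_2) $ via a complement argument.

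Next I would handle the two monotonicity inequalities. For $ {\rm K}_{SR,\mu+1} - {\rm K}_{SR,\mu} \in \{0,1\} $: nonnegativity is clear once I show $ {\rm K}_{SR,\mu} $ is nondecreasing, which follows because any $ \boldsymbol{\mathcal{L}} $ of rank $ \mu $ can be enlarged to some $ \boldsymbol{\mathcal{L}}' \supseteq \boldsymbol{\mathcal{L}} $ of rank $ \mu+1 $ (possible as long as $ \mu < n $), and enlarging $ \boldsymbol{\mathcal{L}} $ only enlarges $ \mathcal{V}_{\boldsymbol{\mathcal{L}}} $ by Corollary \ref{cor lattice properties of support spaces}, Item 2, hence does not decrease either intersection dimension, but more carefully it does not decrease the difference — this needs a short submodularity-type argument. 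For the upper bound $ \le 1 $: given $ \boldsymbol{\mathcal{L}}' $ of rank $ \mu+1 $ achieving $ {\rm K}_{SR,\mu+1} $, I would find $ \boldsymbol{\mathcal{L}} \subseteq \boldsymbol{\mathcal{L}}' $ of rank $ \mu $ (remove one dimension from some $ \mathcal{L}_i' $); then $ \mathcal{V}_{\boldsymbol{\mathcal{L}}} $ is a hyperplane in $ \mathcal{V}_{\boldsymbol{\mathcal{L}}'} $ since $ \dim \mathcal{V}_{\boldsymbol{\mathcal{L}}'} - \dim\mathcal{V}_{\boldsymbol{\mathcal{L}}} = 1 $ by Item 1 of Corollary \ref{cor lattice properties of support spaces}, and intersecting with a hyperplane drops the dimension of any subspace by at most $ 1 $; so $ \dim(\mathcal{C}_j \cap \mathcal{V}_{\boldsymbol{\mathcal{L}}'}) - \dim(\mathcal{C}_j \cap \mathcal{V}_{\boldsymbol{\mathcal{L}}}) \in \{0,1\} $ for each $ j $, and the difference of differences is at most $ 1 $.

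For the strict inequality $ {\rm d}_{SR,r} < {\rm d}_{SR,r+1} $: let $ \boldsymbol{\mathcal{L}} $ attain $ {\rm d}_{SR,r+1} $, so $ \dim(\mathcal{C}_1 \cap \mathcal{V}_{\boldsymbol{\mathcal{L}}}) - \dim(\mathcal{C}_2 \cap \mathcal{V}_{\boldsymbol{\mathcal{L}}}) \ge r+1 $. I would then pass to a strictly smaller $ \boldsymbol{\mathcal{L}}' \subsetneq \boldsymbol{\mathcal{L}} $ of rank $ {\rm Rk}(\boldsymbol{\mathcal{L}}) - 1 $, chosen (using the hyperplane observation above and the isomorphism of lattices) so that the difference drops by at most $ 1 $, hence remains $ \ge r $; this yields $ {\rm d}_{SR,r} \le {\rm Rk}(\boldsymbol{\mathcal{L}}') = {\rm d}_{SR,r+1} - 1 < {\rm d}_{SR,r+1} $. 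The subtle point, and what I expect to be the main obstacle, is guaranteeing such a removal that decreases the difference by \emph{at most} one rather than potentially more: removing a dimension from $ \boldsymbol{\mathcal{L}} $ cuts $ \mathcal{V}_{\boldsymbol{\mathcal{L}}} $ down by exactly one dimension, and since $ \mathcal{C}_2 \cap \mathcal{V}_{\boldsymbol{\mathcal{L}}} \subseteq \mathcal{C}_1 \cap \mathcal{V}_{\boldsymbol{\mathcal{L}}} $ a single hyperplane section changes the two intersection dimensions by $ 0 $ or $ 1 $ each, so the difference changes by at most $ 1 $ in absolute value — and one can always choose the removed dimension so that it does not \emph{increase} the difference, which handles the $ {\rm K} $ side too. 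Assembling these pieces with the equivalences from Proposition \ref{prop dimensions for equivalent gen weights} (to phrase everything consistently whether in terms of pre-shortened, shortened, or restricted codes) completes the proof.
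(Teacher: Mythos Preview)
Your route differs from the paper's. The paper's proof declares everything ``trivial from the definitions'' except the strict monotonicity $ {\rm d}_{SR,r} < {\rm d}_{SR,r+1} $ and the bound $ {\rm d}_{SR,k} \le n - \dim(\mathcal{C}_2) $, and it defers both: the former to the reduction-to-Hamming machinery at the end of the subsection (Theorem~\ref{th connection GHW and GSRW} and Theorem~\ref{th comparison bounds}), the latter to the Singleton bound in Proposition~\ref{prop singleton bound} with $ r=k $. Your direct hyperplane argument for the $K$-monotonicity and for $ {\rm d}_{SR,r} < {\rm d}_{SR,r+1} $ is correct and more self-contained. The key point --- that passing from $ \boldsymbol{\mathcal{L}} $ to a rank-one-larger $ \boldsymbol{\mathcal{L}}' $ changes each $ \dim(\mathcal{C}_j \cap \mathcal{V}_{\boldsymbol{\mathcal{L}}}) $ by $0$ or $1$, and that $ \mathcal{C}_2 \subseteq \mathcal{C}_1 $ forces the $ \mathcal{C}_1 $-increment whenever the $ \mathcal{C}_2 $-increment occurs --- already shows the relative difference changes by $0$ or $1$, so no special choice of the removed dimension is needed.

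There is, however, a genuine gap in your treatment of $ {\rm d}_{SR,k}(\mathcal{C}_1,\mathcal{C}_2) \le n - \dim(\mathcal{C}_2) $. The chain $ {\rm d}_{SR,k} \le {\rm wt}_{SR}(\mathcal{C}_1) \le n - \dim(\mathcal{C}_2) $ breaks at the second inequality: it is false in general. For instance, take $ \mathcal{C}_1 = \mathbb{F}^n $ and any $ \mathcal{C}_2 \neq \{\mathbf{0}\} $; then $ {\rm wt}_{SR}(\mathcal{C}_1) = n > n - \dim(\mathcal{C}_2) $. Choosing $ \boldsymbol{\mathcal{L}} = {\rm Supp}(\mathcal{C}_1) $ only yields $ {\rm d}_{SR,k} \le {\rm wt}_{SR}(\mathcal{C}_1) $, which can be $ n $. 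A correct direct argument requires an $ \boldsymbol{\mathcal{L}} $ of rank $ n - \dim(\mathcal{C}_2) $ with $ \mathcal{C}_2 \cap \mathcal{V}_{\boldsymbol{\mathcal{L}}} = \{\mathbf{0}\} $ (the dimension formula then forces $ \dim(\mathcal{C}_1 \cap \mathcal{V}_{\boldsymbol{\mathcal{L}}}) \ge \dim(\mathcal{C}_1) - \dim(\mathcal{C}_2) = k $); this is the existence of a sum-rank ``information set'' for $ \mathcal{C}_2 $, which is not immediate and is not supplied by your ``complement argument''. The paper avoids this by invoking Proposition~\ref{prop singleton bound}.
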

\begin{proof}
All statements are trivial from the definitions, except for the monotonicity of relative generalized sum-rank weights, which will be proven at the end of this Subsection, and the inequality $ {\rm d}_{SR, k}(\mathcal{C}_1, \mathcal{C}_2) \leq n - \dim(\mathcal{C}_2) $, which will be given in Proposition \ref{prop singleton bound}.
\end{proof}

We may connect both types of parameters by the following proposition, which recovers \cite[Th. 3]{luo} and \cite[Prop. 2]{luo} (see also \cite[Sec. 3]{forney}) when $ \mathbf{n} = \mathbf{1} $, and recovers \cite[Lemma 4]{rgrw} when $ \ell = 1 $.

\begin{lemma} \label{lemma connection between profiles}
Given nested linear codes $ \mathcal{C}_2 \subsetneqq \mathcal{C}_1 \subseteq \mathbb{F}^n $, it holds that
\begin{equation*}
\begin{split}
{\rm K}_{SR, \mu}(\mathcal{C}_1, \mathcal{C}_2) & = \max \{ r \in [k] \mid {\rm d}_{SR,r}(\mathcal{C}_1, \mathcal{C}_2) \leq \mu \}, \\
{\rm d}_{SR,r}(\mathcal{C}_1, \mathcal{C}_2) & = \min \{ \mu \in [n] \mid {\rm K}_{SR, \mu}(\mathcal{C}_1, \mathcal{C}_2) \geq r \} \\
& = \min \{ \mu \in [n] \mid {\rm K}_{SR, \mu}(\mathcal{C}_1, \mathcal{C}_2) = r \},
\end{split}
\end{equation*}
for $ \mu = 0, 1,2, \ldots, n $ and $ r = 1,2, \ldots, k $, where $ k = \dim(\mathcal{C}_1 / \mathcal{C}_2) $.
\end{lemma}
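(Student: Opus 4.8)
The plan is to reduce everything to the elementary correspondence between a nondecreasing step function and its generalized inverse. First I would introduce the single quantity $ f(\boldsymbol{\mathcal{L}}) = \dim(\mathcal{C}_1 \cap \mathcal{V}_{\boldsymbol{\mathcal{L}}}) - \dim(\mathcal{C}_2 \cap \mathcal{V}_{\boldsymbol{\mathcal{L}}}) $, for $ \boldsymbol{\mathcal{L}} \in \mathcal{P}(\mathbf{K}^\mathbf{n}) $, and note that, since $ \mathcal{C}_2 \subseteq \mathcal{C}_1 $, the integer $ f(\boldsymbol{\mathcal{L}}) $ equals the dimension of the image of $ \mathcal{C}_1 \cap \mathcal{V}_{\boldsymbol{\mathcal{L}}} $ under the quotient map $ \mathcal{C}_1 \to \mathcal{C}_1/\mathcal{C}_2 $ (the kernel of that map restricted to $ \mathcal{C}_1 \cap \mathcal{V}_{\boldsymbol{\mathcal{L}}} $ being exactly $ \mathcal{C}_2 \cap \mathcal{V}_{\boldsymbol{\mathcal{L}}} $). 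From this description and Corollary~\ref{cor lattice properties of support spaces} (so that $ \boldsymbol{\mathcal{L}} \subseteq \boldsymbol{\mathcal{L}}^\prime $ implies $ \mathcal{V}_{\boldsymbol{\mathcal{L}}} \subseteq \mathcal{V}_{\boldsymbol{\mathcal{L}}^\prime} $), $ f $ is monotone nondecreasing in $ \boldsymbol{\mathcal{L}} $, takes values in $ \{0,1,\ldots,k\} $, and satisfies $ f(\boldsymbol{\mathcal{T}}) = \dim(\mathcal{C}_1) - \dim(\mathcal{C}_2) = k $ because $ \mathcal{V}_{\boldsymbol{\mathcal{T}}} = \mathbb{F}^n $.

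Next I would observe that the hybrid lattice $ \mathcal{P}(\mathbf{K}^\mathbf{n}) $ can be climbed one unit of rank at a time: when $ {\rm Rk}(\boldsymbol{\mathcal{L}}) = \mu < n $, some coordinate space $ \mathcal{L}_i $ is a proper subspace of $ K_i^{n_i} $, and adjoining one vector of $ K_i^{n_i} \setminus \mathcal{L}_i $ to it produces $ \boldsymbol{\mathcal{L}}^\prime \supseteq \boldsymbol{\mathcal{L}} $ with $ {\rm Rk}(\boldsymbol{\mathcal{L}}^\prime) = \mu+1 $. Iterating this and using monotonicity of $ f $ gives the reformulation $ {\rm K}_{SR,\mu}(\mathcal{C}_1, \mathcal{C}_2) = \max\{ f(\boldsymbol{\mathcal{L}}) \mid {\rm Rk}(\boldsymbol{\mathcal{L}}) \le \mu \} $ (the condition $ {\rm Rk}(\boldsymbol{\mathcal{L}}) = \mu $ in the definition may be relaxed to $ \le \mu $), while $ {\rm d}_{SR,r}(\mathcal{C}_1, \mathcal{C}_2) = \min\{ {\rm Rk}(\boldsymbol{\mathcal{L}}) \mid f(\boldsymbol{\mathcal{L}}) \ge r \} $ is already the definition and is finite for $ r \in [k] $ since $ f(\boldsymbol{\mathcal{T}}) = k $. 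Comparing the two formulas yields the one equivalence from which everything follows: for $ r \in [k] $ and $ \mu \in \{0,1,\ldots,n\} $,
\[ {\rm K}_{SR,\mu}(\mathcal{C}_1, \mathcal{C}_2) \ge r \iff {\rm d}_{SR,r}(\mathcal{C}_1, \mathcal{C}_2) \le \mu . \]

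From this I would read off the three identities. First: if $ R := {\rm K}_{SR,\mu}(\mathcal{C}_1, \mathcal{C}_2) \ge 1 $ then $ R \le k $ and $ {\rm d}_{SR,R} \le \mu $, so $ R $ is the greatest $ r \in [k] $ with $ {\rm d}_{SR,r} \le \mu $; if $ R = 0 $ that set is empty, giving $ \max\emptyset = 0 = R $. Second: $ 1 \le {\rm d}_{SR,r} \le n $ (lower bound since $ f(\mathbf{0}) = 0 < r $, upper bound since $ f(\boldsymbol{\mathcal{T}}) = k \ge r $), and the equivalence exhibits $ {\rm d}_{SR,r} $ as the least $ \mu \in [n] $ with $ {\rm K}_{SR,\mu} \ge r $. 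Third: it suffices to show $ {\rm K}_{SR,{\rm d}_{SR,r}}(\mathcal{C}_1, \mathcal{C}_2) = r $; it is $ \ge r $ by the equivalence, and if it were $ \ge r+1 $ then (the case $ r=k $ being immediate from $ {\rm K}_{SR,\mu} \le k $, and for $ r<k $ using $ 0 \le {\rm K}_{SR,\mu+1}-{\rm K}_{SR,\mu} \le 1 $ from Lemma~\ref{lemma monotonicity} together with $ {\rm d}_{SR,r} \ge 1 $) I would get $ {\rm K}_{SR,{\rm d}_{SR,r}-1} \ge r $, hence $ {\rm d}_{SR,r} \le {\rm d}_{SR,r}-1 $, a contradiction.

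The one genuinely nontrivial step is the second one: climbing $ \mathcal{P}(\mathbf{K}^\mathbf{n}) $ by single units of rank and combining this with the monotonicity of $ f $; this is exactly where the lattice structure of Subsection~\ref{subsec lattice unions spaces} and Corollary~\ref{cor lattice properties of support spaces} enter, and everything else is routine bookkeeping. I would also be careful to invoke only the non-strict monotonicity of $ {\rm K}_{SR,\mu} $ (which is trivial from the definitions and recorded in Lemma~\ref{lemma monotonicity}), so that the argument does not circularly rely on the strict monotonicity of $ {\rm d}_{SR,r} $, whose proof is deferred.
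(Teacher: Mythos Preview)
Your proof is correct and follows essentially the same route as the paper: the paper's own argument defers the first two equalities to \cite[Th.~2]{luo} (which is exactly the Galois-correspondence/equivalence ${\rm K}_{SR,\mu}\ge r \Leftrightarrow {\rm d}_{SR,r}\le\mu$ you extract) and obtains the third from the step-one property of ${\rm K}_{SR,\mu}$ in Lemma~\ref{lemma monotonicity}. Your write-up simply makes explicit the one genuinely sum-rank-specific point that the paper leaves implicit, namely that the lattice $\mathcal{P}(\mathbf{K}^\mathbf{n})$ can be climbed one unit of rank at a time, and you are right to note that only the (trivial) properties of ${\rm K}_{SR,\mu}$ from Lemma~\ref{lemma monotonicity} are used, so there is no circularity with the deferred strict monotonicity of ${\rm d}_{SR,r}$.
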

\begin{proof}
The first two equalities follow easily from the definitions and can be proven exactly as in \cite[Th. 2]{luo}. The last equality follows from the monotonicity and extremal properties of $ {\rm K}_{SR, \mu}(\mathcal{C}_1, \mathcal{C}_2) $ from Lemma \ref{lemma monotonicity}.
\end{proof}

We next give a description of relative generalized sum-rank weights in terms of sum-rank weights of subspaces of the corresponding dimension (recall Definition \ref{def sum-rank supp of spaces}). This is analogous to the original definition of generalized Hamming weights by Wei \cite{wei}. The case $ \ell = 1 $ was first given in \cite[Cor. 4.4]{slides} (see also \cite[Th. 3]{similarities} for relative weights). This result justifies the term \textit{generalized weights}.

\begin{proposition} \label{proposition as minimum rank weights}
Given nested linear codes $ \mathcal{C}_2 \subsetneqq \mathcal{C}_1 \subseteq \mathbb{F}^n $, it holds that
\begin{equation*}
\begin{split}
{\rm d}_{SR,r}(\mathcal{C}_1, \mathcal{C}_2) = \min \{ & {\rm wt}_{SR}(\mathcal{D}) \mid \mathcal{D} \subseteq \mathcal{C}_1, \mathcal{D} \cap \mathcal{C}_2 = \{ \mathbf{0} \}, \\
 & \dim(\mathcal{D}) = r \},
\end{split}
\end{equation*}
for $ r = 1,2, \ldots, \dim(\mathcal{C}_1 / \mathcal{C}_2) $. In particular, the first relative generalized sum-rank weight of the pair is its relative minimum sum-rank distance, given by
$$ {\rm d}_{SR} (\mathcal{C}_1, \mathcal{C}_2) = \min \{ {\rm wt}_{SR}(\mathbf{c}) \mid \mathbf{c} \in \mathcal{C}_1 \setminus \mathcal{C}_2 \}. $$
By choosing $ \mathcal{C}_2 = \{ \mathbf{0} \} $, the same result holds for generalized sum-rank weights and the minimum sum-rank distance.
\end{proposition}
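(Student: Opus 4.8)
The plan is to prove the two inequalities ($\leq$ and $\geq$) that together give the claimed equality, and then derive the special cases ($r=1$, and $\mathcal{C}_2 = \{\mathbf{0}\}$) directly. Throughout I will use the lattice isomorphism from Proposition \ref{prop lattice isom sum-rank support spaces} and Corollary \ref{cor min support space containing}, which tells us that for any subspace $\mathcal{D} \subseteq \mathbb{F}^n$, the smallest sum-rank support space containing $\mathcal{D}$ is $\mathcal{V}_{\boldsymbol{\mathcal{L}}}$ with $\boldsymbol{\mathcal{L}} = {\rm Supp}(\mathcal{D})$, and that ${\rm wt}_{SR}(\mathcal{D}) = {\rm Rk}({\rm Supp}(\mathcal{D})) = \dim(\mathcal{V}_{\boldsymbol{\mathcal{L}}})$ by Corollary \ref{cor lattice properties of support spaces}, Item 1.

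For the inequality $\leq$: given any $\boldsymbol{\mathcal{L}} \in \mathcal{P}(\mathbf{K}^\mathbf{n})$ with $\dim(\mathcal{C}_1 \cap \mathcal{V}_{\boldsymbol{\mathcal{L}}}) - \dim(\mathcal{C}_2 \cap \mathcal{V}_{\boldsymbol{\mathcal{L}}}) \geq r$, I will exhibit a subspace $\mathcal{D} \subseteq \mathcal{C}_1 \cap \mathcal{V}_{\boldsymbol{\mathcal{L}}}$ of dimension exactly $r$ with $\mathcal{D} \cap \mathcal{C}_2 = \{\mathbf{0}\}$. This is a routine linear-algebra extraction: complete a basis of $\mathcal{C}_2 \cap \mathcal{V}_{\boldsymbol{\mathcal{L}}}$ to a basis of $\mathcal{C}_1 \cap \mathcal{V}_{\boldsymbol{\mathcal{L}}}$ and take the span of $r$ of the new basis vectors. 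Since $\mathcal{D} \subseteq \mathcal{V}_{\boldsymbol{\mathcal{L}}}$ we have ${\rm Supp}(\mathcal{D}) \subseteq \boldsymbol{\mathcal{L}}$, hence ${\rm wt}_{SR}(\mathcal{D}) = {\rm Rk}({\rm Supp}(\mathcal{D})) \leq {\rm Rk}(\boldsymbol{\mathcal{L}})$. Taking the minimum over all valid $\boldsymbol{\mathcal{L}}$ on the right, and noting $\mathcal{D}$ is a feasible subspace for the right-hand minimization, yields the right-hand side $\geq {\rm d}_{SR,r}(\mathcal{C}_1,\mathcal{C}_2)$ of the target; reading this the other way gives $\leq$ for the stated equality.

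For the inequality $\geq$: given a subspace $\mathcal{D} \subseteq \mathcal{C}_1$ with $\mathcal{D} \cap \mathcal{C}_2 = \{\mathbf{0}\}$ and $\dim(\mathcal{D}) = r$, set $\boldsymbol{\mathcal{L}} = {\rm Supp}(\mathcal{D})$, so that ${\rm Rk}(\boldsymbol{\mathcal{L}}) = {\rm wt}_{SR}(\mathcal{D})$. The key point is that $\mathcal{D} \subseteq \mathcal{V}_{\boldsymbol{\mathcal{L}}}$, hence $\mathcal{D} \subseteq \mathcal{C}_1 \cap \mathcal{V}_{\boldsymbol{\mathcal{L}}}$, and $\mathcal{D}$ intersects $\mathcal{C}_2$ trivially, so a fortiori $\mathcal{D}$ intersects $\mathcal{C}_2 \cap \mathcal{V}_{\boldsymbol{\mathcal{L}}}$ trivially; therefore $\dim(\mathcal{C}_1 \cap \mathcal{V}_{\boldsymbol{\mathcal{L}}}) - \dim(\mathcal{C}_2 \cap \mathcal{V}_{\boldsymbol{\mathcal{L}}}) \geq \dim(\mathcal{D}) = r$. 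Thus $\boldsymbol{\mathcal{L}}$ is feasible in the definition of ${\rm d}_{SR,r}(\mathcal{C}_1,\mathcal{C}_2)$, giving ${\rm d}_{SR,r}(\mathcal{C}_1,\mathcal{C}_2) \leq {\rm Rk}(\boldsymbol{\mathcal{L}}) = {\rm wt}_{SR}(\mathcal{D})$; minimizing over such $\mathcal{D}$ gives the remaining inequality.

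Finally, the $r=1$ statement follows because a $1$-dimensional $\mathcal{D} = \langle \mathbf{c}\rangle$ with $\langle\mathbf{c}\rangle \cap \mathcal{C}_2 = \{\mathbf{0}\}$ is exactly $\mathbf{c} \in \mathcal{C}_1 \setminus \mathcal{C}_2$, and ${\rm wt}_{SR}(\langle\mathbf{c}\rangle) = {\rm wt}_{SR}(\mathbf{c})$ by the remark after Definition \ref{def sum-rank supp of spaces}; the generalized (non-relative) case is the specialization $\mathcal{C}_2 = \{\mathbf{0}\}$. I expect no serious obstacle here: the only subtlety is making sure the dimension-difference inequality $\dim(\mathcal{C}_1 \cap \mathcal{V}_{\boldsymbol{\mathcal{L}}}) - \dim(\mathcal{C}_2 \cap \mathcal{V}_{\boldsymbol{\mathcal{L}}}) \geq \dim \mathcal{D}$ is used correctly in both directions, and that the extraction of a dimension-$r$ complement in the $\leq$ direction respects $\mathcal{D} \cap \mathcal{C}_2 = \{\mathbf{0}\}$ — a standard basis-completion argument.
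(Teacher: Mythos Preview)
Your proof is correct and follows the same approach the paper defers to (the paper simply cites \cite[Th.~3]{similarities} and \cite[Prop.~12]{rgmw} and says the argument translates \emph{mutatis mutandis} using the lattice correspondence of Sections~\ref{sec the sum-rank metric and support} and~\ref{sec sum-rank support spaces}). One minor exposition issue: your labeling of the two directions is swapped---the first argument (extracting $\mathcal{D}$ from a feasible $\boldsymbol{\mathcal{L}}$) actually yields $\min\{{\rm wt}_{SR}(\mathcal{D})\} \leq {\rm d}_{SR,r}(\mathcal{C}_1,\mathcal{C}_2)$, and the second yields the reverse; since both are present, the equality follows regardless.
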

\begin{proof}
With the definitions and results from Sections \ref{sec the sum-rank metric and support} and \ref{sec sum-rank support spaces}, the proof can be translated mutatis mutandis from those in \cite[Th. 3]{similarities} or \cite[Prop. 12]{rgmw}.
\end{proof}

We now exend Wei's duality theorem, given in \cite[Th. 3]{wei} when $ \mathbf{n} = \mathbf{1} $ and in \cite{jerome} when $ \ell = 1 $.

\begin{theorem}[\textbf{Wei duality}] \label{th GSRW duality}
Let $ \mathcal{C} \subseteq \mathbb{F}^n $ be a $ k $-dimensional linear code. If we denote $ d_r = {\rm d}_{SR,r}(\mathcal{C}) $, for $ r = 1,2, \ldots,k $, and $ d_s^\perp = {\rm d}_{SR,s}(\mathcal{C}^\perp) $, for $ s = 1,2, \ldots, n-k $, then it holds that
$$ \{ 1,2, \ldots, n \} = \{ d_1, d_2, \ldots, d_k \} \cup $$
$$ \{ n+1-d_1^\perp, n+1-d_2^\perp, \ldots, n+1-d_{n-k}^\perp \}, $$
where the union is disjoint. In particular, the generalized sum-rank weights of $ \mathcal{C} $ uniquely determine those of $ \mathcal{C}^\perp $.
\end{theorem}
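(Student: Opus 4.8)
The plan is to reduce Wei's duality for the sum-rank metric to Wei's duality for the classical Hamming-metric case, applied to an auxiliary object built from the lattice $ \mathcal{P}(\mathbf{K}^\mathbf{n}) $. The key observation is that everything in sight — generalized sum-rank weights, dimensions of pre-shortened codes, the duality $ \mathcal{V}_{\boldsymbol{\mathcal{L}}}^\perp = \mathcal{V}_{\boldsymbol{\mathcal{L}}^\perp} $ — is expressed purely in terms of the lattice $ \mathcal{P}(\mathbf{K}^\mathbf{n}) $, its rank function $ {\rm Rk} $, and the perp operation $ \boldsymbol{\mathcal{L}} \mapsto \boldsymbol{\mathcal{L}}^\perp $. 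So I would first record the ``combinatorial shadow'' of the setup: for a $ k $-dimensional code $ \mathcal{C} $, define the dimension/rank function $ \mu \mapsto {\rm K}_{SR,\mu}(\mathcal{C}) $ and, dually, $ \mu \mapsto {\rm K}_{SR,\mu}(\mathcal{C}^\perp) $, and use Corollary \ref{cor lattice properties of support spaces} (items 1 and 5) to relate them. Concretely, for any $ \boldsymbol{\mathcal{L}} \in \mathcal{P}(\mathbf{K}^\mathbf{n}) $ with $ {\rm Rk}(\boldsymbol{\mathcal{L}}) = \mu $, the dimension formula gives
\begin{equation*}
\dim(\mathcal{C} \cap \mathcal{V}_{\boldsymbol{\mathcal{L}}}) + \dim(\mathcal{C}^\perp \cap \mathcal{V}_{\boldsymbol{\mathcal{L}}^\perp}) = \dim(\mathcal{C} \cap \mathcal{V}_{\boldsymbol{\mathcal{L}}}) + \left( (n - \mu) - \dim(\mathcal{C} + \mathcal{V}_{\boldsymbol{\mathcal{L}}})^\perp + \dots \right),
\end{equation*}
which after simplification (using $ \dim(\mathcal{V}_{\boldsymbol{\mathcal{L}}}) = \mu $, $ \mathcal{V}_{\boldsymbol{\mathcal{L}}}^\perp = \mathcal{V}_{\boldsymbol{\mathcal{L}}^\perp} $, and the standard $ \dim(\mathcal{C}\cap\mathcal{V}) = \dim\mathcal{C} + \dim\mathcal{V} - \dim(\mathcal{C}+\mathcal{V}) $) yields a clean relation of the form $ \dim(\mathcal{C}^\perp \cap \mathcal{V}_{\boldsymbol{\mathcal{L}}^\perp}) = \dim(\mathcal{C}\cap\mathcal{V}_{\boldsymbol{\mathcal{L}}}) - \dim(\mathcal{C}) + (n-\mu) $. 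Since $ \boldsymbol{\mathcal{L}} \mapsto \boldsymbol{\mathcal{L}}^\perp $ is a rank-complementing bijection on $ \mathcal{P}(\mathbf{K}^\mathbf{n}) $ (it sends rank $ \mu $ to rank $ n-\mu $), maximizing over $ \boldsymbol{\mathcal{L}} $ of rank $ \mu $ on the left corresponds to maximizing over $ \boldsymbol{\mathcal{L}}^\perp $ of rank $ n - \mu $ on the right, giving
\begin{equation*}
{\rm K}_{SR, n-\mu}(\mathcal{C}^\perp) = {\rm K}_{SR, \mu}(\mathcal{C}) - k + (n - \mu).
\end{equation*}

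With this identity in hand, the second step is purely formal. By Lemma \ref{lemma connection between profiles}, the generalized weights $ d_r $ and $ d_s^\perp $ are recovered from the ``K-profiles'' as the jump points: $ d_r = \min\{\mu \mid {\rm K}_{SR,\mu}(\mathcal{C}) = r\} $ and $ d_s^\perp = \min\{\mu \mid {\rm K}_{SR,\mu}(\mathcal{C}^\perp) = s\} $, and by Lemma \ref{lemma monotonicity} each profile is a non-decreasing step function increasing by steps of at most $ 1 $, from $ 0 $ at $ \mu = 0 $ to $ k $ (resp. $ n-k $) at $ \mu = n $. The set $ \{d_1, \dots, d_k\} $ is precisely the set of $ \mu \in \{1,\dots,n\} $ at which $ {\rm K}_{SR,\mu}(\mathcal{C}) $ strictly exceeds $ {\rm K}_{SR,\mu-1}(\mathcal{C}) $, i.e. the set of ``jump positions'' of the first profile. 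Plugging the boxed identity into the definition of the jump positions of $ {\rm K}_{SR, \cdot}(\mathcal{C}^\perp) $ shows that $ \mu $ is a jump of the $ \mathcal{C}^\perp $-profile if and only if $ n+1-\mu $ is \emph{not} a jump of the $ \mathcal{C} $-profile; equivalently, $ \{n+1-d_s^\perp\}_s $ is exactly the complement in $ \{1,\dots,n\} $ of $ \{d_r\}_r $. Since the two profiles have total jump heights $ k $ and $ n - k $ summing to $ n $, the complement is accounted for exactly, with no overlap, which is the claimed disjoint union.

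The main obstacle — really the only non-bookkeeping point — is establishing the boxed duality identity ${\rm K}_{SR,n-\mu}(\mathcal{C}^\perp) = {\rm K}_{SR,\mu}(\mathcal{C}) - k + (n-\mu)$ cleanly, and in particular verifying that the perp map on $\mathcal{P}(\mathbf{K}^\mathbf{n})$ genuinely transports the optimization: one must check both that $ {\rm Rk}(\boldsymbol{\mathcal{L}}^\perp) = n - {\rm Rk}(\boldsymbol{\mathcal{L}}) $ (immediate from $ \dim_{K_i}(\mathcal{L}_i^\perp) = n_i - \dim_{K_i}(\mathcal{L}_i) $ and the definition of $ {\rm Rk} $) and that $ \boldsymbol{\mathcal{L}} \mapsto \boldsymbol{\mathcal{L}}^\perp $ is an involutive bijection of $ \mathcal{P}(\mathbf{K}^\mathbf{n}) $, so the max over rank-$ \mu $ elements really does biject with the max over rank-$ (n-\mu) $ elements. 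Once that is pinned down, Lemma \ref{lemma connection between profiles} and Lemma \ref{lemma monotonicity} do all the remaining work, and the argument is identical in spirit to Wei's original proof \cite{wei} and to the rank-metric version in \cite{jerome}; indeed, since sum-rank support spaces and their lattice behave exactly like Hamming/rank support spaces with respect to dimension and duality (Corollary \ref{cor lattice properties of support spaces}), the proof can be carried over \emph{mutatis mutandis}.
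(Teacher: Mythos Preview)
Your proposal is correct and follows essentially the same approach as the paper: the paper's proof simply says to translate the argument from \cite[App. B]{similarities} \emph{mutatis mutandis} using Proposition \ref{prop dimensions restricted shortened} and Lemma \ref{lemma connection between profiles}, and what you have written is precisely that translation spelled out --- your key identity $\dim(\mathcal{C}^\perp \cap \mathcal{V}_{\boldsymbol{\mathcal{L}}^\perp}) = \dim(\mathcal{C}\cap\mathcal{V}_{\boldsymbol{\mathcal{L}}}) - k + (n-\mu)$ is exactly the content of Proposition \ref{prop dimensions restricted shortened} (rearranged), and the jump-position argument via the $K$-profiles is the standard Forney/Wei bookkeeping encoded in Lemmas \ref{lemma monotonicity} and \ref{lemma connection between profiles}. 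One minor note: you could cite Proposition \ref{prop dimensions restricted shortened} directly for the dimension identity rather than re-deriving it, and your first displayed equation with the trailing ``$+\dots$'' should either be completed or dropped in favor of the clean statement you give immediately after.
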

\begin{proof}
With the definitions and results from Sections \ref{sec the sum-rank metric and support} and \ref{sec sum-rank support spaces} (Proposition \ref{prop dimensions restricted shortened} and Lemma \ref{lemma connection between profiles}), the proof can be translated mutatis mutandis from that in \cite[App. B]{similarities}.
\end{proof}

We conclude this section with a more novel result, which states that there exists a hierarchy of bounds for (relative) generalized sum-rank weights of linear codes, depending on how much one refines the partition $ n = n_1 + n_2 + \cdots + n_\ell $. The result states that bounds for finer partitions can be directly translated to bounds for less fine partitions. The extreme cases would be that of generalized Hamming weights ($ \mathbf{n} = \mathbf{1} $) and generalized rank weights ($ \ell = 1 $), and the corresponding result is then \cite[Th. 7]{similarities}. 

The main observation is that the sum-rank weight of a linear code can be expressed as the minimum of Hamming weights of all of its change-of-bases codes (Definition \ref{def change of bases}). This was proven in the case $ \ell = 1 $ in \cite[Th. 1]{similarities} for subspaces, for the skew metric and vectors in \cite[Prop. 14]{linearizedRS}, and for the sum-rank metric and vectors in the proof of \cite[Th. 3]{linearizedRS}. We now establish it for the sum-rank metric and arbitrary subspaces.

\begin{theorem} \label{th connection GHW and GSRW}
Let $ \mathcal{D} \subseteq \mathbb{F}^n $ be a vector space. It holds that
\begin{equation*}
\begin{split}
{\rm wt}_{SR}(\mathcal{D}) = \min \{ & {\rm wt}_H(\mathcal{D} A) \mid A = {\rm diag}(A_1, A_2, \ldots, A_\ell), \\
 & A_i \in K_i^{n_i \times n_i} \textrm{ invertible, for } i = 1,2, \ldots, \ell \}.
\end{split}
\end{equation*}
In particular, if $ \mathcal{C}_2 \subsetneqq \mathcal{C}_1 \subseteq \mathbb{F}^n $ are linear codes, then for all $ r = 1,2, \ldots, \dim(\mathcal{C}_1 / \mathcal{C}_2) $, it holds that
\begin{equation*}
\begin{split}
{\rm d}_{SR,r}(\mathcal{C}_1, \mathcal{C}_2) = \min \{ & {\rm d}_{H,r}(\mathcal{C}_1 A, \mathcal{C}_2 A) \mid A = {\rm diag}(A_1, A_2, \ldots, A_\ell), \\
 & A_i \in K_i^{n_i \times n_i} \textrm{ invertible, for } i = 1,2, \ldots, \ell \}.
\end{split}
\end{equation*}
By choosing $ \mathcal{C}_2 = \{ \mathbf{0} \} $, the same result holds for generalized sum-rank weights.
\end{theorem}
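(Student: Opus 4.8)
\emph{Proof proposal.} The plan is to prove first the identity for a subspace $ \mathcal{D} $, and then deduce the statement on (relative) generalized weights from it. For the subspace identity I would handle the two inequalities separately. The inequality $ {\rm wt}_{SR}(\mathcal{D}) \leq {\rm wt}_H(\mathcal{D}A) $, valid for \emph{every} admissible block-diagonal invertible $ A $, is the easy half: the map $ \mathbf{c} \mapsto \mathbf{c}A $ is a change of bases in the sense of Definition~\ref{def change of bases} (take $ B_i = A_i^T $), hence a linear sum-rank isometry, so $ {\rm wt}_{SR}(\mathcal{D}A) = {\rm wt}_{SR}(\mathcal{D}) $, and it suffices to show $ {\rm wt}_{SR}(\mathcal{E}) \leq {\rm wt}_H(\mathcal{E}) $ for an arbitrary subspace $ \mathcal{E} \subseteq \mathbb{F}^n $. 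If $ S_i \subseteq \{ 1,\dots,n_i \} $ denotes the $ i $th block of the Hamming support of $ \mathcal{E} $, then for every $ \mathbf{c} \in \mathcal{E} $ the matrix $ M_{\mathcal{A}_i}(\mathbf{c}^{(i)}) $ has all its nonzero columns among those indexed by $ S_i $, so $ {\rm Row}(M_{\mathcal{A}_i}(\mathbf{c}^{(i)})) $ lies in the $ K_i $-subspace of $ K_i^{n_i} $ of vectors supported on $ S_i $; summing these row spaces over $ \mathbf{c} \in \mathcal{E} $ shows the $ i $th component of $ {\rm Supp}(\mathcal{E}) $ has $ K_i $-dimension at most $ |S_i| $, and summing over $ i $ gives $ {\rm wt}_{SR}(\mathcal{E}) \leq \sum_{i} |S_i| = {\rm wt}_H(\mathcal{E}) $.

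For the reverse inequality I would exhibit an $ A $ attaining the bound. Write $ {\rm Supp}(\mathcal{D}) = (\mathcal{L}_1,\dots,\mathcal{L}_\ell) $ and $ N_i = \dim_{K_i}(\mathcal{L}_i) $, so that $ N := N_1 + \cdots + N_\ell = {\rm wt}_{SR}(\mathcal{D}) $. For each $ i $ choose an invertible $ A_i \in K_i^{n_i \times n_i} $ such that $ \mathcal{L}_i A_i $ is the span of the first $ N_i $ vectors of the canonical basis of $ K_i^{n_i} $ (extend a basis of $ \mathcal{L}_i $ to a basis of $ K_i^{n_i} $ and invert the associated matrix). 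Since $ A_i $ has entries in $ K_i $, one checks directly that $ M_{\mathcal{A}_i}(\mathbf{c}^{(i)} A_i) = M_{\mathcal{A}_i}(\mathbf{c}^{(i)}) A_i $ for all $ \mathbf{c}^{(i)} \in \mathbb{F}^{n_i} $; hence for $ \mathbf{c} \in \mathcal{D} $ the row space of $ M_{\mathcal{A}_i}(\mathbf{c}^{(i)} A_i) $ is contained in $ \mathcal{L}_i A_i $, so $ \mathbf{c}^{(i)} A_i $ vanishes on every coordinate of its block beyond the first $ N_i $. Taking $ A = {\rm diag}(A_1,\dots,A_\ell) $, the Hamming support of the whole subspace $ \mathcal{D}A $ is then contained in a fixed set of size $ N $, so $ {\rm wt}_H(\mathcal{D}A) \leq N = {\rm wt}_{SR}(\mathcal{D}) $. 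Combined with the previous inequality, this proves the subspace identity, with the minimum attained at this $ A $.

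For the statement on generalized weights I would invoke Proposition~\ref{proposition as minimum rank weights}, which expresses $ {\rm d}_{SR,r}(\mathcal{C}_1,\mathcal{C}_2) $ as the minimum of $ {\rm wt}_{SR}(\mathcal{D}) $ over subspaces $ \mathcal{D} \subseteq \mathcal{C}_1 $ with $ \mathcal{D} \cap \mathcal{C}_2 = \{ \mathbf{0} \} $ and $ \dim(\mathcal{D}) = r $, together with the same proposition in the case $ \mathbf{n} = \mathbf{1} $, which gives the analogous expression for $ {\rm d}_{H,r}(\mathcal{C}_1 A, \mathcal{C}_2 A) $. Since $ A $ is invertible, $ \mathcal{D} \mapsto \mathcal{D}A $ is a dimension-preserving bijection between the subspaces eligible for $ (\mathcal{C}_1,\mathcal{C}_2) $ and those eligible for $ (\mathcal{C}_1 A, \mathcal{C}_2 A) $. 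For the inequality ``$ \leq $'' take $ \mathcal{D} $ attaining $ {\rm d}_{SR,r}(\mathcal{C}_1,\mathcal{C}_2) $ and the matrix $ A $ constructed above, so that $ {\rm wt}_H(\mathcal{D}A) = {\rm wt}_{SR}(\mathcal{D}) $ and $ \mathcal{D}A $ is eligible for $ (\mathcal{C}_1 A, \mathcal{C}_2 A) $; for ``$ \geq $'' take, for an arbitrary admissible $ A $, a subspace $ \mathcal{E} $ eligible for $ (\mathcal{C}_1 A, \mathcal{C}_2 A) $ attaining $ {\rm d}_{H,r}(\mathcal{C}_1 A, \mathcal{C}_2 A) $ and use the first inequality of the subspace argument together with the isometry property to get $ {\rm wt}_{SR}(\mathcal{E} A^{-1}) = {\rm wt}_{SR}(\mathcal{E}) \leq {\rm wt}_H(\mathcal{E}) $, with $ \mathcal{E} A^{-1} $ eligible for $ (\mathcal{C}_1,\mathcal{C}_2) $. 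The case $ \mathcal{C}_2 = \{ \mathbf{0} \} $ is the specialization.

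The only genuinely sum-rank-specific points, and the ones to get right, are that a single block-diagonal $ A $ simultaneously puts every codeword of $ \mathcal{D} $ into Hamming-reduced form --- which works precisely because that $ A $ is built from the one list of subspaces $ {\rm Supp}(\mathcal{D}) $ rather than from individual codewords --- and the compatibility $ M_{\mathcal{A}_i}(\mathbf{c}^{(i)} A_i) = M_{\mathcal{A}_i}(\mathbf{c}^{(i)}) A_i $ of the matrix representation with right multiplication by matrices over $ K_i $. The rest is routine bookkeeping, and the argument extends \cite[Th.~1]{similarities} (the case $ \ell = 1 $, for subspaces) as well as the vector-level versions in \cite[Prop.~14]{linearizedRS} and the proof of \cite[Th.~3]{linearizedRS}.
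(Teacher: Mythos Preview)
Your proposal is correct and follows essentially the same approach as the paper: both halves of the subspace identity are proved the same way (sum-rank invariance under changes of bases plus $ {\rm wt}_{SR} \le {\rm wt}_H $ for one direction, and the explicit block-diagonal $ A $ sending each $ \mathcal{L}_i $ to $ K_i^{N_i} \times \{0\}^{n_i-N_i} $ for the other), and the passage to relative generalized weights via Proposition~\ref{proposition as minimum rank weights} is exactly what the paper invokes. Your write-up is somewhat more explicit on the second claim, but the underlying argument is the same.
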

\begin{proof}
We only need to prove the first claim. The second claim follows from the first one and Proposition \ref{proposition as minimum rank weights}.

First, take invertible matrices $  A_i \in K_i^{n_i \times n_i} $, for $ i = 1,2, \ldots, \ell $, and define $ A = {\rm diag}(A_1, A_2, \ldots, A_\ell) \in \mathbb{F}^{n \times n} $. If $ \mathbf{d} = (\mathbf{d}^{(1)}, \mathbf{d}^{(2)}, \ldots, \mathbf{d}^{(\ell)}) \in \mathbb{F}^n $, where $ \mathbf{d}^{(i)} \in \mathbb{F}^{n_i} $, for $ i = 1,2, \ldots, \ell $, then we have that
\begin{equation*}
\begin{split}
 {\rm Supp}(\mathbf{d}A) & = ({\rm Row}(M_{\mathcal{A}_1}(\mathbf{d}^{(1)} A_1)), {\rm Row}(M_{\mathcal{A}_2}(\mathbf{d}^{(2)} A_2)), \ldots, {\rm Row}(M_{\mathcal{A}_\ell}(\mathbf{d}^{(\ell)} A_\ell)))  \\
  & = ({\rm Row}(M_{\mathcal{A}_1}(\mathbf{d}^{(1)})) A_1, {\rm Row}(M_{\mathcal{A}_2}(\mathbf{d}^{(2)})) A_2, \ldots, {\rm Row}(M_{\mathcal{A}_\ell}(\mathbf{d}^{(\ell)})) A_\ell).
\end{split}
\end{equation*}
Moreover, since $ \mathcal{L}_i A_i + \mathcal{L}^\prime_i A_i = (\mathcal{L}_i + \mathcal{L}^\prime_i) A_i $, for any two subspaces $ \mathcal{L}_i, \mathcal{L}^\prime_i \in \mathcal{P}(K_i^{n_i}) $, for $ i = 1,2, \ldots, \ell $, we conclude by definition that $ {\rm Supp}(\mathcal{D}A) = {\rm Supp}(\mathcal{D}) A $, where $ {\rm Supp}(\mathcal{D}) A $ is defined in the straightforward way. Therefore, we deduce that
$$ {\rm wt}_{SR}(\mathcal{D}) = {\rm Rk}({\rm Supp}(\mathcal{D})) = {\rm Rk}({\rm Supp}(\mathcal{D}A)) = {\rm wt}_{SR}(\mathcal{D}A) \leq {\rm wt}_{H}(\mathcal{D}A), $$
and the inequality $ \leq $ follows.

Second, by linear algebra there exist invertible matrices $ A_i \in K_i^{n_i \times n_i} $ such that 
$$ \mathcal{L}_i A_i = K_i^{r_i} \times \{ 0 \}^{n_i - r_i}, $$
where $ r_i = \dim_{K_i}(\mathcal{L}_i) $, for $ i = 1,2, \ldots, \ell $, and where $ {\rm Supp}(\mathcal{D}) = (\mathcal{L}_1, \mathcal{L}_2, \ldots, \mathcal{L}_\ell) $. If $ A = {\rm diag}(A_1, A_2, \ldots, A_\ell) \in \mathbb{F}^{n \times n} $, then we deduce that 
$$ \mathcal{V}_{\boldsymbol{\mathcal{L}}} A = \prod_{i=1}^\ell (\mathbb{F}^{r_i} \times \{ 0 \}^{n_i - r_i}), $$
where the latter is the Hamming support space of $ \mathcal{D} A $. Thus $ {\rm wt}_{SR}(\mathcal{D}) = {\rm wt}_{SR}(\mathcal{D}A) = {\rm wt}_H(\mathcal{D} A) $. Hence the inequality $ \geq $ follows.
\end{proof}

\begin{remark}
As observed in \cite[Sec. II]{tsfasman} for $ \mathbf{n} = \mathbf{1} $ and in \cite[Th. 5]{similarities} for $ \ell = 1 $, we directly deduce from the previous theorem that sum-rank weights of subspaces and relative generalized sum-rank weights are invariant by changes of bases (Definition \ref{def change of bases}). This is key in describing generalized Hamming weights by projective systems \cite{tsfasman}.
\end{remark}

We automatically deduce the following result.

\begin{corollary} \label{cor d_SR for finer partitions}
Assume that $ n_i = \sum_{j=1}^{v_i} n_{i,j} $, for $ i = 1,2, \ldots, \ell $. Denote by $ {\rm d}_{SR}^{ref} $ the sum-rank metric with respect to the refined partition $ n = \sum_{i=1}^\ell \sum_{j=1}^{v_i} n_{i,j} $. If $ \mathcal{C}_2 \subsetneqq \mathcal{C}_1 \subseteq \mathbb{F}^n $ are linear codes, then for all $ r = 1,2, \ldots, \dim(\mathcal{C}_1 / \mathcal{C}_2) $, it holds that
\begin{equation*}
\begin{split}
{\rm d}_{SR,r}(\mathcal{C}_1, \mathcal{C}_2) = \min \{ & {\rm d}_{SR,r}^{ref}(\mathcal{C}_1 A, \mathcal{C}_2 A) \mid A = {\rm diag}(A_1, A_2, \ldots, A_\ell), \\
 & A_i \in K_i^{n_i \times n_i} \textrm{ invertible, for } i = 1,2, \ldots, \ell \}.
\end{split}
\end{equation*}
By choosing $ \mathcal{C}_2 = \{ \mathbf{0} \} $, the same result holds for generalized sum-rank weights.
\end{corollary}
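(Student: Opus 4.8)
The plan is to obtain this as a direct consequence of Theorem~\ref{th connection GHW and GSRW}, applied once to the coarse partition $ n = \sum_{i=1}^\ell n_i $ and once to the refined partition $ n = \sum_{i=1}^\ell \sum_{j=1}^{v_i} n_{i,j} $, followed by a composition of the two families of changes of bases. First I would note that the refined partition is itself a bona fide instance of the sum-rank setting: it uses the field list in which each $ K_i $ is repeated $ v_i $ times, and since the numbers $ \dim_{K_i}(\mathbb{F}) $ and the representation maps $ M_{\mathcal{A}_i} $ are unchanged, Theorem~\ref{th connection GHW and GSRW} applies to it verbatim. Thus, for any nested linear codes $ \mathcal{D}_2 \subsetneqq \mathcal{D}_1 \subseteq \mathbb{F}^n $,
$$ {\rm d}_{SR,r}^{ref}(\mathcal{D}_1, \mathcal{D}_2) = \min \left\{ {\rm d}_{H,r}(\mathcal{D}_1 A^\prime, \mathcal{D}_2 A^\prime) \mid A^\prime = {\rm diag}(A_{1,1}, \ldots, A_{\ell, v_\ell}), \ A_{i,j} \in K_i^{n_{i,j} \times n_{i,j}} \textrm{ invertible} \right\}, $$
while Theorem~\ref{th connection GHW and GSRW} for the coarse partition gives
$$ {\rm d}_{SR,r}(\mathcal{C}_1, \mathcal{C}_2) = \min \left\{ {\rm d}_{H,r}(\mathcal{C}_1 \widetilde{A}, \mathcal{C}_2 \widetilde{A}) \mid \widetilde{A} = {\rm diag}(\widetilde{A}_1, \ldots, \widetilde{A}_\ell), \ \widetilde{A}_i \in K_i^{n_i \times n_i} \textrm{ invertible} \right\}. $$
Observe that the Hamming metric $ {\rm d}_{H,r} $ appearing on both right-hand sides is literally the same one, since the Hamming weight does not depend on how the $ n $ coordinates are grouped into blocks.

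Second --- and this is the only point that needs a little care --- I would check that, as $ A = {\rm diag}(A_1, \ldots, A_\ell) $ ranges over the coarse block-diagonal matrices (with $ A_i \in K_i^{n_i \times n_i} $ invertible) and $ A^\prime $ ranges over the refined block-diagonal matrices as above, the products $ A A^\prime $ range exactly over the coarse block-diagonal matrices. Indeed, restricted to the $ i $th coarse block, $ A^\prime $ equals $ {\rm diag}(A_{i,1}, \ldots, A_{i,v_i}) \in K_i^{n_i \times n_i} $, which is invertible, so $ A A^\prime $ restricted to that block is $ A_i \, {\rm diag}(A_{i,1}, \ldots, A_{i,v_i}) \in K_i^{n_i \times n_i} $, again invertible; hence $ A A^\prime $ is coarse block-diagonal. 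Conversely, any coarse block-diagonal $ \widetilde{A} $ equals $ A A^\prime $ by taking $ A = \widetilde{A} $ and $ A^\prime = I_n $ (which is refined block-diagonal, all its sub-blocks being identity matrices).

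Finally I would combine the two displays: applying the refined identity to the codes $ \mathcal{D}_i = \mathcal{C}_i A $ and using associativity $ (\mathcal{C}_i A) A^\prime = \mathcal{C}_i (A A^\prime) $, one gets
$$ \min_{A} {\rm d}_{SR,r}^{ref}(\mathcal{C}_1 A, \mathcal{C}_2 A) = \min_{A} \min_{A^\prime} {\rm d}_{H,r}\left( \mathcal{C}_1 (A A^\prime), \mathcal{C}_2 (A A^\prime) \right) = \min_{\widetilde{A}} {\rm d}_{H,r}(\mathcal{C}_1 \widetilde{A}, \mathcal{C}_2 \widetilde{A}) = {\rm d}_{SR,r}(\mathcal{C}_1, \mathcal{C}_2), $$
where the middle equality uses the bijection of the previous step and the last equality is Theorem~\ref{th connection GHW and GSRW} for the coarse partition. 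The statement for (absolute) generalized sum-rank weights follows by specializing to $ \mathcal{C}_2 = \{ \mathbf{0} \} $. I do not anticipate any real obstacle: the argument is essentially bookkeeping, the only subtle points being to confirm that the refined partition is a legitimate sum-rank instance to which Theorem~\ref{th connection GHW and GSRW} applies, and that the composed changes of bases exhaust exactly the coarse block-diagonal ones.
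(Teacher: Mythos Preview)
Your proposal is correct and follows essentially the same approach as the paper, which states the corollary as an immediate consequence of Theorem~\ref{th connection GHW and GSRW} without further proof. Your argument is simply a careful unpacking of that deduction: apply the theorem to both the coarse and refined partitions, then observe that the composite changes of bases $AA'$ range exactly over the coarse block-diagonal matrices.
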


Hence the following result extends \cite[Th. 7]{similarities} from connecting the extremal cases $ \mathbf{n} = \mathbf{1} $ and $ \ell = 1 $ to connecting all intermediate cases.

\begin{theorem} \label{th comparison bounds}
Fix $ k $, and choose positive integers $ r,s = 1,2, \ldots, k $ and functions $ f_{r,s}, g_{r,s} : \mathbb{N} \longrightarrow \mathbb{R} $, which may also depend on $ \mathbf{n}, m, k $ and the sizes of $ K_1, K_2, \ldots, K_\ell $. If $ g_{r,s} $ is non-decreasing, then every bound of the form 
$$ f_{r,s}(d_r (\mathcal{C}_1, \mathcal{C}_2)) \geq g_{r,s}(d_s (\mathcal{C}_1, \mathcal{C}_2)) $$
that is valid for Hamming weights $ {\rm d}_H $ (or refined sum-rank weights $ {\rm d}_{SR}^{ref} $ as in the previous corollary), for any nested linear code pair $ \mathcal{C}_2 \subsetneqq \mathcal{C}_1 \subseteq \mathbb{F}^n $ with $ \dim(\mathcal{C}_1 / \mathcal{C}_2) = k $, is also valid for sum-rank weights $ {\rm d}_{SR} $.  
\end{theorem}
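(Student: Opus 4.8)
The plan is to use the minimax identities of Theorem \ref{th connection GHW and GSRW} (equivalently Corollary \ref{cor d_SR for finer partitions}) \emph{asymmetrically} in the two indices $ r $ and $ s $, together with the monotonicity hypothesis on $ g_{r,s} $.

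First I would fix an arbitrary nested pair $ \mathcal{C}_2 \subsetneqq \mathcal{C}_1 \subseteq \mathbb{F}^n $ with $ \dim(\mathcal{C}_1/\mathcal{C}_2) = k $, and apply Theorem \ref{th connection GHW and GSRW} to the index $ r $: this produces a single block-diagonal change of bases $ A = {\rm diag}(A_1, A_2, \ldots, A_\ell) $, with each $ A_i \in K_i^{n_i \times n_i} $ invertible, such that
$$ {\rm d}_{SR,r}(\mathcal{C}_1, \mathcal{C}_2) = {\rm d}_{H,r}(\mathcal{C}_1 A, \mathcal{C}_2 A). $$
Since $ A $ is invertible, $ \mathcal{C}_2 A \subsetneqq \mathcal{C}_1 A \subseteq \mathbb{F}^n $ is again a nested linear code pair in the same ambient space with $ \dim(\mathcal{C}_1 A / \mathcal{C}_2 A) = k $, so the hypothesised bound for the Hamming metric (or for the refined sum-rank metric) applies to it, giving
$$ f_{r,s}({\rm d}_{H,r}(\mathcal{C}_1 A, \mathcal{C}_2 A)) \geq g_{r,s}({\rm d}_{H,s}(\mathcal{C}_1 A, \mathcal{C}_2 A)). $$

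Next I would feed in the identity for index $ r $ on the left and only an \emph{inequality} for index $ s $ on the right: applying Theorem \ref{th connection GHW and GSRW} once more (now to the index $ s $, but merely as a lower bound on the minimum) gives $ {\rm d}_{SR,s}(\mathcal{C}_1, \mathcal{C}_2) \leq {\rm d}_{H,s}(\mathcal{C}_1 A, \mathcal{C}_2 A) $, because the left-hand side is the minimum of the right-hand side over all such $ A $. Chaining everything,
$$ f_{r,s}({\rm d}_{SR,r}(\mathcal{C}_1, \mathcal{C}_2)) = f_{r,s}({\rm d}_{H,r}(\mathcal{C}_1 A, \mathcal{C}_2 A)) \geq g_{r,s}({\rm d}_{H,s}(\mathcal{C}_1 A, \mathcal{C}_2 A)) \geq g_{r,s}({\rm d}_{SR,s}(\mathcal{C}_1, \mathcal{C}_2)), $$
where the final step uses that $ g_{r,s} $ is non-decreasing. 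This is exactly the desired bound for the sum-rank metric, and the refined-partition version is obtained verbatim by using Corollary \ref{cor d_SR for finer partitions} in place of Theorem \ref{th connection GHW and GSRW}.

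I do not expect a serious obstacle here; the only point requiring care is the asymmetry — one must choose the single change of bases $ A $ so that it is \emph{optimal for the index $ r $} and then exploit only its (possible) \emph{suboptimality for the index $ s $}, and it is precisely the monotonicity of $ g_{r,s} $ that allows the suboptimal (hence larger) value $ {\rm d}_{H,s}(\mathcal{C}_1 A, \mathcal{C}_2 A) $ to be replaced by the smaller $ {\rm d}_{SR,s}(\mathcal{C}_1, \mathcal{C}_2) $ on the right without breaking the inequality. One should also check the elementary bookkeeping that $ \mathcal{C}_1 A $ and $ \mathcal{C}_2 A $ are honest nested linear codes in $ \mathbb{F}^n $ with the quotient dimension preserved, which is immediate from the invertibility of $ A $.
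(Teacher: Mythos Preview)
Your argument is correct and is precisely the one the paper has in mind: the paper's own proof simply says that, thanks to Theorem~\ref{th connection GHW and GSRW}, the argument of \cite[Th.~7]{similarities} carries over \emph{mutatis mutandis}, and what you have written is exactly that argument spelled out in detail. The asymmetric use of the minimax identity (equality at index $r$, inequality at index $s$) combined with the monotonicity of $g_{r,s}$ is the key step there as well.
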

\begin{proof}
Thanks to Theorem \ref{th connection GHW and GSRW}, the proof can be translated mutatis mutandis from that in \cite[Th. 7]{similarities}.
\end{proof}

Just as in \cite[Sec. V]{similarities}, we may apply the bounds from \cite{kloeve} and \cite[P. I, Subsec. III-A]{tsfasman} to relative generalized sum-rank weights. In the case where $ q = |\mathbb{F}| < \infty $, we may give the following bounds, among many others, where $ 1 \leq r < s \leq k $:
\begin{enumerate}
\item
Monotonicity and its refinement (see \cite[Th. 1]{kloeve} and \cite[Eq. (18)]{tsfasman}): 
$$ d_{r+1} \geq d_{r} + 1, $$
$$ (q^{s} - q^{s-r})d_s \geq (q^{s}-1)d_r. $$
\item 
Generalized Griesmer bound (see \cite[Eq. (14)]{tsfasman} and \cite[Eq. (16)]{tsfasman}): 
$$ d_s \geq d_r + \sum_{i=0}^{s-r} \left\lceil \frac{(q-1)d_r}{(q^{r}-1)q^{i}} \right\rceil. $$
\item 
Another bound (see \cite[Eq. (20)]{tsfasman}): 
$$ d_r \geq n - \left\lfloor \frac{(q^{k-r}-1)(n-d_s)}{q^{k-s}-1} \right\rfloor. $$
\end{enumerate}

Without Theorem \ref{th connection GHW and GSRW}, even in the case $ \ell = 1 $, proving the monotonicity bound and its refinement from scratch requires proofs that are not so short. See \cite[Prop. II.3]{jerome}, \cite[Sec. II]{rgrw} or \cite[Sec. IV]{oggier}, for instance.

Also from Theorem \ref{th connection GHW and GSRW} we deduce that any bound of the form $ d_r \leq M $, for a fixed number $ M > 0 $, that is valid for sum-rank weights is valid for less refined sum-rank weights. In particular, each of the previous bounds has a corresponding version of this form: Monotonicity gives the Singleton bound (see next subsection), its refinement gives the Plotkin bound (\cite[Th. 2]{kloeve} or \cite[Eq. (9)]{tsfasman}), and the previous Griesmer bound gives the classical form of the Griesmer bound (\cite[Th. 4]{kloeve}). We may similarly obtain asymptotic upper bounds as in \cite[P. I, Subsec. V-B]{tsfasman}, which we leave to the reader. Existential bounds however seem harder to obtain. We leave them as open problem.

\subsection{Maximum sum-rank distance codes} \label{subsec MSRD codes}

In this subsection, we derive a Singleton bound on relative generalized sum-rank weights. We extend the notion of maximum sum-rank distance (MSRD) codes from \cite[Subsec. 3.3]{linearizedRS} and define MSRD ranks. We then connect the MSRD rank of a code with the minimum sum-rank distance of its dual. Thanks to it, we conclude that the dual of an MSRD code is again MSRD, which has not been proven yet. Finally, we characterize MSRD codes in terms of sum-rank supports and prove that any restriction or shortening of an MSRD code is in turn MSRD. All results in this subsection can be stated in terms of the parameter $ {\rm K}_{SR, \mu}(\mathcal{C}_1, \mathcal{C}_2) $. The formulas would be exactly as in \cite{forney, luo} and are left to the reader.

We start by stating the Singleton bound, which follows directly from \cite[Eq. (24)]{luo}.

\begin{proposition}[\textbf{Generalized Singleton bound}] \label{prop singleton bound}
Let $ \mathcal{C}_2 \subsetneqq \mathcal{C}_1 \subseteq \mathbb{F}^n $ be nested linear codes with $ k_1 = \dim(\mathcal{C}_1) $ and $ k = \dim(\mathcal{C}_1 / \mathcal{C}_2) $. It holds that
$$ {\rm d}_{SR,r}(\mathcal{C}_1, \mathcal{C}_2) \leq n - k_1 + r, $$
for $ r = 1,2, \ldots, k $. 
\end{proposition}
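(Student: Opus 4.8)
The plan is to prove the generalized Singleton bound for relative generalized sum-rank weights by reducing it, via Theorem \ref{th connection GHW and GSRW}, to the classical generalized Singleton bound for (relative) generalized Hamming weights, which is the reference \cite[Eq. (24)]{luo} cited in the statement. Concretely, given the nested pair $ \mathcal{C}_2 \subsetneqq \mathcal{C}_1 \subseteq \mathbb{F}^n $ and any $ r \in \{ 1, 2, \ldots, k \} $, Theorem \ref{th connection GHW and GSRW} gives
\begin{equation*}
{\rm d}_{SR, r}(\mathcal{C}_1, \mathcal{C}_2) = \min \{ {\rm d}_{H, r}(\mathcal{C}_1 A, \mathcal{C}_2 A) \mid A = {\rm diag}(A_1, \ldots, A_\ell), \ A_i \in K_i^{n_i \times n_i} \textrm{ invertible} \}.
\end{equation*}
Since each $ A $ in this family is an invertible matrix over $ \mathbb{F} $, the map $ \mathbf{c} \mapsto \mathbf{c} A $ is a linear automorphism of $ \mathbb{F}^n $, so $ \dim(\mathcal{C}_1 A) = \dim(\mathcal{C}_1) = k_1 $ and $ \dim(\mathcal{C}_1 A / \mathcal{C}_2 A) = \dim(\mathcal{C}_1 / \mathcal{C}_2) = k $, and $ \mathcal{C}_1 A, \mathcal{C}_2 A \subseteq \mathbb{F}^n $ is again a nested pair of the same length $ n $. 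Applying the classical relative generalized Singleton bound for the Hamming metric to each such pair yields $ {\rm d}_{H, r}(\mathcal{C}_1 A, \mathcal{C}_2 A) \leq n - k_1 + r $, and taking the minimum over all admissible $ A $ gives $ {\rm d}_{SR, r}(\mathcal{C}_1, \mathcal{C}_2) \leq n - k_1 + r $, as desired.

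An alternative (and essentially equivalent) route, in case one prefers a self-contained argument not passing through the Hamming case, is to exhibit directly a witnessing support $ \boldsymbol{\mathcal{L}} \in \mathcal{P}(\mathbf{K}^\mathbf{n}) $ with $ {\rm Rk}(\boldsymbol{\mathcal{L}}) \leq n - k_1 + r $ and $ \dim(\mathcal{C}_1 \cap \mathcal{V}_{\boldsymbol{\mathcal{L}}}) - \dim(\mathcal{C}_2 \cap \mathcal{V}_{\boldsymbol{\mathcal{L}}}) \geq r $. One would start from a chosen $ \boldsymbol{\mathcal{L}}_0 $ with $ {\rm Rk}(\boldsymbol{\mathcal{L}}_0) = \mu $ for $ \mu $ as small as possible and use the dimension count $ \dim(\mathcal{C}_j \cap \mathcal{V}_{\boldsymbol{\mathcal{L}}}) \geq \dim(\mathcal{C}_j) + \dim(\mathcal{V}_{\boldsymbol{\mathcal{L}}}) - n = \dim(\mathcal{C}_j) + {\rm Rk}(\boldsymbol{\mathcal{L}}) - n $ (from $ \dim(\mathcal{V}_{\boldsymbol{\mathcal{L}}}) = {\rm Rk}(\boldsymbol{\mathcal{L}}) $ in Corollary \ref{cor lattice properties of support spaces}), combined with the monotonicity of the difference $ \dim(\mathcal{C}_1 \cap \mathcal{V}_{\boldsymbol{\mathcal{L}}}) - \dim(\mathcal{C}_2 \cap \mathcal{V}_{\boldsymbol{\mathcal{L}}}) $ in $ \mu $ from Lemma \ref{lemma monotonicity}, to bootstrap to the needed inequality; the equivalent formulation via restricted codes and Proposition \ref{prop dimensions for equivalent gen weights} shows the Hamming-metric proof in \cite{luo} carries over. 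I would present only the first route, since it is shortest and the reference is explicitly cited.

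The main obstacle, such as it is, is purely one of bookkeeping rather than mathematical depth: one must check that all the parameters appearing in the Singleton bound — namely $ n $, $ k_1 = \dim(\mathcal{C}_1) $, and $ k = \dim(\mathcal{C}_1 / \mathcal{C}_2) $ — are genuinely preserved under the change-of-bases automorphisms $ \mathbf{c} \mapsto \mathbf{c} A $, so that the Hamming-metric bound applied to $ (\mathcal{C}_1 A, \mathcal{C}_2 A) $ gives exactly the right-hand side $ n - k_1 + r $ and not some perturbed quantity. This is immediate because $ A $ is invertible over $ \mathbb{F} $, but it is the only point that needs a word. There is no real analytic or combinatorial difficulty here; the entire content has been front-loaded into Theorem \ref{th connection GHW and GSRW}.
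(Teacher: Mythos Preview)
Your proposal is correct and takes essentially the same approach as the paper: the paper states that the bound ``follows directly from \cite[Eq. (24)]{luo}'', which is precisely the Hamming-metric relative generalized Singleton bound, and the passage from the Hamming case to the sum-rank case is via Theorem \ref{th connection GHW and GSRW} (the paper makes this explicit at the end of Subsection \ref{subsec gen sum-rank weights}, where it notes that any upper bound $d_r \leq M$ transfers across refinements and that monotonicity yields the Singleton bound). Your bookkeeping remark about dimensions being preserved under invertible $A$ is exactly the only point to check, and you handle it correctly.
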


The case $ \mathcal{C}_2 = \{ \mathbf{0} \} $ gives \cite[Cor. 1]{wei} when $ \mathbf{n} = \mathbf{1} $, and then choosing $ r = 1 $ gives the classical Singleton bound \cite{singleton}. As in \cite{linearizedRS}, the case $ r=1 $ for one linear code (i.e. $ \mathcal{C}_2 = \{ \mathbf{0} \} $) motivates the following definition.

\begin{definition}[\textbf{MSRD codes \cite{linearizedRS}}]
A linear code $ \mathcal{C} \subseteq \mathbb{F}^n $ is maximum sum-rank distance (MSRD) if 
$$ d_{SR}(\mathcal{C}) = n - \dim(\mathcal{C}) + 1. $$
\end{definition}

By Corollary \ref{cor d_SR for finer partitions}, MRD codes, such as Gabidulin codes \cite{gabidulin, roth}, are also MSRD (for any length partition). However, their field size is always exponential in the code length. Linearized Reed-Solomon codes \cite[Def. 31]{linearizedRS} constitute the first and only known family of MSRD codes with field sizes that are subexponential in the code length. Their (relative) generalized sum-rank weights are given as follows:

\begin{proposition}
Let $ \mathcal{C}_2 \subsetneqq \mathcal{C}_1 \subseteq \mathbb{F}^n $ be nested linear codes with $ k_1 = \dim(\mathcal{C}_1) $ and $ k = \dim(\mathcal{C}_1 / \mathcal{C}_2) $, and where $ \mathcal{C}_1 $ is MSRD (for instance, a linearized Reed-Solomon code \cite{linearizedRS}). Then it holds that
$$ {\rm d}_{SR,r}(\mathcal{C}_1, \mathcal{C}_2) = n - k_1 + r, $$
for $ r = 1,2, \ldots, k $. 
\end{proposition}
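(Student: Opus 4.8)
The upper bound $ {\rm d}_{SR,r}(\mathcal{C}_1, \mathcal{C}_2) \leq n - k_1 + r $ is exactly the generalized Singleton bound of Proposition \ref{prop singleton bound}, so the entire content of the statement is the reverse inequality $ {\rm d}_{SR,r}(\mathcal{C}_1, \mathcal{C}_2) \geq n - k_1 + r $. The plan is to split this into two steps: first settle the single-code case $ \mathcal{C}_2 = \{ \mathbf{0} \} $, showing that an MSRD code meets the Singleton bound at \emph{every} level $ r $; then deduce the relative case by comparing the families of subspaces over which the two minima are taken.

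For the first step, I would start from the MSRD hypothesis, which by definition says $ {\rm d}_{SR}(\mathcal{C}_1) = n - k_1 + 1 $, and recall from Proposition \ref{proposition as minimum rank weights} (with $ \mathcal{C}_2 = \{ \mathbf{0} \} $) that $ {\rm d}_{SR,1}(\mathcal{C}_1) = {\rm d}_{SR}(\mathcal{C}_1) $. The strict monotonicity of generalized sum-rank weights from Lemma \ref{lemma monotonicity}, applied to the pair $ (\mathcal{C}_1, \{ \mathbf{0} \}) $, gives $ {\rm d}_{SR,r+1}(\mathcal{C}_1) \geq {\rm d}_{SR,r}(\mathcal{C}_1) + 1 $ for $ r = 1, \ldots, k_1 - 1 $. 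Since the resulting strictly increasing integer sequence starts at $ n - k_1 + 1 $ and is bounded above by $ n - k_1 + r $ at its $ r $th term by the Singleton bound (Proposition \ref{prop singleton bound} for the single code $ \mathcal{C}_1 $), it is sandwiched into equality: $ {\rm d}_{SR,r}(\mathcal{C}_1) = n - k_1 + r $ for all $ r = 1, \ldots, k_1 $.

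For the second step, I would use the subspace description of Proposition \ref{proposition as minimum rank weights}: $ {\rm d}_{SR,r}(\mathcal{C}_1, \mathcal{C}_2) $ is the minimum of $ {\rm wt}_{SR}(\mathcal{D}) $ over $ r $-dimensional subspaces $ \mathcal{D} \subseteq \mathcal{C}_1 $ with $ \mathcal{D} \cap \mathcal{C}_2 = \{ \mathbf{0} \} $, while $ {\rm d}_{SR,r}(\mathcal{C}_1) $ is the same minimum taken over \emph{all} $ r $-dimensional subspaces of $ \mathcal{C}_1 $. As the former family is contained in the latter, the minimum over the former is at least the minimum over the latter, i.e. $ {\rm d}_{SR,r}(\mathcal{C}_1, \mathcal{C}_2) \geq {\rm d}_{SR,r}(\mathcal{C}_1) = n - k_1 + r $ by the first step. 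Combined with the Singleton bound of Proposition \ref{prop singleton bound} this yields the claimed equality for $ r = 1, \ldots, k $ (note $ k \leq k_1 $). Alternatively, one can argue directly from the original definition of $ {\rm d}_{SR,r} $: every $ \boldsymbol{\mathcal{L}} $ with $ \dim(\mathcal{C}_1 \cap \mathcal{V}_{\boldsymbol{\mathcal{L}}}) - \dim(\mathcal{C}_2 \cap \mathcal{V}_{\boldsymbol{\mathcal{L}}}) \geq r $ also satisfies $ \dim(\mathcal{C}_1 \cap \mathcal{V}_{\boldsymbol{\mathcal{L}}}) \geq r $, so the admissible set for the relative weight is contained in that for the absolute weight, giving the same comparison.

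There is no serious obstacle here; the argument is essentially bookkeeping once the right earlier results are in place. The only point deserving care is that the nontrivial input — strict monotonicity of the generalized weights — is invoked only for the single code $ \mathcal{C}_1 $, where it combines cleanly with the Singleton bound; so the genuinely substantive work has already been carried out in Lemma \ref{lemma monotonicity} and its support-space based proof, and what remains is the routine comparison of the two minimization domains.
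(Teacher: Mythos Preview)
Your argument is correct and matches the paper's approach: the paper does not give a formal proof but simply remarks that ``if a linear code or code pair achieves the Singleton bound for a given $r$, it achieves it for all $s \geq r$,'' which is exactly the monotonicity-plus-Singleton sandwich you spell out. A minor streamlining: you could apply monotonicity (Lemma~\ref{lemma monotonicity}) directly to the pair $(\mathcal{C}_1,\mathcal{C}_2)$ rather than first settling the single-code case, since ${\rm d}_{SR,1}(\mathcal{C}_1,\mathcal{C}_2) \geq {\rm d}_{SR}(\mathcal{C}_1) = n-k_1+1$ already pins down the base of the induction for the pair; but your detour through ${\rm d}_{SR,r}(\mathcal{C}_1)$ is equally valid.
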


More generally, if a linear code or code pair achieves the Singleton bound for a given $ r $, it achieves it for all $ s \geq r $. This discussion motivates the definition of MSRD ranks, which recovers \cite[Def. 1]{jerome} in the case $ \ell = 1 $ (see \cite[Sec. VI]{wei} for the case $ \mathbf{n} = \mathbf{1} $).

\begin{definition} [\textbf{MSRD rank}]
Given a $ k $-dimensional linear code $ \mathcal{C} \subseteq \mathbb{F}^n $, we define its MSRD rank as the minimum integer $ r = 1,2, \ldots, k $ such that $ {\rm d}_{SR,r}(\mathcal{C}) = n - \dim(\mathcal{C}) + r $, if such an $ r $ exists. In such a case, we say that $ \mathcal{C} $ is an $ r $-MSRD code.
\end{definition}

\begin{remark}
Codes without an MSRD rank as in the previous definition are precisely sum-rank degenerate codes as in the next subsection. The restricted code in its sum-rank effective length does have an MSRD rank.
\end{remark}

\begin{remark}
Just as we have extended the notion of $ r $-MDS and $ r $-MRD codes to $ r $-MSRD codes, we may analogously extend the notion of almost MDS \cite{almostMDS} and near MDS \cite{nearMDS} codes from the case $ \mathbf{n} = \mathbf{1} $ to the general case. It is worth noting that, in the case $ \ell = 1 $, an essentially different concept called quasi MRD codes \cite[Def. 10]{quasiMRD} may be introduced. Properties and constructions of almost MSRD, near MSRD and quasi MSRD codes would be of interest and are left open. 
\end{remark}

Hence we deduce the following result from Theorem \ref{th GSRW duality}, recovering \cite[Cor. III.3]{jerome} in the case $ \ell = 1 $. The case $ \mathbf{n} = \mathbf{1} $ can be found in \cite[Prop. 4.1]{tsfasman}.

\begin{corollary} \label{cor MSRD ranks and duals}
Given a $ k $-dimensional linear code $ \mathcal{C} \subseteq \mathbb{F}^n $, its MSRD rank $ r $ satisfies that
$$ r = k - {\rm d}_{SR}(\mathcal{C}^\perp) + 2. $$
\end{corollary}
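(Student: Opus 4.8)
The plan is to read off the claimed identity $r = k - {\rm d}_{SR}(\mathcal{C}^\perp) + 2$ directly from Wei duality (Theorem \ref{th GSRW duality}) applied to $\mathcal{C}$, together with the definition of the MSRD rank. First I would recall the Singleton bound (Proposition \ref{prop singleton bound}) in the single-code case, which gives ${\rm d}_{SR,r}(\mathcal{C}) \leq n - k + r$ for all $r = 1,2,\ldots,k$, so the MSRD rank is the least $r$ at which this holds with equality. Strict monotonicity ${\rm d}_{SR,r} < {\rm d}_{SR,r+1}$ from Lemma \ref{lemma monotonicity} already shows that once equality holds for some $r$, it holds for all $s \geq r$; hence the MSRD rank $r$ is characterized by the two conditions $d_r = n-k+r$ and (if $r > 1$) $d_{r-1} \leq n-k+r-2$, i.e. $d_{r-1} < n-k+(r-1)$.

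Next I would translate this into a statement about which integers in $\{1,\ldots,n\}$ lie in the set $D = \{d_1,\ldots,d_k\}$. By Wei duality, $\{1,\ldots,n\}$ is the disjoint union of $D$ and the set $D^{\perp\prime} = \{n+1-d_1^\perp,\ldots,n+1-d_{n-k}^\perp\}$, where $d_s^\perp = {\rm d}_{SR,s}(\mathcal{C}^\perp)$. The key observation is that the values $d_1 < d_2 < \cdots < d_k$ fill up the ``top'' of the interval once the Singleton bound is met: for $s \geq r$ one has $d_s = n-k+s$, so $\{n-k+r, n-k+r+1, \ldots, n\} \subseteq D$, and these are exactly the $k-r+1$ largest elements of $\{1,\ldots,n\}$ lying in $D$. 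Consequently the largest element of $\{1,\ldots,n\}$ \emph{not} in $D$, namely the top of $D^{\perp\prime}$, is $n-k+r-1$; but the top of $D^{\perp\prime}$ equals $n+1-\min_s d_s^\perp = n+1-d_1^\perp = n+1-{\rm d}_{SR}(\mathcal{C}^\perp)$, using that $d_1^\perp = {\rm d}_{SR}(\mathcal{C}^\perp)$ by Proposition \ref{proposition as minimum rank weights}. Equating $n-k+r-1 = n+1-{\rm d}_{SR}(\mathcal{C}^\perp)$ gives $r = k - {\rm d}_{SR}(\mathcal{C}^\perp) + 2$.

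To make the argument airtight I would handle the two edge cases: if $r = 1$ (i.e. $\mathcal{C}$ is MSRD), then $D = \{n-k+1,\ldots,n\}$ has exactly $k$ elements and $D^{\perp\prime} = \{1,\ldots,n-k\}$, whose top is $n-k$; then $n-k = n+1-d_1^\perp$ forces $d_1^\perp = k+1 = n - \dim(\mathcal{C}^\perp) + 1$, recovering that $\mathcal{C}^\perp$ is MSRD and $r = k - (k+1) + 2 = 1$, consistent. Conversely, if $\mathcal{C}$ has no MSRD rank, I should note this happens precisely when the Singleton bound is never met, i.e. $d_k < n$ (so $n \notin D$), hence $\mathcal{C}^\perp$ also has no top element forced, and the formula is vacuous — matching the first remark after the MSRD rank definition.

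The main obstacle I anticipate is making rigorous the claim that ``meeting the Singleton bound at $r$ forces the top $k-r+1$ slots of $D$ to be the largest integers'': this needs the combination of the Singleton upper bound, strict monotonicity of the $d_i$, and a counting argument showing no element of $D^{\perp\prime}$ can exceed $n-k+r-1$ — equivalently that $d_{n-k}^\perp$ is small enough. In fact the cleanest route is to avoid this entirely and instead apply Theorem \ref{th GSRW duality} in the contrapositive: $r$ is the MSRD rank of $\mathcal{C}$ iff $n-k+r$ is the smallest element of $D$ that equals its Singleton value, iff (by the structure of $D$) the integer $n-k+r-1$ is the \emph{largest} integer in $\{1,\ldots,n\}$ not belonging to $D$, iff it is the largest element of $D^{\perp\prime}$, which is $n+1 - d_1^\perp$. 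So the real content is the purely order-theoretic lemma that $D$ is ``upward closed from $n-k+r$ on'' and contains no larger gap, which follows from Proposition \ref{prop singleton bound} and Lemma \ref{lemma monotonicity} alone; after that the identity is immediate arithmetic. I would present the proof in exactly this order: recall MSRD rank as first Singleton-tight index, recall $d_1^\perp = {\rm d}_{SR}(\mathcal{C}^\perp)$, invoke Wei duality to identify $n-k+r-1$ as $n+1-d_1^\perp$, and solve for $r$.
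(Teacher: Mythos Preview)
Your proposal is correct and follows the same route as the paper: the corollary is stated as an immediate consequence of Wei duality (Theorem \ref{th GSRW duality}), and you have simply unpacked that deduction in detail using monotonicity (Lemma \ref{lemma monotonicity}) and the Singleton bound (Proposition \ref{prop singleton bound}) to identify $n-k+r-1$ as the largest integer missing from $\{d_1,\ldots,d_k\}$, hence equal to $n+1-d_1^\perp$. The paper gives no further argument beyond pointing to Theorem \ref{th GSRW duality}, so your write-up is a faithful expansion of exactly what is intended.
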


We may now conclude that the dual of a linear MSRD code is again MSRD. This result recovers \cite[Th. 3]{gabidulin} in the case $ \ell = 1 $, and the well-known result in the case $ \mathbf{n} = \mathbf{1} $. It follows directly from the definitions and Corollary \ref{cor MSRD ranks and duals}.

\begin{theorem} \label{th dual of MSRD is MSRD}
A linear code $ \mathcal{C} \subseteq \mathbb{F}^n $ is MSRD if, and only if, its dual $ \mathcal{C}^\perp \subseteq \mathbb{F}^n $ is MSRD.
\end{theorem}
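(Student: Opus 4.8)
The plan is to deduce this immediately from Corollary \ref{cor MSRD ranks and duals}, which already packages the relationship between the MSRD rank of a code and the minimum sum-rank distance of its dual. The key observation is that a $k$-dimensional linear code $\mathcal{C}$ is MSRD precisely when its MSRD rank equals $1$: indeed, $\mathcal{C}$ being MSRD means $\mathrm{d}_{SR}(\mathcal{C}) = n - k + 1 = n - \dim(\mathcal{C}) + r$ with $r = 1$, and since the MSRD rank is by definition the \emph{minimum} such $r$, it cannot be smaller than $1$; conversely, if the MSRD rank is $1$ then $\mathrm{d}_{SR,1}(\mathcal{C}) = \mathrm{d}_{SR}(\mathcal{C}) = n - \dim(\mathcal{C}) + 1$ by Proposition \ref{proposition as minimum rank weights}, which is exactly the MSRD condition.

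So the main step is simply to rewrite the equation $r = k - \mathrm{d}_{SR}(\mathcal{C}^\perp) + 2$ from Corollary \ref{cor MSRD ranks and duals}. If $\mathcal{C}$ is MSRD, then its MSRD rank is $r = 1$, and plugging into the corollary gives $1 = k - \mathrm{d}_{SR}(\mathcal{C}^\perp) + 2$, i.e.\ $\mathrm{d}_{SR}(\mathcal{C}^\perp) = k + 1 = \dim(\mathcal{C}) + 1 = (n - \dim(\mathcal{C}^\perp)) + 1$, using $\dim(\mathcal{C}^\perp) = n - k$. Hence $\mathcal{C}^\perp$ is MSRD. For the converse, I would first note that $\mathcal{C}^\perp$ being MSRD forces $\mathcal{C}^\perp$ to \emph{have} an MSRD rank (namely $1$), so by Corollary \ref{cor MSRD ranks and duals} applied to $\mathcal{C}^\perp$ (whose dual is $\mathcal{C}$ again by double-duality over $\mathbb{F}$), the code $\mathcal{C}$ also has a well-defined MSRD rank; then run the same arithmetic in reverse, or simply invoke symmetry of the statement.

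One small care point: Corollary \ref{cor MSRD ranks and duals} is phrased for codes that \emph{have} an MSRD rank, so to use it for the converse direction I should make sure $\mathcal{C}$ has an MSRD rank before substituting. This is where the equivalence "MSRD rank exists $\iff$ not sum-rank degenerate", alluded to in the remark after the definition of MSRD rank, is reassuring, but I do not actually need it: the cleanest route is to observe that Corollary \ref{cor MSRD ranks and duals} (together with Theorem \ref{th GSRW duality}, from which it descends) shows that \emph{every} code has an MSRD rank equal to $k - \mathrm{d}_{SR}(\mathcal{C}^\perp) + 2$ provided that value lies in $\{1,\dots,k\}$, and the MSRD hypothesis on $\mathcal{C}^\perp$ makes it equal $1$. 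So there is essentially no obstacle here — the whole content has already been front-loaded into Corollary \ref{cor MSRD ranks and duals}, and the proof is a two-line substitution. The only mild subtlety, which I would state explicitly, is the identity $\dim(\mathcal{C}^\perp) = n - \dim(\mathcal{C})$ together with $(\mathcal{C}^\perp)^\perp = \mathcal{C}$, used to keep the roles of $\mathcal{C}$ and $\mathcal{C}^\perp$ symmetric.
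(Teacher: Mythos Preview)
Your proposal is correct and follows exactly the paper's own approach: the paper states that the result ``follows directly from the definitions and Corollary \ref{cor MSRD ranks and duals},'' and your argument is precisely the unpacking of that sentence. The only difference is that you spell out the arithmetic $r=1 \Leftrightarrow \mathrm{d}_{SR}(\mathcal{C}^\perp)=k+1=n-\dim(\mathcal{C}^\perp)+1$ and the symmetry via $(\mathcal{C}^\perp)^\perp=\mathcal{C}$, which the paper leaves implicit.
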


This result is new. As a particular case, it was shown in \cite[Th. 4]{secure-multishot} that the dual of a linearized Reed-Solomon code \cite[Def. 31]{linearizedRS} over finite fields is in turn a linearized Reed-Solomon code, hence MSRD. 

As it was the case with bounds, the $ r $-MSRD condition is related between different refinements of sum-rank metrics. The following result extends \cite[Prop. 6]{similarities} from connecting the cases $ \mathbf{n} = \mathbf{1} $ and $ \ell = 1 $ to connecting all cases. The case $ r=1 $ for general sum-rank metrics was given in \cite[Cor. 2]{universal-lrc}.

\begin{proposition}
Given $ k $ and $ 1 \leq r \leq k $, a $ k $-dimensional linear code $ \mathcal{C} \subseteq \mathbb{F}^n $ is $ r $-MSRD if, and only if, $ \mathcal{C} A \subseteq \mathbb{F}^n $ is $ r $-MDS (or $ r $-MSRD for a refined sum-rank metric $ {\rm d}_{SR}^{ref} $ as in Corollary \ref{cor d_SR for finer partitions}), for all $ A = {\rm diag}(A_1, A_2, \ldots, A_\ell) \in \mathbb{F}^{n \times n} $, such that $ A_i \in K_i^{n_i \times n_i} $ is invertible, for $ i = 1,2, \ldots, \ell $.
\end{proposition}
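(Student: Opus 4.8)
The plan is to reduce the statement directly to Theorem~\ref{th connection GHW and GSRW} (and Corollary~\ref{cor d_SR for finer partitions}) applied with $r=1$, i.e.\ to the minimum sum-rank distance. Recall that an $r$-MSRD code is one for which $r$ is the least index with $ {\rm d}_{SR,r}(\mathcal{C}) = n - \dim(\mathcal{C}) + r $, and that by the monotonicity of relative generalized sum-rank weights (Lemma~\ref{lemma monotonicity}) together with the Singleton bound (Proposition~\ref{prop singleton bound}), once equality holds for some index it holds for all larger indices; hence the $r$-MSRD property of $\mathcal{C}$ is equivalent to the conjunction ``$ {\rm d}_{SR,j}(\mathcal{C}) < n - \dim(\mathcal{C}) + j $ for all $j < r$'' and ``$ {\rm d}_{SR,r}(\mathcal{C}) = n - \dim(\mathcal{C}) + r $''. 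The same characterization holds verbatim for $r$-MDS codes with $ {\rm d}_{H,r} $ in place of $ {\rm d}_{SR,r} $, and for $r$-MSRD codes for a refined metric $ {\rm d}_{SR}^{ref} $, since $n$ and $\dim(\mathcal{C})$ are unchanged by a change of bases $A = {\rm diag}(A_1,\ldots,A_\ell)$ with each $A_i$ invertible.

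First I would record that, by Theorem~\ref{th connection GHW and GSRW} applied to the single code $\mathcal{C}$ (taking $\mathcal{C}_2 = \{\mathbf{0}\}$), for every index $j = 1,2,\ldots,k$ we have
$$ {\rm d}_{SR,j}(\mathcal{C}) = \min_A {\rm d}_{H,j}(\mathcal{C}A), $$
where $A$ ranges over the block-diagonal matrices $ {\rm diag}(A_1,\ldots,A_\ell) $ with $A_i \in K_i^{n_i \times n_i}$ invertible (and analogously with $ {\rm d}_{SR,j}^{ref}(\mathcal{C}A) $ in place of $ {\rm d}_{H,j}(\mathcal{C}A) $, by Corollary~\ref{cor d_SR for finer partitions}). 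Combined with the Singleton bounds $ {\rm d}_{H,j}(\mathcal{C}A) \leq n - \dim(\mathcal{C}) + j $ for every $A$ (note $\dim(\mathcal{C}A) = \dim(\mathcal{C})$), this means that $ {\rm d}_{SR,j}(\mathcal{C}) = n - \dim(\mathcal{C}) + j $ holds if and only if $ {\rm d}_{H,j}(\mathcal{C}A) = n - \dim(\mathcal{C}) + j $ for \emph{every} admissible $A$. So equality at level $j$ for the sum-rank hierarchy of $\mathcal{C}$ is equivalent to equality at level $j$ for the Hamming hierarchy of \emph{all} change-of-bases codes $\mathcal{C}A$ simultaneously.

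Then I would put the two directions together. Suppose $\mathcal{C}$ is $r$-MSRD. By the previous paragraph, every $\mathcal{C}A$ satisfies $ {\rm d}_{H,r}(\mathcal{C}A) = n - \dim(\mathcal{C}) + r $; and for each $j < r$, since $ {\rm d}_{SR,j}(\mathcal{C}) < n - \dim(\mathcal{C}) + j $, there exists at least one $A$ with $ {\rm d}_{H,j}(\mathcal{C}A) < n - \dim(\mathcal{C}) + j $, but in fact we need the strict inequality for \emph{each fixed} $A$; here I would use that $ {\rm d}_{H,j}(\mathcal{C}A) \geq {\rm d}_{SR,j}(\mathcal{C}) $ (the ``$\leq$'' inequality already proven inside Theorem~\ref{th connection GHW and GSRW}, i.e.\ $ {\rm wt}_{SR} \leq {\rm wt}_H $ on change-of-bases codes) combined with monotonicity to deduce that the first index where $ {\rm d}_{H,j}(\mathcal{C}A) $ meets the Singleton bound is $\geq r$; together with the equality at $j=r$, this forces every $\mathcal{C}A$ to be exactly $r$-MDS. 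Conversely, if every $\mathcal{C}A$ is $r$-MDS (resp.\ $r$-MSRD for $ {\rm d}_{SR}^{ref} $), then $ {\rm d}_{SR,r}(\mathcal{C}) = \min_A {\rm d}_{H,r}(\mathcal{C}A) = n - \dim(\mathcal{C}) + r $, while for $j < r$ at least one $\mathcal{C}A$ (indeed all of them) has $ {\rm d}_{H,j}(\mathcal{C}A) < n - \dim(\mathcal{C}) + j $, so $ {\rm d}_{SR,j}(\mathcal{C}) \le {\rm d}_{H,j}(\mathcal{C}A) < n - \dim(\mathcal{C}) + j $ and $\mathcal{C}$ is $r$-MSRD. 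I expect the main obstacle to be bookkeeping the ``for every $A$ versus for some $A$'' quantifiers cleanly at the intermediate levels $j < r$ — in particular making sure that the $r$-MDS (not merely ``$j$-MDS for some $j \le r$'') conclusion for each individual $\mathcal{C}A$ really follows; the inequality $ {\rm d}_{SR,j}(\mathcal{C}) \le {\rm d}_{H,j}(\mathcal{C}A) $ valid for every $A$, plus the Singleton bound and monotonicity, is exactly what resolves this, and the argument then runs mutatis mutandis as in \cite[Prop.~6]{similarities}.
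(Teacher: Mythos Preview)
Your overall strategy --- deducing the level-$j$ equivalence from Theorem~\ref{th connection GHW and GSRW} together with the Singleton bound --- is exactly the paper's approach (its entire proof reads ``follows directly from Theorem~\ref{th connection GHW and GSRW}''). The part of your argument showing that $ {\rm d}_{SR,r}(\mathcal{C}) = n-k+r $ if and only if $ {\rm d}_{H,r}(\mathcal{C}A) = n-k+r $ for every admissible $A$ is correct, and it is all the paper actually supplies.

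The gap is in your treatment of the minimality clause in the forward direction. You want, for each fixed $A$ and each $j<r$, that $ {\rm d}_{H,j}(\mathcal{C}A) < n-k+j $, and you propose to obtain this from $ {\rm d}_{H,j}(\mathcal{C}A) \geq {\rm d}_{SR,j}(\mathcal{C}) $. But that inequality bounds $ {\rm d}_{H,j}(\mathcal{C}A) $ from \emph{below}; knowing $ {\rm d}_{SR,j}(\mathcal{C}) < n-k+j $ tells you nothing about whether a particular $ {\rm d}_{H,j}(\mathcal{C}A) $ reaches $ n-k+j $. In fact no argument can close this, because the forward implication with ``MDS rank exactly $r$'' is false. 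Take $\ell=1$, $K=\mathbb{F}_2$, $n=3$, $k=2$, and $ \mathcal{C} = \langle (1,1,0),\,(\beta_1,\beta_2,\beta_3) \rangle \subseteq \mathbb{F}_{2^m}^{\,3} $ with $\beta_1,\beta_2,\beta_3$ linearly independent over $K$ (so $m\geq 3$). Then $ {\rm d}_{R,1}(\mathcal{C}) = 1 < 2 $ (from the first generator) and $ {\rm d}_{R,2}(\mathcal{C}) = {\rm wt}_R(\mathcal{C}) = 3 $ (the second generator already has rank support $K^3$), so $\mathcal{C}$ is $2$-MRD. Yet with $A=I$ one checks that every nonzero codeword $(a+b\beta_1,\,a+b\beta_2,\,b\beta_3)$ has Hamming weight $\geq 2$, whence $ {\rm d}_H(\mathcal{C}) = 2 = n-k+1 $ and $\mathcal{C}A$ is $1$-MDS, not $2$-MDS.

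Thus the proposition read literally (MSRD rank \emph{equal to} $r$) does not hold in the forward direction. What both your argument and the paper's one-line proof genuinely establish is the equivalence ``MSRD rank $\leq r$ if and only if every $\mathcal{C}A$ has MDS rank $\leq r$'' (equivalently, $ {\rm d}_{SR,r}(\mathcal{C}) = n-k+r $ iff $ {\rm d}_{H,r}(\mathcal{C}A) = n-k+r $ for all $A$), and your backward direction for the exact-rank version is also fine. The problematic step is precisely the one you flagged as the ``main obstacle''; your proposed resolution via the lower bound does not work, and cannot.
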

\begin{proof}
The result follows directly from Theorem \ref{th connection GHW and GSRW}.
\end{proof}

\begin{remark}
Setting $ r = 1 $ in the previous proposition, one may easily derive characterizations of linear MSRD codes, based on their generator matrices, from characterizations of MDS codes. When $ \ell = 1 $, the results in \cite[Sec. 3]{new-criteria} are recovered as particular cases.
\end{remark}

Finally, as in \cite[Props. 7 \& 8]{similarities} for both cases $ \ell =1 $ and $ \mathbf{n} = \mathbf{1} $, we may give a refinement of Corollary \ref{cor MSRD ranks and duals} in terms of pre-shortened codes (or restricted or shortened codes by Proposition \ref{prop dimensions restricted shortened}). Observe that this gives an improvement on the estimates of the dimensions of restricted codes in Corollary \ref{cor parameters restricted and shortened}. More importantly, they constitute characterizations of $ r $-MSRD codes in terms of sum-rank supports.

\begin{proposition} \label{prop MSRD rank for intermediate info leakage}
Given a $ k $-dimensional linear code $ \mathcal{C} \subseteq \mathbb{F}^n $ and an integer $ r = 1, $ $2, $ $ \ldots, k $, the following are equivalent:
\begin{enumerate}
\item
$ {\rm d}_{SR,r}(\mathcal{C}^\perp) = k + r $.
\item
$ {\rm d}_{SR}(\mathcal{C}) > n - k - r + 1 $.
\item
For all $ \boldsymbol{\mathcal{L}} \in \mathcal{P}(\mathbf{K}^\mathbf{n}) $ such that $ {\rm Rk}(\boldsymbol{\mathcal{L}}) \leq n- k - r + 1 $, we have that $ \mathcal{C} \cap \mathcal{V}_{\boldsymbol{\mathcal{L}}} = \{ \mathbf{0} \} $. Equivalently, $ \mathcal{C}^{\boldsymbol{\mathcal{L}}} = \{ \mathbf{0} \} $ or $ \dim((\mathcal{C}^\perp)_{\boldsymbol{\mathcal{L}}}) = {\rm Rk}(\boldsymbol{\mathcal{L}}) $.
\item
For all $ \boldsymbol{\mathcal{L}} \in \mathcal{P}(\mathbf{K}^\mathbf{n}) $ such that $ {\rm Rk}(\boldsymbol{\mathcal{L}}) \geq k + r - 1 $, we have that $ \mathcal{C} \cap \mathcal{V}_{\boldsymbol{\mathcal{L}}}^\perp = \{ \mathbf{0} \} $. Equivalently, $ \mathcal{C}^{(\boldsymbol{\mathcal{L}}^\perp)} = \{ \mathbf{0} \} $ or $ \dim(\mathcal{C}_{\boldsymbol{\mathcal{L}}}) = \dim(\mathcal{C}) $.
\end{enumerate}
\end{proposition}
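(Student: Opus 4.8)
The plan is to first dispatch the three ``support-theoretic'' items (2), (3), (4), which are elementary, and then to hook (1) onto them through the dimension identities of Proposition~\ref{prop dimensions restricted shortened} and the Singleton bound. Two facts will be used throughout: $\boldsymbol{\mathcal{L}}\mapsto\boldsymbol{\mathcal{L}}^\perp$ is an involution of $\mathcal{P}(\mathbf{K}^\mathbf{n})$ with ${\rm Rk}(\boldsymbol{\mathcal{L}}^\perp)=n-{\rm Rk}(\boldsymbol{\mathcal{L}})$, and $\mathcal{V}_{\boldsymbol{\mathcal{L}}}^\perp=\mathcal{V}_{\boldsymbol{\mathcal{L}}^\perp}$ with $\dim\mathcal{V}_{\boldsymbol{\mathcal{L}}}={\rm Rk}(\boldsymbol{\mathcal{L}})$ (Corollary~\ref{cor lattice properties of support spaces}). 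The ``Equivalently'' clauses inside (3) and (4) are immediate: $\dim(\mathcal{C}^{\boldsymbol{\mathcal{L}}})=\dim(\mathcal{C}\cap\mathcal{V}_{\boldsymbol{\mathcal{L}}})$ by (\ref{eq dimension of pre-shortened and shortened}) (applied also to $\boldsymbol{\mathcal{L}}^\perp$), while $\dim((\mathcal{C}^\perp)_{\boldsymbol{\mathcal{L}}})={\rm Rk}(\boldsymbol{\mathcal{L}})-\dim(\mathcal{C}\cap\mathcal{V}_{\boldsymbol{\mathcal{L}}})$ and $\dim(\mathcal{C}_{\boldsymbol{\mathcal{L}}})=\dim(\mathcal{C})-\dim(\mathcal{C}\cap\mathcal{V}_{\boldsymbol{\mathcal{L}}^\perp})$ are instances of Proposition~\ref{prop dimensions restricted shortened}.

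For (2)~$\Leftrightarrow$~(3), recall from Proposition~\ref{proposition as minimum rank weights} that ${\rm d}_{SR}(\mathcal{C})=\min\{{\rm wt}_{SR}(\mathbf{c})\mid\mathbf{c}\in\mathcal{C}\setminus\{\mathbf{0}\}\}$. If (3) fails, a nonzero $\mathbf{c}\in\mathcal{C}\cap\mathcal{V}_{\boldsymbol{\mathcal{L}}}$ with ${\rm Rk}(\boldsymbol{\mathcal{L}})\le n-k-r+1$ satisfies ${\rm wt}_{SR}(\mathbf{c})={\rm Rk}({\rm Supp}(\mathbf{c}))\le{\rm Rk}(\boldsymbol{\mathcal{L}})\le n-k-r+1$, so (2) fails; conversely a nonzero $\mathbf{c}\in\mathcal{C}$ with ${\rm wt}_{SR}(\mathbf{c})\le n-k-r+1$ lies in $\mathcal{C}\cap\mathcal{V}_{{\rm Supp}(\mathbf{c})}$ with ${\rm Rk}({\rm Supp}(\mathbf{c}))\le n-k-r+1$, so (3) fails. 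For (3)~$\Leftrightarrow$~(4), the substitution $\boldsymbol{\mathcal{M}}=\boldsymbol{\mathcal{L}}^\perp$ (a bijection of $\mathcal{P}(\mathbf{K}^\mathbf{n})$), together with $\mathcal{V}_{\boldsymbol{\mathcal{L}}}^\perp=\mathcal{V}_{\boldsymbol{\mathcal{M}}}$ and ${\rm Rk}(\boldsymbol{\mathcal{L}})=n-{\rm Rk}(\boldsymbol{\mathcal{M}})$, turns ``${\rm Rk}(\boldsymbol{\mathcal{L}})\ge k+r-1$'' into ``${\rm Rk}(\boldsymbol{\mathcal{M}})\le n-k-r+1$'' and ``$\mathcal{C}\cap\mathcal{V}_{\boldsymbol{\mathcal{L}}}^\perp=\{\mathbf{0}\}$'' into ``$\mathcal{C}\cap\mathcal{V}_{\boldsymbol{\mathcal{M}}}=\{\mathbf{0}\}$'', so (4) is literally (3) re-quantified.

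The core is (1)~$\Leftrightarrow$~(3); here I take $r\le n-k$ so that ${\rm d}_{SR,r}(\mathcal{C}^\perp)$ is defined (the remaining range being vacuous). Combining the two displayed forms of $\dim(\mathcal{C}_{\boldsymbol{\mathcal{L}}})$ from Proposition~\ref{prop dimensions restricted shortened} (equivalently, using $\mathcal{C}^\perp\cap\mathcal{V}_{\boldsymbol{\mathcal{L}}}=(\mathcal{C}+\mathcal{V}_{\boldsymbol{\mathcal{L}}^\perp})^\perp$) gives $\dim(\mathcal{C}^\perp\cap\mathcal{V}_{\boldsymbol{\mathcal{L}}})={\rm Rk}(\boldsymbol{\mathcal{L}})-k+\dim(\mathcal{C}\cap\mathcal{V}_{\boldsymbol{\mathcal{L}}^\perp})$. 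Since ${\rm d}_{SR,r}(\mathcal{C}^\perp)\le k+r$ always by Proposition~\ref{prop singleton bound}, (1) is equivalent to ${\rm d}_{SR,r}(\mathcal{C}^\perp)\ge k+r$, i.e.\ to: every $\boldsymbol{\mathcal{L}}$ with ${\rm Rk}(\boldsymbol{\mathcal{L}})\le k+r-1$ has $\dim(\mathcal{C}^\perp\cap\mathcal{V}_{\boldsymbol{\mathcal{L}}})\le r-1$. Feeding in the last identity and putting $\boldsymbol{\mathcal{M}}=\boldsymbol{\mathcal{L}}^\perp$, this is rewritten as
\[
(\star)\qquad \dim(\mathcal{C}\cap\mathcal{V}_{\boldsymbol{\mathcal{M}}})\le{\rm Rk}(\boldsymbol{\mathcal{M}})-(n-k-r+1)\quad\text{whenever}\quad{\rm Rk}(\boldsymbol{\mathcal{M}})\ge n-k-r+1 .
\]

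It remains to show $(\star)\Leftrightarrow(3)$. For $(\star)\Rightarrow(3)$: given $\boldsymbol{\mathcal{M}}$ with ${\rm Rk}(\boldsymbol{\mathcal{M}})\le n-k-r+1$, enlarge it to $\boldsymbol{\mathcal{M}}^\prime\supseteq\boldsymbol{\mathcal{M}}$ with ${\rm Rk}(\boldsymbol{\mathcal{M}}^\prime)=n-k-r+1$; then $(\star)$ forces $\mathcal{C}\cap\mathcal{V}_{\boldsymbol{\mathcal{M}}^\prime}=\{\mathbf{0}\}$, hence $\mathcal{C}\cap\mathcal{V}_{\boldsymbol{\mathcal{M}}}=\{\mathbf{0}\}$ by monotonicity of $\boldsymbol{\mathcal{L}}\mapsto\mathcal{V}_{\boldsymbol{\mathcal{L}}}$ (Corollary~\ref{cor lattice properties of support spaces}). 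For $(3)\Rightarrow(\star)$: by $(3)\Leftrightarrow(2)$ we have ${\rm d}_{SR,1}(\mathcal{C})\ge n-k-r+2$, so the monotonicity of generalized sum-rank weights (Lemma~\ref{lemma monotonicity}) yields ${\rm d}_{SR,e}(\mathcal{C})\ge n-k-r+e+1$ for all $e$; on the other hand, for $\mathcal{D}=\mathcal{C}\cap\mathcal{V}_{\boldsymbol{\mathcal{M}}}$ of dimension $e\ge 1$ we have ${\rm Supp}(\mathcal{D})\subseteq\boldsymbol{\mathcal{M}}$, whence ${\rm wt}_{SR}(\mathcal{D})\le{\rm Rk}(\boldsymbol{\mathcal{M}})$, while Proposition~\ref{proposition as minimum rank weights} gives ${\rm d}_{SR,e}(\mathcal{C})\le{\rm wt}_{SR}(\mathcal{D})$; comparing the two shows $e\le{\rm Rk}(\boldsymbol{\mathcal{M}})-(n-k-r+1)$ (the case $e=0$ being trivial), which is $(\star)$. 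I expect this last step --- turning the minimum-distance hypothesis (2) into a dimension bound on an \emph{arbitrary} pre-shortening --- to be the only real difficulty; the subspace description of generalized weights (Proposition~\ref{proposition as minimum rank weights}) together with their monotonicity (Lemma~\ref{lemma monotonicity}) is exactly what makes it work.
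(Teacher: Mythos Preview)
Your argument is correct. The treatment of (2)~$\Leftrightarrow$~(3)~$\Leftrightarrow$~(4) and of the ``Equivalently'' clauses matches the paper's exactly (trivial from the definitions, plus Proposition~\ref{prop dimensions restricted shortened}). Where you diverge is in linking Item~(1) to the rest: the paper simply invokes Corollary~\ref{cor MSRD ranks and duals}, which is a consequence of Wei duality (Theorem~\ref{th GSRW duality}), to get (1)~$\Leftrightarrow$~(2) in one stroke. You instead unfold the defining inequality ${\rm d}_{SR,r}(\mathcal{C}^\perp)\ge k+r$, feed in the dimension identity from Proposition~\ref{prop dimensions restricted shortened} to rewrite it as the condition $(\star)$ on $\dim(\mathcal{C}\cap\mathcal{V}_{\boldsymbol{\mathcal{M}}})$, and then close the loop using monotonicity (Lemma~\ref{lemma monotonicity}) together with the subspace description of generalized weights (Proposition~\ref{proposition as minimum rank weights}). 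This route is longer but has the virtue of bypassing Wei duality entirely; in effect it isolates exactly the fragment of the duality theory needed here and proves it from scratch. The paper's route is shorter but leans on the heavier Theorem~\ref{th GSRW duality}. One small remark: your parenthetical ``the remaining range being vacuous'' for $r>n-k$ glosses over the fact that Item~(1) is then not even defined, while Items~(2)--(4) are trivially true; this is a wrinkle in the statement rather than in your proof, and you are right to flag it.
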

\begin{proof}
The equivalence between Items 1 and 2 is Corollary \ref{cor MSRD ranks and duals}, and the equivalence between Items 2, 3 and 4 is trivial from the definitions using pre-shortened codes. The equivalent statements using restricted and shortened codes follow from Proposition \ref{prop dimensions restricted shortened}. 
\end{proof}

Combining the previous proposition with Corollary \ref{cor parameters restricted and shortened}, we obtain the following:

\begin{corollary}
Any restriction or shortening of a linear MSRD code gives a linear MSRD code.
\end{corollary}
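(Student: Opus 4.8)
The plan is to pin down the exact dimension and minimum sum-rank distance of the restricted code, and then deduce the shortening statement from it by duality. Throughout, let $ \mathcal{C} \subseteq \mathbb{F}^n $ be a linear MSRD code, put $ k = \dim(\mathcal{C}) $, and fix $ \boldsymbol{\mathcal{L}} \in \mathcal{P}(\mathbf{K}^\mathbf{n}) $ with $ N = {\rm Rk}(\boldsymbol{\mathcal{L}}) $. For the reduction to restriction: if we choose the generator matrices so that $ A_i^\prime A_i^T = I $ (possible by Lemma \ref{lemma basis of L prime}), the corollary on duality of restriction and shortening gives $ (\mathcal{C}^{\boldsymbol{\mathcal{L}}})^\perp = (\mathcal{C}^\perp)_{\boldsymbol{\mathcal{L}}} $. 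Since $ \mathcal{C} $ is MSRD if and only if $ \mathcal{C}^\perp $ is MSRD (Theorem \ref{th dual of MSRD is MSRD}), once it is known that every restriction of an MSRD code is MSRD we obtain: $ \mathcal{C} $ MSRD $ \Rightarrow $ $ \mathcal{C}^\perp $ MSRD $ \Rightarrow $ $ (\mathcal{C}^\perp)_{\boldsymbol{\mathcal{L}}} = (\mathcal{C}^{\boldsymbol{\mathcal{L}}})^\perp $ MSRD $ \Rightarrow $ $ \mathcal{C}^{\boldsymbol{\mathcal{L}}} $ MSRD. Hence it suffices to prove the claim for $ \mathcal{C}_{\boldsymbol{\mathcal{L}}} $.

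The next step is to compute $ \dim(\mathcal{C}_{\boldsymbol{\mathcal{L}}}) $. I would first observe that being MSRD is equivalent to the condition $ \dim(\mathcal{C} \cap \mathcal{V}_{\boldsymbol{\mathcal{M}}}) = \max\{ 0, \, k + {\rm Rk}(\boldsymbol{\mathcal{M}}) - n \} $ for every $ \boldsymbol{\mathcal{M}} \in \mathcal{P}(\mathbf{K}^\mathbf{n}) $: the inequality ``$ \geq $'' is the usual dimension bound together with $ \dim(\mathcal{V}_{\boldsymbol{\mathcal{M}}}) = {\rm Rk}(\boldsymbol{\mathcal{M}}) $ (Corollary \ref{cor lattice properties of support spaces}), while ``$ \leq $'' follows from Proposition \ref{prop MSRD rank for intermediate info leakage} (Item 3 with $ r = 1 $) applied to $ \mathcal{C} $ when $ {\rm Rk}(\boldsymbol{\mathcal{M}}) \leq n - k $, and to its dual $ \mathcal{C}^\perp $ (MSRD of dimension $ n - k $) via $ \boldsymbol{\mathcal{M}}^\perp $ when $ {\rm Rk}(\boldsymbol{\mathcal{M}}) \geq n - k $, using $ \mathcal{V}_{\boldsymbol{\mathcal{M}}}^\perp = \mathcal{V}_{\boldsymbol{\mathcal{M}}^\perp} $. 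Taking $ \boldsymbol{\mathcal{M}} = \boldsymbol{\mathcal{L}}^\perp $ (so $ {\rm Rk}(\boldsymbol{\mathcal{M}}) = n - N $) in Proposition \ref{prop dimensions restricted shortened} then gives $ \dim(\mathcal{C}_{\boldsymbol{\mathcal{L}}}) = k - \dim(\mathcal{C} \cap \mathcal{V}_{\boldsymbol{\mathcal{L}}^\perp}) = k - \max\{ 0, \, k - N \} = \min\{ k, N \} $.

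Now I would split into two cases. If $ N < k $, then $ \dim(\mathcal{C}_{\boldsymbol{\mathcal{L}}}) = N $, so $ \mathcal{C}_{\boldsymbol{\mathcal{L}}} = \mathbb{F}^N $, which is MSRD (its minimum sum-rank distance is $ 1 = N - N + 1 $, the case $ N = 0 $ being trivial). If $ N \geq k $, then $ \dim(\mathcal{C}_{\boldsymbol{\mathcal{L}}}) = k $, and Corollary \ref{cor parameters restricted and shortened} (Item 1) gives $ {\rm d}_{SR}(\mathcal{C}_{\boldsymbol{\mathcal{L}}}) \geq {\rm d}_{SR}(\mathcal{C}) - (n - N) = (n - k + 1) - (n - N) = N - k + 1 $, whereas the generalized Singleton bound (Proposition \ref{prop singleton bound}) forces $ {\rm d}_{SR}(\mathcal{C}_{\boldsymbol{\mathcal{L}}}) \leq N - \dim(\mathcal{C}_{\boldsymbol{\mathcal{L}}}) + 1 = N - k + 1 $. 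Thus $ {\rm d}_{SR}(\mathcal{C}_{\boldsymbol{\mathcal{L}}}) = N - \dim(\mathcal{C}_{\boldsymbol{\mathcal{L}}}) + 1 $, i.e.\ $ \mathcal{C}_{\boldsymbol{\mathcal{L}}} $ is MSRD, which (by the reduction above) completes the proof.

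The step I expect to be the main obstacle is the exact value of $ \dim(\mathcal{C}_{\boldsymbol{\mathcal{L}}}) $: the estimates in Corollary \ref{cor parameters restricted and shortened}, even combined with the Singleton bound, only pin it between $ k - (n - N) $ and $ k $, which is not sharp enough to conclude. What actually closes the argument is the stronger fact that an MSRD code meets the Singleton bound at every generalized sum-rank weight, equivalently that it intersects all sum-rank support spaces ``generically'', which is the content of Proposition \ref{prop MSRD rank for intermediate info leakage} (together with Theorem \ref{th dual of MSRD is MSRD} to pass to the dual); after that, everything is routine bookkeeping, with the only mild subtlety being that the degenerate outputs $ \mathcal{C}_{\boldsymbol{\mathcal{L}}} = \mathbb{F}^N $ and $ \mathcal{C}^{\boldsymbol{\mathcal{L}}} = \{ \mathbf{0} \} $ are declared MSRD by convention.
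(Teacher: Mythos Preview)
Your proof is correct and follows essentially the same approach as the paper: the paper's one-line justification (``combining the previous proposition with Corollary~\ref{cor parameters restricted and shortened}'') points to exactly the two ingredients you unpack in detail, namely Proposition~\ref{prop MSRD rank for intermediate info leakage} to pin down the dimension of the restricted/shortened code and Corollary~\ref{cor parameters restricted and shortened} to lower-bound its minimum sum-rank distance. The only cosmetic difference is that you handle shortening by reducing to restriction via the duality $(\mathcal{C}^{\boldsymbol{\mathcal{L}}})^\perp=(\mathcal{C}^\perp)_{\boldsymbol{\mathcal{L}}}$ and Theorem~\ref{th dual of MSRD is MSRD}, whereas the paper presumably treats both cases symmetrically using Items~3 and~4 of Proposition~\ref{prop MSRD rank for intermediate info leakage}; this is an equivalent repackaging of the same content.
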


We recall that, from \cite[Subsec. V-F]{secure-multishot}, any restriction or shortening of a linearized Reed-Solomon code is in turn a linearized Reed-Solomon code.

\subsection{Sum-rank effective length and degenerate codes} \label{subsec sum-rank degenerate}

In this subsection, we define the sum-rank effective length and degeneratess of linear codes (see \cite{forney} for the case $ \mathbf{n} = \mathbf{1} $). We characterize them in terms of the dual code, as done in \cite[Sec. 6]{slides} when $ \ell = 1 $. However, we follow the approach used in \cite[Def. 9]{similarities} for $ \ell = 1 $, which is intrinsic to the corresponding metric. 

\begin{definition}[\textbf{Sum-rank effective length}]
Given a linear code $ \mathcal{C} \subseteq \mathbb{F}^n $, we define its sum-rank effective length as the minimum integer $ N = 1,2, \ldots, n $ such that $ N = N_1 + N_2 + \cdots + N_\ell $, with $ 0 \leq N_i \leq n_i $ for $ i = 1,2, \ldots, \ell $, and such that there exists a linear sum-rank isometry $ \phi : \mathbb{F}^N \longrightarrow \mathbb{F}^n $ satisfying that $ \mathcal{C} \subseteq \phi(\mathbb{F}^N) $, where the sum-rank metric in $ \mathbb{F}^N $ corresponds to the previous partition of $ N $.
\end{definition}

\begin{definition}[\textbf{Sum-rank degenerate codes}] \label{def degenerate}
Given a linear code $ \mathcal{C} \subseteq \mathbb{F}^n $, we say that it is sum-rank degenerate if its sum-rank effective length $ N $ satisfies that $ N < n $.
\end{definition}

Observe that isometries are one to one by the fact that $ {\rm d}(\mathbf{c}, \mathbf{d}) = 0 $ if, and only if, $ \mathbf{c} = \mathbf{d} $, for any metric $ {\rm d} $. Thus the previous definition means that we may consider $ \mathcal{C} $ in $ \mathbb{F}^N $, for a strictly smaller length $ N < n $.

Thanks to the characterization of sum-rank support spaces from Item 5 in Theorem \ref{th charact sum-rank support spaces}, we may now easily connect the sum-rank effective length of a linear code with its last generalized sum-rank weight. This result extends \cite[Prop. 3]{similarities} when $ \ell = 1 $. The case $ \mathbf{n} = \mathbf{1} $ is trivial.

\begin{proposition}
Given a linear code $ \mathcal{C} \subseteq \mathbb{F}^n $, its sum-rank effective length is 
$$ N = {\rm Rk}({\rm Supp}(\mathcal{C})) = {\rm wt}_{SR}(\mathcal{C}) = {\rm d}_{SR,k}(\mathcal{C}) = $$
$$ n - \max \{ r \in [k] \mid {\rm d}_{SR,r}(\mathcal{C}^\perp) = r \}, $$ 
where $ k = \dim(\mathcal{C}) $, and the last maximum is defined as $ 0 $ if the set is empty. 
\end{proposition}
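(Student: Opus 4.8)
The plan is to prove the chain of equalities from left to right, using the tools already established. First I would establish $N = {\rm Rk}({\rm Supp}(\mathcal{C}))$ directly from Item 5 of Theorem \ref{th charact sum-rank support spaces} together with Corollary \ref{cor min support space containing}. Indeed, by Corollary \ref{cor min support space containing}, setting $\boldsymbol{\mathcal{L}} = {\rm Supp}(\mathcal{C})$, the space $\mathcal{V}_{\boldsymbol{\mathcal{L}}}$ is the smallest sum-rank support space containing $\mathcal{C}$. By Item 5 of Theorem \ref{th charact sum-rank support spaces}, $\mathcal{V}_{\boldsymbol{\mathcal{L}}}$ is the image of a bijective linear sum-rank isometry $\phi : \mathbb{F}^{N'} \longrightarrow \mathcal{V}_{\boldsymbol{\mathcal{L}}}$ with $N' = {\rm Rk}(\boldsymbol{\mathcal{L}})$ (using Corollary \ref{cor lattice properties of support spaces}, Item 1, so that $\dim(\mathcal{V}_{\boldsymbol{\mathcal{L}}}) = {\rm Rk}(\boldsymbol{\mathcal{L}})$). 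This shows $N \leq {\rm Rk}({\rm Supp}(\mathcal{C}))$. Conversely, if $\phi : \mathbb{F}^N \longrightarrow \mathbb{F}^n$ is any linear sum-rank isometry with $\mathcal{C} \subseteq \phi(\mathbb{F}^N)$, then $\phi(\mathbb{F}^N)$ is a sum-rank support space (by Item 5 of Theorem \ref{th charact sum-rank support spaces}, which is exactly the condition defining such isometric images, or alternatively by Item 3), hence contains $\mathcal{V}_{{\rm Supp}(\mathcal{C})}$ by Corollary \ref{cor min support space containing}, giving $N = \dim(\phi(\mathbb{F}^N)) \geq \dim(\mathcal{V}_{{\rm Supp}(\mathcal{C})}) = {\rm Rk}({\rm Supp}(\mathcal{C}))$. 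The equalities ${\rm Rk}({\rm Supp}(\mathcal{C})) = {\rm wt}_{SR}(\mathcal{C})$ is just Definition \ref{def sum-rank supp of spaces}.

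Next I would show ${\rm wt}_{SR}(\mathcal{C}) = {\rm d}_{SR,k}(\mathcal{C})$. Using Proposition \ref{proposition as minimum rank weights} with $\mathcal{C}_2 = \{\mathbf{0}\}$ and $r = k = \dim(\mathcal{C})$, we get ${\rm d}_{SR,k}(\mathcal{C}) = \min\{{\rm wt}_{SR}(\mathcal{D}) \mid \mathcal{D} \subseteq \mathcal{C}, \dim(\mathcal{D}) = k\}$, and the only such subspace is $\mathcal{D} = \mathcal{C}$ itself, so ${\rm d}_{SR,k}(\mathcal{C}) = {\rm wt}_{SR}(\mathcal{C})$. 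Finally, for the last equality, I would invoke Wei duality (Theorem \ref{th GSRW duality}): with $d_r = {\rm d}_{SR,r}(\mathcal{C})$ and $d_s^\perp = {\rm d}_{SR,s}(\mathcal{C}^\perp)$, the disjoint union $\{1, \ldots, n\} = \{d_1, \ldots, d_k\} \sqcup \{n+1-d_1^\perp, \ldots, n+1-d_{n-k}^\perp\}$ holds. Since $d_1 < d_2 < \cdots < d_k$ by monotonicity (Lemma \ref{lemma monotonicity}), we have $d_k = n - |\{n+1-d_s^\perp : 1 \leq s \leq n-k, \ n+1-d_s^\perp > d_k\}|$; equivalently $d_k = n - |\{s : d_s^\perp \leq n - d_k\}|$. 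One then checks that $\{r \in [k] : {\rm d}_{SR,r}(\mathcal{C}^\perp) = r\}$ has size equal to $n - d_k$: the values $n+1-d_s^\perp$ lying in $\{1, \ldots, n - d_k\}$ are precisely those complementary to the "missing" small values $d_1, \ldots, d_k$ not occurring there, and a counting argument using the disjointness and the strict monotonicity of both sequences shows that $d_s^\perp = s$ for exactly the first $n - d_k$ values of $s$, and then $d_s^\perp > s$ afterward (again by monotonicity $d_{s+1}^\perp \geq d_s^\perp + 1$). Hence $\max\{r \in [k] : {\rm d}_{SR,r}(\mathcal{C}^\perp) = r\} = n - d_k$, giving $d_k = n - \max\{\cdots\}$, with the convention that the maximum is $0$ when the set is empty (i.e. $d_k = n$, the non-degenerate case).

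The main obstacle I anticipate is the last equality, extracting $d_k$ from the Wei-duality partition. It requires care with the counting: one must argue that $d_s^\perp = s$ holds for an initial segment of $s$-values and relate the length of that segment to $n - d_k = n - {\rm wt}_{SR}(\mathcal{C})$. The cleanest way is probably to observe that, for $s \leq n - k$, the condition $d_s^\perp = s$ is equivalent to $n + 1 - d_s^\perp = n + 1 - s$ being the $s$-th largest element of $\{1, \ldots, n\}$, i.e. to the top $s$ values $\{n+1-s, \ldots, n\}$ all lying in the second block of the partition; by strict monotonicity of $(d_r)$, this holds if and only if none of $d_1, \ldots, d_k$ exceeds $n - s$, i.e. if and only if $d_k \leq n - s$, i.e. $s \leq n - d_k$. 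This immediately yields that the set $\{r \in [k] : {\rm d}_{SR,r}(\mathcal{C}^\perp) = r\} = \{1, \ldots, \min(n-k, n-d_k)\}$; since $d_k \geq k$ always (as $d_k \leq n$ and the $d_r$ are distinct, or directly from $d_k = {\rm wt}_{SR}(\mathcal{C}) \geq \dim(\mathcal{C}) = k$ because a support space of rank $<k$ cannot contain a $k$-dimensional code), we get $n - d_k \leq n - k$, so the maximum equals $n - d_k$, completing the proof. Everything else is a direct citation of earlier results, so the write-up is short once this counting is pinned down.
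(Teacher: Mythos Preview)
Your proposal is correct and follows essentially the same route as the paper: both directions of $N = {\rm Rk}({\rm Supp}(\mathcal{C}))$ come from Item 5 of Theorem \ref{th charact sum-rank support spaces} combined with Corollary \ref{cor min support space containing}, the identification ${\rm Rk}({\rm Supp}(\mathcal{C})) = {\rm wt}_{SR}(\mathcal{C}) = {\rm d}_{SR,k}(\mathcal{C})$ is immediate from the definitions and Proposition \ref{proposition as minimum rank weights}, and the last equality is extracted from Wei duality (Theorem \ref{th GSRW duality}). The only difference is cosmetic: the paper merely says the last equality ``follows directly'' from Theorem \ref{th GSRW duality}, whereas you spell out the counting argument (that $d_s^\perp = s$ holds precisely for $s \leq n - d_k$, using strict monotonicity and $d_k \geq k$), which is exactly what is needed and is correct.
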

\begin{proof}
Let $ N $ be the sum-rank effective length of $ \mathcal{C} $ and let $ \phi : \mathbb{F}^N \longrightarrow \mathbb{F}^n $ be a linear sum-rank isometry as in Definition \ref{def degenerate}. Define $ \mathcal{V} = \phi(\mathbb{F}^N) \subseteq \mathbb{F}^n $. By Item 5 in Theorem \ref{th charact sum-rank support spaces}, there exists $ \boldsymbol{\mathcal{L}} \in \mathcal{P}(\mathbf{K}^\mathbf{n}) $ such that $ \mathcal{V} = \mathcal{V}_{\boldsymbol{\mathcal{L}}} $. Since $ \mathcal{C} \subseteq \phi(\mathbb{F}^N) = \mathcal{V}_{\boldsymbol{\mathcal{L}}} $, we deduce from Corollary \ref{cor min support space containing} that
$$ {\rm d}_{SR,k}(\mathcal{C}) = {\rm Rk}({\rm Supp}(\mathcal{C})) \leq {\rm Rk}(\boldsymbol{\mathcal{L}}) = \dim(\mathcal{V}) = N. $$

Conversely, let $ \boldsymbol{\mathcal{L}} = {\rm Supp}(\mathcal{C}) $. Again, by Item 5 in Theorem \ref{th charact sum-rank support spaces}, there exists a bijective linear sum-rank isometry $ \phi : \mathbb{F}^{N^\prime} \longrightarrow \mathcal{V}_{\boldsymbol{\mathcal{L}}} $, for $ N^\prime = \dim(\mathcal{V}_{\boldsymbol{\mathcal{L}}}) = {\rm d}_{SR,k}(\mathcal{C}) $. Since $ \mathcal{C} \subseteq \mathcal{V}_{\boldsymbol{\mathcal{L}}} $, we conclude that $ N \leq N^\prime = {\rm d}_{SR,k}(\mathcal{C}) $ by definition of $ N $.

The last equality follows directly from Theorem \ref{th GSRW duality}.
\end{proof}

We may now characterize sum-rank degenerate linear codes in terms of the minimum sum-rank distance of the dual code, as done in \cite[Sec. 6]{slides} when $ \ell = 1 $. 

\begin{corollary} \label{cor degenerate distance dual}
Given a linear code $ \mathcal{C} \subseteq \mathbb{F}^n $, the following are equivalent.
\begin{enumerate}
\item
$ \mathcal{C} $ is sum-rank degenerate.
\item
$ {\rm d}_{SR,k}(\mathcal{C}) < n $ or, equivalently, $ {\rm Supp}(\mathcal{C}) \neq \boldsymbol{\mathcal{T}} = (K_1^{n_1}, K_2^{n_2}, \ldots, K_\ell^{n_\ell}) $.
\item
$ {\rm d}_{SR}(\mathcal{C}^\perp) = 1 $.
\end{enumerate}
In particular, linear MSRD codes are never sum-rank degenerate.
\end{corollary}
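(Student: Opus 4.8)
The plan is to read off everything from the formula for the sum-rank effective length $N$ proved in the previous proposition, combined with the monotonicity of generalized sum-rank weights (Lemma~\ref{lemma monotonicity}). For the equivalence of Items~1 and~2, I would start from Definition~\ref{def degenerate}: $\mathcal{C}$ is sum-rank degenerate exactly when $N<n$. The previous proposition identifies $N={\rm d}_{SR,k}(\mathcal{C})={\rm Rk}({\rm Supp}(\mathcal{C}))$, so $N<n$ is literally ${\rm d}_{SR,k}(\mathcal{C})<n$. To obtain the support reformulation, I would note that ${\rm Supp}(\mathcal{C})\subseteq\boldsymbol{\mathcal{T}}$ always, that ${\rm Rk}(\boldsymbol{\mathcal{T}})=n$, and that ${\rm Rk}$ is strictly monotone under strict inclusion in $\mathcal{P}(\mathbf{K}^\mathbf{n})$ (being a sum of $K_i$-dimensions); hence ${\rm Rk}({\rm Supp}(\mathcal{C}))<n$ if and only if ${\rm Supp}(\mathcal{C})\neq\boldsymbol{\mathcal{T}}$.

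For the equivalence of Items~2 and~3 I would again use the previous proposition, which writes $N=n-\max\{r\in[k]\mid {\rm d}_{SR,r}(\mathcal{C}^\perp)=r\}$ with the convention that the maximum is $0$ on the empty set. Thus $N<n$ holds precisely when the set $S=\{r\mid {\rm d}_{SR,r}(\mathcal{C}^\perp)=r\}$ is nonempty. The crux is to show that $S\neq\emptyset$ forces $1\in S$, i.e.\ ${\rm d}_{SR,1}(\mathcal{C}^\perp)=1$: by the strict monotonicity in Lemma~\ref{lemma monotonicity} one has ${\rm d}_{SR,r}(\mathcal{C}^\perp)\geq {\rm d}_{SR,1}(\mathcal{C}^\perp)+(r-1)$, together with ${\rm d}_{SR,1}(\mathcal{C}^\perp)\geq 1$, so ${\rm d}_{SR,r}(\mathcal{C}^\perp)=r$ squeezes ${\rm d}_{SR,1}(\mathcal{C}^\perp)=1$. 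Since ${\rm d}_{SR,1}(\mathcal{C}^\perp)={\rm d}_{SR}(\mathcal{C}^\perp)$ by Proposition~\ref{proposition as minimum rank weights}, Items~2 and~3 coincide. The boundary cases are harmless: if $\mathcal{C}=\mathbb{F}^n$ all three statements are trivially false, and $\mathcal{C}=\{\mathbf{0}\}$ can be excluded.

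Finally, for the last assertion, if $\mathcal{C}$ is a nonzero linear MSRD code of dimension $k$, then $\mathcal{C}^\perp$ is MSRD by Theorem~\ref{th dual of MSRD is MSRD}, so ${\rm d}_{SR}(\mathcal{C}^\perp)=n-(n-k)+1=k+1\geq 2>1$; by the equivalence of Items~1 and~3 just proved, $\mathcal{C}$ is not sum-rank degenerate. I do not anticipate a genuine obstacle — the proof is a short assembly of the effective-length formula, monotonicity, and the duality of MSRD codes. The only places needing a moment's care are the strict monotonicity of ${\rm Rk}$ under inclusion used in Item~2 and the two trivial extremal cases $\mathcal{C}=\{\mathbf{0}\}$ and $\mathcal{C}=\mathbb{F}^n$.
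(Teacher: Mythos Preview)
Your proposal is correct and is precisely the argument the paper intends: the corollary is stated without proof because it is read off directly from the preceding proposition on the effective length, together with the monotonicity of generalized sum-rank weights and Theorem~\ref{th dual of MSRD is MSRD} for the MSRD claim. Your handling of the equivalence $S\neq\emptyset \Leftrightarrow 1\in S$ via strict monotonicity, and of the boundary cases $\mathcal{C}=\{\mathbf{0}\}$ and $\mathcal{C}=\mathbb{F}^n$, is exactly what is needed.
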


As shown in \cite[Cor. 6.5]{slides} for the case $ \ell = 1 $, there are certain choices of parameters for which linear codes are always sum-rank degenerate. We now extend that result to the general case.

\begin{proposition}
Let $ \mathcal{C} \subseteq \mathbb{F}^n $ be a $ k $-dimensional linear code with $ k_i = \dim(\pi_i(\mathcal{C})) $, where $ \pi_i : \mathbb{F}^n \longrightarrow \mathbb{F}^{n_i} $ denotes the projection onto the $ i $th block of coordinates, for $ i = 1,2, \ldots, \ell $. It holds that
$$ {\rm d}_{SR,k}(\mathcal{C}) \leq \sum_{i=1}^\ell k_i m_i . $$
In particular, if $ \sum_{i=1}^\ell k_i m_i < n $, then $ \mathcal{C} $ is sum-rank degenerate.
\end{proposition}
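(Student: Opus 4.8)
The plan is to reduce the claimed inequality to a statement purely about the sum-rank support of $ \mathcal{C} $ and then to bound that support block by block. By the preceding proposition (equivalently, by Proposition \ref{proposition as minimum rank weights} with $ r = k $ together with Definition \ref{def sum-rank supp of spaces}), we have $ {\rm d}_{SR,k}(\mathcal{C}) = {\rm wt}_{SR}(\mathcal{C}) = {\rm Rk}({\rm Supp}(\mathcal{C})) $; and if we write $ {\rm Supp}(\mathcal{C}) = (\mathcal{L}_1, \mathcal{L}_2, \ldots, \mathcal{L}_\ell) \in \mathcal{P}(\mathbf{K}^\mathbf{n}) $, then $ {\rm Rk}({\rm Supp}(\mathcal{C})) = \sum_{i=1}^\ell \dim_{K_i}(\mathcal{L}_i) $ by definition. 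So it suffices to prove $ \dim_{K_i}(\mathcal{L}_i) \leq k_i m_i $ for each $ i = 1,2,\ldots,\ell $. The key observation is that, by Definitions \ref{def sum-rank supports} and \ref{def sum-rank supp of spaces}, $ \mathcal{L}_i = \sum_{\mathbf{c} \in \mathcal{C}} {\rm Row}(M_{\mathcal{A}_i}(\pi_i(\mathbf{c}))) $ depends only on the $ \mathbb{F} $-linear subspace $ \pi_i(\mathcal{C}) \subseteq \mathbb{F}^{n_i} $, which has $ \mathbb{F} $-dimension $ k_i $.

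I would then fix $ i $, fix an $ \mathbb{F} $-basis $ \mathbf{b}_1, \mathbf{b}_2, \ldots, \mathbf{b}_{k_i} $ of $ \pi_i(\mathcal{C}) $, and show that $ \mathcal{L}_i = \sum_{j=1}^{k_i} {\rm Row}(M_{\mathcal{A}_i}(\mathbf{b}_j)) $. The inclusion $ \supseteq $ is immediate from the definition of $ \mathcal{L}_i $. For $ \subseteq $, given $ \mathbf{v} \in \pi_i(\mathcal{C}) $, write $ \mathbf{v} = \sum_j \lambda_j \mathbf{b}_j $ with $ \lambda_j \in \mathbb{F} $, and combine two elementary facts: the row space of $ M_{\mathcal{A}_i} $ is subadditive, i.e. $ {\rm Row}(M_{\mathcal{A}_i}(\mathbf{u} + \mathbf{w})) \subseteq {\rm Row}(M_{\mathcal{A}_i}(\mathbf{u})) + {\rm Row}(M_{\mathcal{A}_i}(\mathbf{w})) $ (since $ M_{\mathcal{A}_i} $ is $ K_i $-linear, the rows of $ M_{\mathcal{A}_i}(\mathbf{u}+\mathbf{w}) $ are the coordinate-wise sums of the rows of $ M_{\mathcal{A}_i}(\mathbf{u}) $ and $ M_{\mathcal{A}_i}(\mathbf{w}) $); and $ {\rm Row}(M_{\mathcal{A}_i}(\lambda \mathbf{b}_j)) = {\rm Row}(M_{\mathcal{A}_i}(\mathbf{b}_j)) $ for $ \lambda \in \mathbb{F}^* $, which is the $ \mathbb{F}^* $-invariance of rank supports (the one-block instance of $ {\rm Supp}(\mathbf{c}) = {\rm Supp}(\langle \mathbf{c} \rangle) $, the remark following Definition \ref{def sum-rank supp of spaces}). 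These give $ {\rm Row}(M_{\mathcal{A}_i}(\mathbf{v})) \subseteq \sum_{j=1}^{k_i} {\rm Row}(M_{\mathcal{A}_i}(\mathbf{b}_j)) $. Since each $ M_{\mathcal{A}_i}(\mathbf{b}_j) $ lies in $ K_i^{m_i \times n_i} $, its row space has $ K_i $-dimension at most $ m_i $, so $ \dim_{K_i}(\mathcal{L}_i) \leq k_i m_i $. (Alternatively, one can phrase this using Corollary \ref{cor min support space containing}: the list $ \boldsymbol{\mathcal{L}}^\prime $ whose $ i $th entry is $ \sum_{j=1}^{k_i}{\rm Row}(M_{\mathcal{A}_i}(\mathbf{b}_j)) $ satisfies $ \mathcal{C} \subseteq \mathcal{V}_{\boldsymbol{\mathcal{L}}^\prime} $ by the same argument, whence $ {\rm d}_{SR,k}(\mathcal{C}) = \dim(\mathcal{V}_{{\rm Supp}(\mathcal{C})}) \leq {\rm Rk}(\boldsymbol{\mathcal{L}}^\prime) \leq \sum_i k_i m_i $, avoiding the need to compute $ \mathcal{L}_i $ exactly.)

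Summing over $ i $ yields $ {\rm d}_{SR,k}(\mathcal{C}) = \sum_{i=1}^\ell \dim_{K_i}(\mathcal{L}_i) \leq \sum_{i=1}^\ell k_i m_i $. For the final assertion, if $ \sum_{i=1}^\ell k_i m_i < n $ then $ {\rm d}_{SR,k}(\mathcal{C}) < n $, hence $ \mathcal{C} $ is sum-rank degenerate by the equivalence of Items 1 and 2 in Corollary \ref{cor degenerate distance dual}. The only genuine subtlety — the point I would be most careful about — is the interplay between the $ \mathbb{F} $-linearity of the code and the $ K_i $-linearity of the supports: a priori an $ \mathbb{F} $-linear combination of codewords could enlarge the $ i $th support beyond the $ K_i $-span of the supports of a basis, and it is precisely the $ \mathbb{F}^* $-scalar invariance of rank supports, together with subadditivity, that rules this out. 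Everything else is bookkeeping.
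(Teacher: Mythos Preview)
Your proof is correct and follows essentially the same route as the paper: reduce $ {\rm d}_{SR,k}(\mathcal{C}) $ to the sum $ \sum_i \dim_{K_i}(\mathcal{L}_i) $ via the block decomposition of the support, identify $ \dim_{K_i}(\mathcal{L}_i) $ with the rank weight $ {\rm wt}_R(\pi_i(\mathcal{C})) $, and bound the latter by $ k_i m_i $. The only difference is that the paper cites the block-wise inequality $ {\rm wt}_R(\pi_i(\mathcal{C})) \leq k_i m_i $ from the proof of \cite[Cor.~6.5]{slides}, whereas you prove it from scratch via subadditivity and $ \mathbb{F}^* $-invariance of the row-space map; your version is more self-contained but otherwise the same argument.
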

\begin{proof}
It follows by combining Item 2 in Theorem \ref{th charact sum-rank support spaces}, Corollary \ref{cor degenerate distance dual} and the fact that $ {\rm wt}_R(\pi_i(\mathcal{C})) \leq k_i m_i $ (see the proof of \cite[Cor. 6.5]{slides}), for $ i = 1,2, \ldots, \ell $.
\end{proof}

Observe that, in the case $ \mathbf{n} = \mathbf{1} $ and $ m_1 = m_2 = \ldots = m_\ell = 1 $, the previous proposition is trivially equivalent to the definition of Hamming-metric degenerateness.

\section*{Acknowledgement}

The author wishes to thank Frank R. Kschischang for valuable discussions on this manuscript. The author also whishes to thank the anonymous reviewers for the valuable comments on this work. The author also gratefully acknowledges the support from The Independent Research Fund Denmark (Grant No. DFF-7027-00053B).

\footnotesize
 

\begin{thebibliography}{10}

\bibitem{barg-matroid}
A.~Barg.
\newblock The matroid of supports of a linear code.
\newblock {\em Applicable Algebra in Engineering, Communication and Computing},
  8(2):165--172, 1997.

\bibitem{berger}
T.~P. Berger.
\newblock Isometries for rank distance and permutation group of {G}abidulin
  codes.
\newblock {\em IEEE Trans. Inform. Theory}, 49(11):3016--3019, 2003.

\bibitem{blaum-RAID}
M.~Blaum, J.~L. Hafner, and S.~Hetzler.
\newblock Partial-{MDS} codes and their application to {RAID} type of
  architectures.
\newblock {\em IEEE Trans. Inform. Theory}, 59(7):4510--4519, July 2013.

\bibitem{skew-evaluation1}
D.~Boucher and F.~Ulmer.
\newblock Linear codes using skew polynomials with automorphisms and
  derivations.
\newblock {\em Designs, Codes and Cryptography}, 70(3):405--431, 2014.

\bibitem{ravagnani-siam}
E.~Byrne and A.~Ravagnani.
\newblock Covering radius of matrix codes endowed with the rank metric.
\newblock {\em SIAM Journal on Discrete Mathematics}, 31(2):927--944, 2017.

\bibitem{secure-computation}
H.~Chen, R.~Cramer, S.~Goldwasser, R.~de~Haan, and V.~Vaikuntanathan.
\newblock Secure computation from random error correcting codes.
\newblock In {\em Advances in cryptology---{EUROCRYPT} 2007}, volume 4515 of
  {\em Lec. Notes Comput. Sci.}, pages 291--310. 2007.

\bibitem{attack-AG}
A.~Couvreur, I.~M{\'a}rquez-Corbella, and R.~Pellikaan.
\newblock Cryptanalysis of {M}c{E}liece cryptosystem based on algebraic
  geometry codes and their subcodes.
\newblock {\em IEEE Trans. Inform. Theory}, 63(8):5404--5418, Aug 2017.

\bibitem{almostMDS}
M.~A. De~Boer.
\newblock Almost {MDS} codes.
\newblock {\em Designs, Codes and Cryptography}, 9(2):143--155, Oct 1996.

\bibitem{quasiMRD}
J.~De la Cruz, E.~Gorla, H.~H.~L{\'o}pez, and A.~Ravagnani.
\newblock Weight distribution of rank-metric codes.
\newblock {\em Designs, Codes and Cryptography}, 86(1):1--16, Jan 2018.

\bibitem{delsartebilinear}
P.~Delsarte.
\newblock Bilinear forms over a finite field, with applications to coding
  theory.
\newblock {\em Journal of Combinatorial Theory, Series A}, 25(3):226--241,
  1978.

\bibitem{nearMDS}
R.~Dodunekova, S.~M. Dodunekov, and T.~Kl{\o}ve.
\newblock Almost-{MDS} and near-{MDS} codes for error detection.
\newblock {\em IEEE Trans. Inform. Theory}, 43(1):285--290, Jan 1997.

\bibitem{jerome}
J.~Ducoat.
\newblock Generalized rank weights: A duality statement.
\newblock In {\em Topics in Finite Fields}, volume 632 of {\em Comtemporary
  Mathematics}, pages 114--123. 2015.

\bibitem{feng-rao}
G.-L. Feng and T.~R.~N. Rao.
\newblock A simple approach for construction of algebraic-geometric codes from
  affine plane curves.
\newblock {\em IEEE Trans. Inform. Theory}, 40(4):1003--1012, Jul 1994.

\bibitem{forney}
G.~D. Forney~Jr.
\newblock Dimension/length profiles and trellis complexity of linear block
  codes.
\newblock {\em {IEEE} Trans. Inform. Theory}, 40(6):1741--1752, 1994.

\bibitem{private-computation}
R.~Freij-Hollanti, O.~W. Gnilke, C.~Hollanti, and D.~A. Karpuk.
\newblock Private information retrieval from coded databases with colluding
  servers.
\newblock {\em SIAM Journal on Applied Algebra and Geometry}, 1(1):647--664,
  2017.

\bibitem{gabidulin}
E.~M. Gabidulin.
\newblock Theory of codes with maximum rank distance.
\newblock {\em Problems of Information Transmission}, 21(1):1--12, 1985.

\bibitem{algebraic-space-time}
H.~El Gamal and A.~R. Hammons.
\newblock On the design of algebraic space-time codes for {MIMO} block-fading
  channels.
\newblock {\em IEEE Trans. Inform. Theory}, 49(1):151--163, Jan 2003.

\bibitem{gopalan-MR}
P.~Gopalan, C.~Huang, B.~Jenkins, and S.~Yekhanin.
\newblock Explicit maximally recoverable codes with locality.
\newblock {\em IEEE Trans. Inform. Theory}, 60(9):5245--5256, Sept 2014.

\bibitem{greferath}
M.~Greferath, T.~Honold, C.~Mc Fadden, J.~A. Wood, and J.~Zumbr{\"a}gel.
\newblock Macwilliams' extension theorem for bi-invariant weights over finite
  principal ideal rings.
\newblock {\em Journal of Combinatorial Theory, Series A}, 125:177 -- 193,
  2014.

\bibitem{hamming}
R.~W. Hamming.
\newblock Error detecting and error correcting codes.
\newblock {\em The Bell System Technical Journal}, 29(2):147--160, April 1950.

\bibitem{kloeve}
T.~Helleseth, T.~Kl{\o}ve, V.~I. Levenshtein, and {\O}.~Ytrehus.
\newblock Bounds on the minimum support weights.
\newblock {\em IEEE Trans. Inform. Theory}, 41(2):432--440, 1995.

\bibitem{helleseth79}
T.~Helleseth, T.~Kl{\o}ve, and J.~Mykkeltveit.
\newblock The weight distribution of irreducible cyclic codes with block
  lengths $n_l((q^l-1)/{N})$.
\newblock {\em Discrete Mathematics}, 18(2):179--211, 1977.

\bibitem{new-criteria}
A.-L. Horlemann-Trautmann and K.~Marshall.
\newblock New criteria for {MRD} and {G}abidulin codes and some rank-metric
  code constructions.
\newblock {\em Advances in Mathematics of Communications}, 11(3):533, 2017.

\bibitem{slides}
R.~Jurrius and R.~Pellikaan.
\newblock On defining generalized rank weights.
\newblock {\em Advances in Mathematics of Communications}, 11(1):225--235,
  2017.

\bibitem{q-matroid}
R.~Jurrius and R.~Pellikaan.
\newblock Defining the q-analogue of a matroid.
\newblock {\em Electronic Journal of Combinatorics}, 25(3), 2018.

\bibitem{unified}
R.~K{\"o}tter.
\newblock A unified description of an error locating procedure for linear
  codes.
\newblock {\em In Proc. Algebraic and Combinatorial Coding Theory}, pages 113
  -- 117, 1992.

\bibitem{errors-network}
R.~K{\"o}tter and F.~R. Kschischang.
\newblock Coding for errors and erasures in random network coding.
\newblock {\em IEEE Trans. Inform. Theory}, 54(8):3579--3591, 2008.

\bibitem{rgrw}
J.~Kurihara, R.~Matsumoto, and T.~Uyematsu.
\newblock Relative generalized rank weight of linear codes and its applications
  to network coding.
\newblock {\em IEEE Trans. Inform. Theory}, 61(7):3912--3936, July 2015.

\bibitem{lam}
T.~Y. Lam.
\newblock A general theory of {V}andermonde matrices.
\newblock {\em Expositiones Mathematicae}, 4:193--215, 1986.

\bibitem{lam-leroy}
T.~Y. Lam and A.~Leroy.
\newblock Vandermonde and {W}ronskian matrices over division rings.
\newblock {\em Journal of Algebra}, 119(2):308--336, 1988.

\bibitem{space-time-kumar}
H.-F. Lu and P.~V. Kumar.
\newblock A unified construction of space-time codes with optimal
  rate-diversity tradeoff.
\newblock {\em IEEE Trans. Inform. Theory}, 51(5):1709--1730, May 2005.

\bibitem{luo}
Y.~Luo, C.~Mitrpant, A.~J. Han~Vinck, and K.~Chen.
\newblock Some new characters on the wire-tap channel of type {II}.
\newblock {\em IEEE Trans. Inform. Theory}, 51(3):1222--1229, 2005.

\bibitem{macwilliams-thesis}
F.~J. MacWilliams.
\newblock {\em Combinatorial problems of elementary abelian groups}.
\newblock PhD thesis, Harvard, 1962.

\bibitem{mahmood-convolutional}
R.~Mahmood, A.~Badr, and A.~Khisti.
\newblock Convolutional codes with maximum column sum rank for network
  streaming.
\newblock {\em IEEE Trans. Inform. Theory}, 62(6):3039--3052, 2016.

\bibitem{similarities}
U.~Mart{\'i}nez-Pe{\~n}as.
\newblock On the similarities between generalized rank and {H}amming weights
  and their applications to network coding.
\newblock {\em IEEE Trans. Inform. Theory}, 62(7):4081--4095, 2016.

\bibitem{lin-multivariateskew}
U.~Mart{\'i}nez-Pe{\~n}as.
\newblock Linearized multivariate skew polynomials and {H}ilbert 90 theorems
  with multivariate norms.
\newblock In {\em Proc. XVI EACA, Zaragoza - Encuentros de Álgebra
  Computacional y Aplicaciones}, pages 119--122, 2018.

\bibitem{linearizedRS}
U.~Mart{\'i}nez-Pe{\~n}as.
\newblock Skew and linearized {R}eed-{S}olomon codes and maximum sum rank
  distance codes over any division ring.
\newblock {\em Journal of Algebra}, 504:587--612, 2018.

\bibitem{pirlrc}
U.~Mart{\'i}nez-Pe{\~n}as.
\newblock Private information retrieval from locally repairable databases with
  colluding servers.
\newblock 2019.
\newblock Preprint: \url{https://arxiv.org/abs/1901.02938}.

\bibitem{secure-multishot}
U.~Mart{\'i}nez-Pe{\~n}as and F.~R. Kschischang.
\newblock Reliable and secure multishot network coding using linearized
  {R}eed-{S}olomon codes.
\newblock In {\em Proc. 56th Annual Allerton Conference on Communication,
  Control, and Computing (Allerton)}, 2018.
\newblock Extended version: \url{https://arxiv.org/abs/1805.03789}.

\bibitem{universal-lrc}
U.~Mart{\'i}nez-Pe{\~n}as and F.~R. Kschischang.
\newblock Universal and dynamic locally repairable codes with maximal
  recoverability via sum-rank codes.
\newblock In {\em Proc. 56th Annual Allerton Conference on Communication,
  Control, and Computing (Allerton)}, 2018.
\newblock Extended version: \url{https://arxiv.org/abs/1809.11158}.

\bibitem{multivariateskew}
U.~Mart{\'i}nez-Pe{\~n}as and F.~R. Kschischang.
\newblock Evaluation and interpolation over multivariate skew polynomial rings.
\newblock {\em Journal of Algebra}, pages 1--28, 2019.
\newblock In press. Available: \url{https://arxiv.org/abs/1710.09606}.

\bibitem{RECP}
U.~Mart{\'i}nez-Pe{\~n}as and R.~Pellikaan.
\newblock Rank error-correcting pairs.
\newblock {\em Designs, Codes and Cryptography}, 84(1-2):261--281, 2017.

\bibitem{rgmw}
U.~Mart{\'i}nez-Pe{\~n}as and R.~Matsumoto.
\newblock Relative generalized matrix weights of matrix codes for universal
  security on wire-tap networks.
\newblock {\em IEEE Trans. Inform. Theory}, 64(4):2529--2549, 2018.

\bibitem{mrd-convolutional}
D.~Napp, R.~Pinto, J.~Rosenthal, and P.~Vettori.
\newblock {MRD} rank metric convolutional codes.
\newblock In {\em Proc. 2017 IEEE International Symposium on Information
  Theory}, pages 2766--2770, 2017.

\bibitem{napp-sidorenko}
D.~Napp, R.~Pinto, and V.~Sidorenko.
\newblock Concatenation of convolutional codes and rank metric codes for
  multi-shot network coding.
\newblock {\em Designs, Codes and Cryptography}, 86(2):303--318, February 2018.

\bibitem{multishot2009}
R.~W. N{\'o}brega and B.~F. Uch{\^o}a-Filho.
\newblock Multishot codes for network coding: Bounds and a multilevel
  construction.
\newblock In {\em Proc. 2009 IEEE International Symposium on Information
  Theory}, pages 428--432, June 2009.

\bibitem{multishot}
R.~W. N{\'o}brega and B.~F. Uch{\^o}a-Filho.
\newblock Multishot codes for network coding using rank-metric codes.
\newblock In {\em Proc. 2010 Third IEEE International Workshop on Wireless
  Network Coding}, pages 1--6, 2010.

\bibitem{oggier}
F.~E. Oggier and A.~Sboui.
\newblock On the existence of generalized rank weights.
\newblock In {\em Proc. 2012 International Symposium on Information Theory and
  its Applications}, pages 406--410, 2012.

\bibitem{ore}
O.~Ore.
\newblock Theory of non-commutative polynomials.
\newblock {\em Annals of Mathematics (2)}, 34(3):480--508, 1933.

\bibitem{overbeck}
R.~Overbeck.
\newblock Structural attacks for public key cryptosystems based on {G}abidulin
  codes.
\newblock {\em Journal of Cryptology}, 21(2):280--301, Apr 2008.

\bibitem{ozarow}
L.~H. Ozarow and A.~D. Wyner.
\newblock Wire-tap channel {II}.
\newblock In {\em Advances in Cryptology: EUROCRYPT 84}, volume 209 of {\em
  Lect. Notes Comput. Sci.}, pages 33--50. 1985.

\bibitem{on-the-existence}
R.~Pellikaan.
\newblock On the existence of error-correcting pairs.
\newblock {\em Journal of Statistical Planning and Inference}, 51(2):229 --
  242, 1996.

\bibitem{ravagnaniweights}
A.~Ravagnani.
\newblock Generalized weights: An anticode approach.
\newblock {\em Journal of Pure and Applied Algebra}, 220(5):1946--1962, 2016.

\bibitem{roth}
R.~M. Roth.
\newblock Maximum-rank array codes and their application to crisscross error
  correction.
\newblock {\em IEEE Trans. Inform. Theory}, 37(2):328--336, March 1991.

\bibitem{silva-universal}
D.~Silva and F.~R. Kschischang.
\newblock Universal secure network coding via rank-metric codes.
\newblock {\em IEEE Trans. Inform. Theory}, 57(2):1124--1135, February 2011.

\bibitem{singleton}
R.~Singleton.
\newblock Maximum distance q-nary codes.
\newblock {\em IEEE Trans. Inform. Theory}, 10(2):116--118, April 1964.

\bibitem{tsfasman}
M.~A. Tsfasman and S.~G. Vl{\u{a}}du{\c{t}}.
\newblock Geometric approach to higher weights.
\newblock {\em IEEE Trans. Inform. Theory}, 41(6, part 1):1564--1588, 1995.

\bibitem{wachter}
A.~Wachter, V.~R. Sidorenko, M.~Bossert, and V.~V. Zyablov.
\newblock On (partial) unit memory codes based on {G}abidulin codes.
\newblock {\em Problems of Information Transmission}, 47(2):117--129, 2011.

\bibitem{wachter-convolutional}
A.~Wachter-Zeh, M.~Stinner, and V.~Sidorenko.
\newblock Convolutional codes in rank metric with application to random network
  coding.
\newblock {\em IEEE Trans. Inform. Theory}, 61(6):3199--3213, 2015.

\bibitem{wei}
V.~K. Wei.
\newblock Generalized {H}amming weights for linear codes.
\newblock {\em IEEE Trans. Inform. Theory}, 37(5):1412--1418, 1991.

\end{thebibliography}

\normalsize

\appendix

\section{The skew metric and skew supports} \label{app}

In this appendix, we briefly revisit the relation between the sum-rank metric (Definition \ref{def sum-rank metric}) and the \textit{skew metric} introduced in \cite[Def. 9]{linearizedRS}. We extend such a relation to sum-rank supports and \textit{skew supports} (which we introduce in this appendix), and the corresponding support spaces. The exposition in this appendix follows the lines in \cite{linearizedRS}.
 
Let $ \sigma : \mathbb{F} \longrightarrow \mathbb{F} $ be a field endomorphism and let $ \delta : \mathbb{F} \longrightarrow \mathbb{F} $ be a $ \sigma $-derivation, that is, $ \delta $ is additive and $ \delta(ab) = \sigma(a) \delta(b) + \delta(a) b $, for all $ a,b \in \mathbb{F} $. Define the skew polynomial ring $ \mathbb{F}[x; \sigma, \delta] $ as the vector space over $ \mathbb{F} $ with basis $ \{ x^i \mid i \in \mathbb{N} \} $ and with product given by the rules $ x^i x^j = x^{i + j} $, for $ i,j \in \mathbb{N} $, and
\begin{equation}
xa = \sigma(a) x + \delta(a),
\label{eq product commutativity rule}
\end{equation}
for $ a \in \mathbb{F} $. Define the degree of a non-zero skew polynomial $ F = \sum_{i \in \mathbb{N}} F_i x^i \in \mathbb{F}[x; \sigma, \delta] $, denoted by $ \deg(F) $, as the maximum $ i \in \mathbb{N} $ such that $ F_i \neq 0 $. We also define $ \deg (0) = \infty $. Skew polynomial rings were introduced by Ore in \cite{ore} and the products given by (\ref{eq product commutativity rule}) are the only products in $ \mathbb{F}[x; \sigma, \delta] $ such that $ \deg(FG) = \deg(F) + \deg(G) $, for $ F,G \in \mathbb{F}[x; \sigma, \delta] $. The extension to several variables was recently given in \cite{multivariateskew}. Conventional polynomial rings are recovered by setting $ \sigma = {\rm Id} $ and $ \delta = 0 $.

Since $ \mathbb{F}[x ; \sigma, \delta] $ is a right Euclidean domain, we may define the \textit{evaluation} of $ F \in \mathbb{F}[x; \sigma, \delta] $ in $ a \in \mathbb{F} $ as the unique $ F(a) \in \mathbb{F} $ such that there exists $ G \in \mathbb{F}[x; \sigma, \delta] $ with
$$ F = G \cdot (x-a) + F(a). $$
This concept of evaluation was introduced in \cite{lam, lam-leroy}.

Given a subset $ \Omega \subseteq \mathbb{F} $, we may define its associated ideal as $ I(\Omega) = \{ F \in \mathbb{F}[x;\sigma, \delta] \mid F(a) = 0, \forall a \in \Omega \} $. Observe that $ I(\Omega) $ is a left ideal in $ \mathbb{F}[x;\sigma, \delta] $. Since $ \mathbb{F}[x;\sigma, \delta] $ is a right Euclidean domain, there exists a unique monic skew polynomial $ F_\Omega \in I(\Omega) $ of minimal degree among those in $ I(\Omega) $, which in turn generates $ I(\Omega) $ as left ideal. Such a skew polynomial is called the \textit{minimal skew polynomial} of $ \Omega $ \cite{lam-leroy}. 

Next, given a subset $ \Omega \subseteq \mathbb{F} $, we define its \textit{P-closure} as $ \overline{\Omega} = Z(F_\Omega) \subseteq \mathbb{F} $ (the \textit{set of zeros} of $ F_\Omega $), and we say that $ \Omega $ is \textit{P-closed} if $ \overline{\Omega} = \Omega $. A set $ \Omega \subseteq \mathbb{F} $ is called \textit{P-independent} if $ a \notin \overline{\Omega \setminus \{ a \}} $, for all $ a \in \Omega $. We say that $ \mathcal{B} \subseteq \Omega $ is a \textit{P-basis} of a P-closed set $ \Omega $ if $ \mathcal{B} $ is P-independent and $ \Omega = \overline{\mathcal{B}} $. We also say that $ \Omega $ is a finitely generated P-closed set if it admits a finite P-basis, which is the case as long as $ \Omega \neq \mathbb{F} $, or $ \Omega = \mathbb{F} $ and $ \mathbb{F} $ is finite. 

Given a finitely generated P-closed set $ \Omega \subseteq \mathbb{F} $, any two of its P-bases are finite and have the same number of elements, which moreover coincides with $ \deg(F_\Omega) $. This motivates the definition \textit{rank} of $ \Omega $ as
$$ {\rm Rk}(\Omega) = \deg(F_\Omega) < \infty. $$

Fix a finitely generated P-closed set $ \Omega \subseteq \mathbb{F} $ of rank $ n $ and fix one of its P-bases $ \mathcal{B} $. An important tool to define skew metrics is skew polynomial Lagrange interpolation. Let $ \mathbb{F}[x; \sigma, \delta]_n $ be the $ n $-dimensional vector space of skew polynomials of degree less than $ n $. It follows from \cite[Th. 8]{lam} that the \textit{evaluation map} over the points in $ \mathcal{B} $,
$$ E_\mathcal{B} : \mathbb{F}[x; \sigma, \delta]_n \longrightarrow \mathbb{F}^\mathcal{B}, $$
is a vector space isomorphism. Hence we may define \textit{skew weights} \cite[Def.~9]{linearizedRS} as follows.

\begin{definition} [\textbf{Skew weights \cite{linearizedRS}}] \label{def skew weights}
Given $ F \in \mathbb{F}[x; \sigma, \delta]_n $ and $ f = E_\mathcal{B}(F) \in \mathbb{F}^\mathcal{B} $, we define their skew weight over $ \Omega $ as
$$ {\rm wt}_\mathcal{B}(f) = {\rm wt}_\Omega(F) = n - {\rm Rk}(Z_\Omega(F)), $$
where $ Z_\Omega(F) = Z(F) \cap \Omega = Z(\{ F, F_\Omega \}) $ is the P-closed set of zeros of $ F $ in $ \Omega $. 
\end{definition}

Skew weights are indeed weights \cite[Prop. 10]{linearizedRS} and define a metric in $ \mathbb{F}^\mathcal{B} $, called the \textit{skew metric} \cite[Def. 11]{linearizedRS}, by the usual formula: $ {\rm d}_\mathcal{B}(f, g) = {\rm wt}_\mathcal{B}(f - g) $, for $ f,g \in \mathbb{F}^\mathcal{B} $. To relate this metric with the sum-rank metric, we need the concept of conjugacy from \cite{lam-leroy}: We say that $ a, c \in \mathbb{F} $ are \textit{conjugates} if there exists $ \beta \in \mathbb{F}^* $ such that 
$$ c = a^\beta \stackrel{def}{=} \sigma(\beta)\beta^{-1} a + \delta(\beta)\beta^{-1}. $$ 
Putting together the results \cite[Th. 23]{lam} and \cite[Th. 4.5]{lam-leroy}, we obtain the following characterization: A finite subset $ \mathcal{B} \subseteq \mathbb{F} $ with $ n $ elements is a P-basis of $ \Omega = \overline{\mathcal{B}} $ if, and only if, $ n = n_1 + n_2 + \cdots + n_\ell $, for some $ \ell $, and there exists pair-wise non-conjugate elements $ a^{(1)}, a^{(2)}, \ldots, a^{(\ell)} \in \mathbb{F} $ and a set of linearly independent elements $ \{ \beta_1^{(i)}, \beta_2^{(i)}, \ldots, \beta_{n_i}^{(i)} \} \subseteq \mathbb{F} $, over the subfield $ K_i = K_{a^{(i)}} = \{ \beta \in \mathbb{F}^* \mid \left( a^{(i)} \right)^{\beta} = a^{(i)} \} \cup \{ 0 \} \subseteq \mathbb{F} $, for each $ i = 1,2, \ldots, \ell $, such that
\begin{equation}
\mathcal{B} = \bigcup_{i = 1}^\ell \left\lbrace \left( a^{(i)} \right)^{\beta_j^{(i)}} \mid j = 1,2, \ldots, n_i \right\rbrace ,
\label{eq charact of P-basis}
\end{equation}
where the union is disjoint. With this characterization at hand, we may give a vector space isomorphism connecting both metrics. The result follows from \cite[Th. 2 \& 3]{linearizedRS}.

\begin{theorem} [\textbf{\cite{linearizedRS}}]
With notation as above, define the vector space isomorphism $ \phi_\mathcal{B} : \mathbb{F}^n \longrightarrow \mathbb{F}^\mathcal{B} $ by $ \phi_\mathcal{B} (\mathbf{c}^{(1)}, $ $ \mathbf{c}^{(2)}, $ $ \ldots, $ $ \mathbf{c}^{(\ell)}) = f $, where $ \mathbf{c}^{(i)} = (c_1^{(i)}, c_2^{(i)}, \ldots, c_{n_i}^{(i)}) \in \mathbb{F}^{n_i} $ and
\begin{equation}
f \left( \left( a^{(i)} \right)^{\beta_j^{(i)}} \right) = c_j^{(i)} (\beta_j^{(i)})^{-1},
\label{eq transforming the received word}
\end{equation}
for $ j = 1,2, \ldots, n_i $ and $ i = 1,2,\ldots, \ell $. Then $ \phi_\mathcal{B} $ is an isometry: For $ \mathbf{c} \in \mathbb{F}^n $, it holds that
$$ {\rm wt}_\mathcal{B}(\phi_\mathcal{B}(\mathbf{c})) = {\rm wt}_{SR}(\mathbf{c}), $$
where $ {\rm wt}_{SR} $ is the sum-rank weight from Definition \ref{def sum-rank metric} with $ K_i = K_{a^{(i)}} $, for $ i = 1,2, \ldots, \ell $.
\end{theorem}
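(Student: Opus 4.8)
The plan is to reduce the claim to the single-conjugacy-class case ($\ell = 1$), where it is precisely the vector-level statement established inside the proof of \cite[Th.~2 \& 3]{linearizedRS}, and then to glue the blocks together using the disjoint decomposition of the P-basis $\mathcal{B}$ across conjugacy classes recorded in (\ref{eq charact of P-basis}).

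First I would fix $\mathbf{c} = (\mathbf{c}^{(1)}, \mathbf{c}^{(2)}, \ldots, \mathbf{c}^{(\ell)}) \in \mathbb{F}^n$, set $f = \phi_\mathcal{B}(\mathbf{c}) \in \mathbb{F}^\mathcal{B}$ and $F = E_\mathcal{B}^{-1}(f) \in \mathbb{F}[x;\sigma,\delta]_n$, and write $\mathcal{B}_i = \{ (a^{(i)})^{\beta_j^{(i)}} \mid j = 1,2,\ldots,n_i \}$ and $\Omega_i = \overline{\mathcal{B}_i}$, so that $\mathcal{B} = \bigcup_{i=1}^\ell \mathcal{B}_i$ is a disjoint union and each $\Omega_i$ lies inside the conjugacy class of $a^{(i)}$. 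On the sum-rank side, additivity is built into Definition \ref{def sum-rank metric}: $ {\rm wt}_{SR}(\mathbf{c}) = \sum_{i=1}^\ell {\rm Rk}_{K_i}(M_{\mathcal{A}_i}(\mathbf{c}^{(i)})) $. On the skew side, I would argue that the P-closed set of zeros of $F$ in $\Omega$ splits as $ Z_\Omega(F) = \bigcup_{i=1}^\ell Z_{\Omega_i}(F) $, a disjoint union, and that $ {\rm Rk} $ is additive over a union of P-closed sets lying in pairwise distinct conjugacy classes; equivalently, the corresponding minimal skew polynomials generate pairwise comaximal left ideals, so the degree of their least common left multiple equals the sum of their degrees, by the skew-polynomial Chinese Remainder Theorem of Lam and Leroy \cite{lam-leroy}. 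Hence $ {\rm wt}_\Omega(F) = n - {\rm Rk}(Z_\Omega(F)) = \sum_{i=1}^\ell \left( n_i - {\rm Rk}(Z_{\Omega_i}(F)) \right) $.

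Next I would invoke the case $\ell = 1$ on each block. Fixing $i$ and writing $K = K_i = K_{a^{(i)}}$, the set $\mathcal{B}_i$ is a P-basis of $\Omega_i$ with $\beta_1^{(i)}, \ldots, \beta_{n_i}^{(i)}$ linearly independent over $K$, and by (\ref{eq transforming the received word}) the coordinates $c_j^{(i)} = \beta_j^{(i)} f\big( (a^{(i)})^{\beta_j^{(i)}} \big)$ are exactly the rescaled evaluations of $F$ on the points of $\mathcal{B}_i$. The proof of \cite[Th.~2 \& 3]{linearizedRS} then gives $ {\rm Rk}_K(M_{\mathcal{A}_i}(\mathbf{c}^{(i)})) = n_i - {\rm Rk}(Z(F) \cap \Omega_i) = n_i - {\rm Rk}(Z_{\Omega_i}(F)) $. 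Summing over $i = 1,2,\ldots,\ell$ and comparing with the two identities above yields $ {\rm wt}_\mathcal{B}(\phi_\mathcal{B}(\mathbf{c})) = {\rm wt}_\Omega(F) = {\rm wt}_{SR}(\mathbf{c}) $, which is the assertion.

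The step I expect to be the main obstacle is the additivity of $ {\rm Rk} $ across conjugacy classes, that is, verifying carefully, from the Lam--Leroy description of the roots of a skew polynomial, that $Z(F) \cap \Omega$ genuinely decomposes as a disjoint union of P-closed sets indexed by the classes and that the ranks add; this is the ``skew CRT'' phenomenon that underlies linearized Reed--Solomon codes. An alternative that avoids reproving it is to note that precisely this decomposition is already carried out inside the proof of \cite[Th.~3]{linearizedRS}, so that the only thing left is the bookkeeping identity that the isometry $\phi_\mathcal{B}$ defined by (\ref{eq transforming the received word}) coincides with the change-of-P-basis map used there, which is immediate from (\ref{eq charact of P-basis}) and \cite[Def.~12]{linearizedRS}.
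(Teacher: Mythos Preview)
Your proposal is correct and matches the paper's approach: the paper does not give an independent proof but simply states that the result follows from \cite[Th.~2 \& 3]{linearizedRS}, and what you have written is precisely an unpacking of that citation---splitting $\Omega$ along conjugacy classes, using the Lam--Leroy additivity of ranks across distinct classes, and reducing each block to the single-class case handled in \cite{linearizedRS}. Your closing remark that the whole argument can be shortcut by observing that $\phi_\mathcal{B}$ coincides with the change-of-P-basis map of \cite[Def.~12]{linearizedRS} is exactly the spirit of the paper's one-line justification.
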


The representation (\ref{eq charact of P-basis}) and the map given by (\ref{eq transforming the received word}) establish a dictionary between the sum-rank metric and the skew metric. This dictionary, however, depends on the conjugacy representatives $ a^{(1)}, a^{(2)}, \ldots, a^{(\ell)} $ and the P-basis $ \mathcal{B} $ of $ \Omega $. The elements $ \beta_1^{(i)}, \beta_2^{(i)}, $ $ \ldots, $ $ \beta_{n_i}^{(i)} $ are determined up to scalar factor in $ K_i^* $ (thus uniquely as projective points in $ \mathbb{P}_{K_i}(\mathbb{F}) $) by the conjugacy representatives and $ \mathcal{B} $, for $ i = 1,2, \ldots, \ell $. It is important to notice that in the case $ \sigma = {\rm Id} $ and $ \delta = 0 $, which corresponds to conventional polynomials and the Hamming metric, the dependency disappears since conjugacy classes only have one element and the only P-basis of $ \Omega $ is $ \mathcal{B} = \Omega $.

In particular, the concept of sum-rank support can be readily translated into the concept of \textit{skew support}. First, define the \textit{lattice of skew supports} in $ \Omega $ as
$$ \mathcal{P}_{Sk}(\Omega) = \{ \Psi \subseteq \Omega \mid \Psi \textrm{ is P-closed} \}. $$
Thus skew supports will simply be P-closed subsets of $ \Omega $, which form a lattice with intersections $ \Psi_1 \cap \Psi_2 $ and sums defined as $ \Psi_1 + \Psi_2 = \overline{\Psi_1 \cup \Psi_2} = Z(F_{\Psi_1 \cup \Psi_2}) $. The results \cite[Prop. 43]{linearizedRS} and \cite[Prop. 47]{linearizedRS} state that $ \mathcal{P}_{Sk}(\Omega) $ is a lattice isomorphic to $ \mathcal{P}(\mathbf{K}^\mathbf{n}) $, by mapping P-bases into lists of bases via (\ref{eq charact of P-basis}), where $ K_i = K_{a^{(i)}} $, for $ i = 1,2, \ldots, \ell $. This mapping will now be used to define skew supports. As for vector and projective spaces, we implicitly associate the zero vector space with the empty P-closed set.

\begin{definition}[\textbf{Skew supports}] \label{def skew support}
With notation as above, let $ f \in \mathbb{F}^\mathcal{B} $ and define $ \mathbf{c} = (\mathbf{c}^{(1)}, \mathbf{c}^{(2)}, \ldots, \mathbf{c}^{(\ell)}) = \phi_\mathcal{B}^{-1}(f) \in \mathbb{F}^n $, where $ \mathbf{c}^{(i)} \in \mathbb{F}^{n_i} $, for $ i = 1,2, \ldots, \ell $. Next, let $ \gamma_h^{(i)} = \sum_{j=1}^{n_i} c_{h,j}^{(i)} \beta_j^{(i)} \in \mathbb{F} $, where $ (c_{h,1}^{(i)}, c_{h,2}^{(i)}, \ldots, c_{h,n_i}^{(i)}) \in K_i^{n_i} $ form the rows of $ M_{\mathcal{A}_i}(\mathbf{c}^{(i)}) \in K_i^{m_i \times n_i} $, for $ h = 1,2, \ldots, m_i $, and let $ \mathcal{G}_i \subseteq \mathbb{F}^* $ be a basis of the vector space generated by $ \gamma_1^{(i)}, \gamma_2^{(i)}, $ $ \ldots, $ $ \gamma_{m_i}^{(i)} \subseteq \mathbb{F} $ over $ K_i $, for $ i = 1,2, \ldots, \ell $. Define the P-independent set
$$ \mathcal{B}_f = \bigcup_{i = 1}^\ell \left\lbrace \left( a^{(i)} \right)^{\gamma} \mid \gamma \in \mathcal{G}_i \right\rbrace. $$
We define the skew support of $ f \in \mathbb{F}^\mathcal{B} $ as
$$ {\rm Supp}_{Sk}(f) = \Omega_f = \overline{\mathcal{B}_f} \in \mathcal{P}_{Sk}(\Omega). $$
Finally, for a vector subspace $ \mathcal{F} \subseteq \mathbb{F}^\mathcal{B} $, we define its skew support as
$$ {\rm Supp}_{Sk}(\mathcal{F}) = \sum_{f \in \mathcal{F}} {\rm Supp}_{Sk}(f) \in \mathcal{P}_{Sk}(\Omega), $$
which allows to define the skew weight of $ \mathcal{F} $ as $ {\rm wt}_\mathcal{B}(\mathcal{F}) = {\rm Rk}({\rm Supp}_{Sk}(\mathcal{F})) $.
\end{definition}

As it was the case with the map in (\ref{eq transforming the received word}), the skew support $ {\rm Supp}_{Sk}(f) \in \mathcal{P}_{Sk}(\Omega) $ depends only on the conjugacy representatives and the choice of P-basis $ \mathcal{B} $ of $ \Omega $. To see this, note that the vector space generated by the rows of $ M_{\mathcal{A}_i}(\mathbf{c}^{(i)}) $ does not depend on $ \mathcal{A}_i $, and secondly, the P-basis corresponding to different bases of the subspace generated by $ \gamma_1^{(i)}, \gamma_2^{(i)}, \ldots, \gamma_{m_i}^{(i)} \in \mathbb{F} $ over $ K_i $ generate the same P-closed set $ \Omega_f $ by \cite[Cor. 27]{linearizedRS}.

Using the same arguments, we may prove the following properties:

\begin{proposition}
The following properties hold.
\begin{enumerate}
\item
For $ f \in \mathbb{F}^\mathcal{B} $ and $ a \in \mathbb{F}^* $, it holds that $ {\rm Supp}_{Sk}(af) = {\rm Supp}_{Sk}(\langle f \rangle) = {\rm Supp}_{Sk}(f) $ and
$$ {\rm Rk}({\rm Supp}_{Sk}(f)) = {\rm wt}_\mathcal{B}(f). $$
\item
$ \phi_\mathcal{B}^{-1}(f) $ and $ \phi_\mathcal{B}^{-1}(g) $ have the same sum-rank support if, and only if, $ f $ and $ g $ have the same skew support, for $ f,g \in \mathbb{F}^\mathcal{B} $. The same holds for subspaces of $ \mathbb{F}^\mathcal{B} $.
\item
If $ \mathcal{F} \subseteq \mathbb{F}^\mathcal{B} $ and $ \mathcal{D} = \phi_\mathcal{B}^{-1}(\mathcal{F}) \subseteq \mathbb{F}^n $ are subspaces, then
$$ {\rm wt}_\mathcal{B}(\mathcal{F}) = {\rm Rk}({\rm Supp}_{Sk}(\mathcal{F})) = {\rm Rk}({\rm Supp}(\mathcal{D})) = {\rm wt}_{SR}(\mathcal{D}) . $$
\end{enumerate}
\end{proposition}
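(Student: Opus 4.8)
The plan is to reduce all three items to the lattice isomorphism $ \Phi : \mathcal{P}(\mathbf{K}^\mathbf{n}) \longrightarrow \mathcal{P}_{Sk}(\Omega) $ of \cite[Props. 43 \& 47]{linearizedRS}, by showing that $ \Phi $ intertwines the two support maps through $ \phi_\mathcal{B} $; that is, that
$$ {\rm Supp}_{Sk}(\phi_\mathcal{B}(\mathbf{c})) = \Phi({\rm Supp}(\mathbf{c})) $$
for every $ \mathbf{c} \in \mathbb{F}^n $, and more generally $ {\rm Supp}_{Sk}(\phi_\mathcal{B}(\mathcal{D})) = \Phi({\rm Supp}(\mathcal{D})) $ for every subspace $ \mathcal{D} \subseteq \mathbb{F}^n $. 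To establish the vector version I would unwind Definition \ref{def skew support}: writing $ \mathbf{c} = \phi_\mathcal{B}^{-1}(f) $ and $ \mathcal{L}_i = {\rm Row}(M_{\mathcal{A}_i}(\mathbf{c}^{(i)})) $, the $ K_i $-linear isomorphism $ K_i^{n_i} \to \langle \beta_1^{(i)}, \ldots, \beta_{n_i}^{(i)} \rangle_{K_i} $ sending the $ j $th canonical vector to $ \beta_j^{(i)} $ (an isomorphism since the $ \beta_j^{(i)} $ are $ K_i $-independent by \eqref{eq charact of P-basis}) carries the rows of $ M_{\mathcal{A}_i}(\mathbf{c}^{(i)}) $ to $ \gamma_1^{(i)}, \ldots, \gamma_{m_i}^{(i)} $, hence a $ K_i $-basis of $ \mathcal{L}_i $ to the basis $ \mathcal{G}_i $ of $ \langle \gamma_1^{(i)}, \ldots, \gamma_{m_i}^{(i)} \rangle_{K_i} $. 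Consequently the P-independent set $ \mathcal{B}_f = \bigcup_{i=1}^\ell \{ (a^{(i)})^\gamma \mid \gamma \in \mathcal{G}_i \} $ is exactly the P-basis that \cite[Props. 43 \& 47]{linearizedRS} (via \eqref{eq charact of P-basis}) associate to $ (\mathcal{L}_1, \ldots, \mathcal{L}_\ell) = {\rm Supp}(\mathbf{c}) $, so $ {\rm Supp}_{Sk}(f) = \overline{\mathcal{B}_f} = \Phi({\rm Supp}(\mathbf{c})) $. The subspace version follows because $ {\rm Supp} $ and $ {\rm Supp}_{Sk} $ of a subspace are sums of supports of its elements (Definitions \ref{def sum-rank supp of spaces} and \ref{def skew support}), $ \phi_\mathcal{B} $ is $ \mathbb{F} $-linear, and $ \Phi $ preserves sums.

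Granting this intertwining identity, the three items are immediate. For Item 1: for $ a \in \mathbb{F}^* $ we have $ {\rm Supp}_{Sk}(af) = \Phi({\rm Supp}(a\mathbf{c})) = \Phi({\rm Supp}(\mathbf{c})) = {\rm Supp}_{Sk}(f) $, using that the row space attached by $ M_{\mathcal{A}_i} $ to a vector is unchanged upon multiplying by a scalar of $ \mathbb{F}^* $ (the representation matrix is left-multiplied by the invertible matrix of that scalar over $ K_i $); then $ {\rm Supp}_{Sk}(\langle f \rangle) = \sum_{g \in \langle f \rangle} {\rm Supp}_{Sk}(g) = {\rm Supp}_{Sk}(f) $ because $ {\rm Supp}(\langle \mathbf{c} \rangle) = {\rm Supp}(\mathbf{c}) $ (noted after Definition \ref{def sum-rank supp of spaces}) and $ \Phi $ preserves sums; and $ {\rm Rk}({\rm Supp}_{Sk}(f)) = {\rm Rk}(\Phi({\rm Supp}(\mathbf{c}))) = {\rm Rk}({\rm Supp}(\mathbf{c})) = {\rm wt}_{SR}(\mathbf{c}) = {\rm wt}_\mathcal{B}(f) $, where $ {\rm Rk} $ is preserved by $ \Phi $ (on each side it counts the elements of a P-basis, respectively the total dimension of the corresponding list of bases) and the last equality is the isometry of \cite[Ths. 2 \& 3]{linearizedRS}. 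For Item 2: since $ \Phi $ is bijective, $ \phi_\mathcal{B}^{-1}(f) $ and $ \phi_\mathcal{B}^{-1}(g) $ have equal sum-rank supports iff $ \Phi $ of these supports agree iff $ f $ and $ g $ have equal skew supports, and likewise for subspaces via the subspace version of the intertwining identity. For Item 3: $ {\rm wt}_\mathcal{B}(\mathcal{F}) = {\rm Rk}({\rm Supp}_{Sk}(\mathcal{F})) = {\rm Rk}(\Phi({\rm Supp}(\mathcal{D}))) = {\rm Rk}({\rm Supp}(\mathcal{D})) = {\rm wt}_{SR}(\mathcal{D}) $ by the definitions of skew and sum-rank weights of subspaces and rank-preservation of $ \Phi $.

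The main obstacle is the intertwining identity $ {\rm Supp}_{Sk}(\phi_\mathcal{B}(\mathbf{c})) = \Phi({\rm Supp}(\mathbf{c})) $: it requires pinning down exactly how $ \Phi $ acts on a list of subspaces in terms of the P-basis $ \mathcal{B} $ and the conjugacy representatives $ a^{(1)}, \ldots, a^{(\ell)} $, and verifying that the auxiliary data in Definition \ref{def skew support} (the bases $ \mathcal{A}_i $, the spanning tuples $ \gamma_1^{(i)}, \ldots, \gamma_{m_i}^{(i)} $, the chosen bases $ \mathcal{G}_i $) produce precisely that P-basis up to P-closure. This is exactly the bookkeeping already carried out in the paragraph preceding the proposition to show that $ {\rm Supp}_{Sk} $ is well defined, with \cite[Cor. 27]{linearizedRS} ensuring independence of the choice of $ \mathcal{G}_i $; everything else is a formal consequence of $ \Phi $ being a rank- and sum-preserving lattice isomorphism and of $ \phi_\mathcal{B} $ being a linear isometry.
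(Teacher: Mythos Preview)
Your proposal is correct and follows essentially the same approach as the paper, which proves the proposition in one line (``Using the same arguments, we may prove the following properties''), referring to the well-definedness argument via \cite[Cor.~27]{linearizedRS} and the lattice isomorphism of \cite[Props.~43 \& 47]{linearizedRS}. You have simply made explicit the intertwining identity $ {\rm Supp}_{Sk}(\phi_\mathcal{B}(\mathbf{c})) = \Phi({\rm Supp}(\mathbf{c})) $ that underlies that terse remark, and derived the three items from it exactly as intended.
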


The concept of \textit{skew support space} may also be considered. It may be introduced as a lattice of subspaces of $ \mathbb{F}^\mathcal{B} $. 

\begin{definition} [\textbf{Skew support spaces}]
Given a P-closed subset $ \Psi \subseteq \Omega $ (i.e. $ \Psi \in \mathcal{P}_{Sk}(\Omega) $), we define the skew support space associated to $ \Psi $ over $ \mathcal{B} $ as
$$ \mathcal{W}_\Psi = \{ f \in \mathbb{F}^\mathcal{B} \mid {\rm Supp}_{Sk}(f) \subseteq \Psi \} \subseteq \mathbb{F}^\mathcal{B} . $$
\end{definition}

We may add to Theorem \ref{th charact sum-rank support spaces} the following characterizations. They follow from the results in this appendix, except for the arithmetic characterizations in Items 3 and 4. These follow by combining Item 6 in Theorem \ref{th charact sum-rank support spaces} and the recent result \cite[Th. 2]{pirlrc}, which gives the connection between coordinate-wise matrix products as in Theorem \ref{th charact sum-rank support spaces} and products of skew polynomials given by (\ref{eq product commutativity rule}).

\begin{proposition}
The following are equivalent:
\begin{enumerate}
\item
$ \mathcal{W} $ is a skew support space, that is, there exists $ \Psi \in \mathcal{P}_{Sk}(\Omega) $ such that $ \mathcal{W} = \mathcal{W}_\Psi $.
\item
$ \mathcal{V} = \phi_\mathcal{B}^{-1}(\mathcal{W}) \subseteq \mathbb{F}^n $ is a sum-rank support space.
\item
$ \mathcal{W} $ is a left ideal of $ \mathbb{F}^\mathcal{B} $ for the product in $ \mathbb{F}^\mathcal{B} $ given by $ f g \in \mathbb{F}^\mathcal{B} $, where
\begin{equation}
(fg)(a) = (FG)(a),
\label{eq def product}
\end{equation}
for $ a \in \mathcal{B} $, $ f,g \in \mathbb{F}^\mathcal{B} $ and $ F,G \in \mathbb{F}[x; \sigma, \delta]_n $ such that $ f = E_\mathcal{B}(F) $ and $ g = E_\mathcal{B}(G) $.
\item
There exists a P-closed subset $ \Phi \subseteq \Omega $ such that $ \mathcal{W} = E_\mathcal{B}(I(\Phi)) $.
\end{enumerate}
\end{proposition}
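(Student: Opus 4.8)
The plan is to prove the four statements equivalent by transporting structure along the two vector space isomorphisms $\phi_\mathcal{B}\colon\mathbb{F}^n\to\mathbb{F}^\mathcal{B}$ and $E_\mathcal{B}\colon\mathbb{F}[x;\sigma,\delta]_n\to\mathbb{F}^\mathcal{B}$ already set up in this appendix, establishing $1\Leftrightarrow 2$ by transport along $\phi_\mathcal{B}$, $2\Leftrightarrow 3$ via the arithmetic bridge of \cite[Th.~2]{pirlrc}, and $3\Leftrightarrow 4$ inside the skew polynomial ring. For $1\Leftrightarrow 2$ I would use the lattice isomorphism $\Theta\colon\mathcal{P}_{Sk}(\Omega)\to\mathcal{P}(\mathbf{K}^\mathbf{n})$ of \cite[Props.~43 \& 47]{linearizedRS}, under which, by construction (Definition~\ref{def skew support}), $ {\rm Supp}_{Sk}(f) $ corresponds to $ {\rm Supp}(\phi_\mathcal{B}^{-1}(f)) $ for every $f\in\mathbb{F}^\mathcal{B}$, and likewise for subspaces. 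Since $\Theta$ preserves inclusions, $ {\rm Supp}_{Sk}(f)\subseteq\Psi $ holds if and only if $ {\rm Supp}(\phi_\mathcal{B}^{-1}(f))\subseteq\Theta(\Psi) $, whence $\phi_\mathcal{B}(\mathcal{V}_{\Theta(\Psi)})=\mathcal{W}_\Psi$. As $\Psi$ ranges over $\mathcal{P}_{Sk}(\Omega)$, $\Theta(\Psi)$ ranges over all of $\mathcal{P}(\mathbf{K}^\mathbf{n})$, and by Proposition~\ref{prop lattice isom sum-rank support spaces} the spaces $\mathcal{V}_{\boldsymbol{\mathcal{L}}}$ are precisely the sum-rank support spaces; the equivalence of Items~1 and~2 follows.

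For $2\Leftrightarrow 3$: by Item~6 of Theorem~\ref{th charact sum-rank support spaces}, $\mathcal{V}=\phi_\mathcal{B}^{-1}(\mathcal{W})$ is a sum-rank support space if and only if $M(\mathcal{V})$ is a left $\mathcal{R}$-module under component-wise matrix multiplication. I would then invoke \cite[Th.~2]{pirlrc}, which identifies component-wise multiplication of the representations $M_{\mathcal{A}_i}$ with the skew polynomial product~(\ref{eq product commutativity rule}): through $\phi_\mathcal{B}$ and $E_\mathcal{B}$, and after reducing products back to degree $<n$ modulo $F_\Omega$, left multiplication of $M(\mathbf{c})$ by an element of $\mathcal{R}$ becomes left multiplication of $E_\mathcal{B}^{-1}(\phi_\mathcal{B}(\mathbf{c}))$ in the ring $(\mathbb{F}^\mathcal{B},\cdot)$ with the product~(\ref{eq def product}). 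Consequently, $M(\mathcal{V})$ being stable under the $\mathcal{R}$-action is equivalent to $\mathcal{W}$ being stable under left multiplication in $(\mathbb{F}^\mathcal{B},\cdot)$, i.e.\ to $\mathcal{W}$ being a left ideal of $\mathbb{F}^\mathcal{B}$.

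For $3\Leftrightarrow 4$: the implication $4\Rightarrow 3$ is short, since $\Phi\subseteq\Omega$ forces $F_\Omega\in I(\Phi)$; thus for $f=E_\mathcal{B}(F)$ with $F\in I(\Phi)$ and $\deg F<n$, and any $g=E_\mathcal{B}(G)$ with $\deg G<n$, writing $GF=QF_\Omega+R$ with $\deg R<n$ gives $R=GF-QF_\Omega\in I(\Phi)$ and $gf=E_\mathcal{B}(R)\in E_\mathcal{B}(I(\Phi))$, so the latter space is a left ideal. Conversely, since $3\Rightarrow 2\Rightarrow 1$ is already available, it remains to show $1\Rightarrow 4$, i.e.\ that every skew support space $\mathcal{W}_\Psi$ equals $E_\mathcal{B}(I(\Phi))$ for a suitable P-closed $\Phi\subseteq\Omega$. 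I would take $\boldsymbol{\mathcal{L}}=\Theta(\Psi)$ and $\Phi=\Theta^{-1}(\boldsymbol{\mathcal{L}}^\perp)$, and, using the identity $ {\rm wt}_\mathcal{B}(E_\mathcal{B}(F))=n-{\rm Rk}(Z_\Omega(F)) $ from Definition~\ref{def skew weights} together with the coordinate-wise evaluation formula~(\ref{eq transforming the received word}), show that $Z_\Omega(F)$ corresponds under $\Theta$ to the orthogonal complement of $\Theta({\rm Supp}_{Sk}(E_\mathcal{B}(F)))$. Then for $f=E_\mathcal{B}(F)$ one has $f\in\mathcal{W}_\Psi$ iff $\Theta({\rm Supp}_{Sk}(f))\subseteq\boldsymbol{\mathcal{L}}$ iff $\boldsymbol{\mathcal{L}}^\perp\subseteq\Theta(Z_\Omega(F))$ iff $\Phi\subseteq Z_\Omega(F)=Z(F)\cap\Omega$ iff $F\in I(\Phi)$, so $\mathcal{W}_\Psi=E_\mathcal{B}(I(\Phi))$.

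The hard part will be the two appeals to external machinery rather than any of the bookkeeping. In $2\Leftrightarrow 3$ one must check that \cite[Th.~2]{pirlrc} matches the \emph{entire} $\mathcal{R}$-action with left multiplication by all of $\mathbb{F}[x;\sigma,\delta]_n$ modulo $F_\Omega$ --- not merely the action of the degree-zero part --- so that the two module structures, hence the two notions of left ideal, agree exactly; this forces one to carry along the normalization by $(\beta_j^{(i)})^{-1}$ appearing in~(\ref{eq transforming the received word}) and to be careful with the reduction step. In $1\Rightarrow 4$ the delicate point is the identification of $Z_\Omega(F)$ with the $\Theta$-complement of the skew support: it rests on the $K_i$-linearity, over the $K_i$-span of $\beta_1^{(i)},\dots,\beta_{n_i}^{(i)}$, of the map sending $\beta$ to the combination of the rows of $M_{\mathcal{A}_i}(\mathbf{c}^{(i)})$ dictated by~(\ref{eq transforming the received word}), so that the zeros of $F$ in the $i$-th conjugacy class are cut out precisely by $\mathcal{L}_i^\perp$. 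This is exactly the Vandermonde theory of \cite{lam,lam-leroy}, and it only has to be transcribed with attention to these normalizations.
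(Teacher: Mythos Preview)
Your proposal is correct and follows essentially the same route as the paper's own (very terse) justification. The paper handles $1\Leftrightarrow 2$ by the lattice isomorphism and support correspondence built in the appendix, and attributes the equivalence with Items 3 and 4 to the combination of Item~6 in Theorem~\ref{th charact sum-rank support spaces} with \cite[Th.~2]{pirlrc}; your treatment of $1\Leftrightarrow 2$ and $2\Leftrightarrow 3$ matches this exactly.

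The one genuine difference is your handling of Item~4. Rather than deriving it from the same Item~6 $+$ \cite{pirlrc} package, you give a direct argument: $4\Rightarrow 3$ by an elementary Euclidean-division computation (using that $F_\Omega(a)=0$ and $F(a)=0$ force $(QF_\Omega)(a)=0$ and $(GF)(a)=0$), and $1\Rightarrow 4$ by identifying $\Theta(Z_\Omega(F))$ with the componentwise orthogonal complement of $\Theta({\rm Supp}_{Sk}(f))$. That identification is correct: via the $K_i$-linear operator $\gamma\mapsto \gamma\,F((a^{(i)})^\gamma)$ of Lam--Leroy, one has $\beta_j^{(i)}\mapsto c_j^{(i)}$ by~(\ref{eq transforming the received word}), so the kernel in coordinates is exactly $\mathcal{L}_i^\perp$. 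This buys you a self-contained argument for Item~4 that does not lean on the external product correspondence, which is a modest improvement over the paper's sketch. The concern you flag for $2\Leftrightarrow 3$ --- that the $\mathcal{R}$-action must match all left multiplications in $(\mathbb{F}^\mathcal{B},\cdot)$ --- is real but is precisely the content the paper delegates to \cite[Th.~2]{pirlrc}, so you are in the same position as the paper there.
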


In particular, by Item 2, skew support spaces are also vector subspaces of $ \mathbb{F}^\mathcal{B} $. Notice also that, in general, $ (fg)(a) \neq f(a) g(a) $ in Item 3 (see \cite[Th. 2.7]{lam-leroy}). 

In conclusion, in this appendix we have introduced skew supports and support spaces, and we have given the precise connections with sum-rank supports and support spaces. Except for the Hamming-metric case, the dictionary between both types of concepts depends on the choice of conjugacy representatives and P-basis of the ambient P-closed set via (\ref{eq charact of P-basis}). With this dictionary, all of the remaining results and definitions in this paper can be translated to skew supports and support spaces. We leave however as open problem defining skew supports and support spaces independently of a set of conjugacy representatives and a P-basis.

\end{document}